 \def\@textbottom{\vskip \z@ \@plus 1pt}
 \let\@texttop\relax
\newcommand{\RNum}[1]{\lowercase\expandafter{\romannumeral #1\relax}}
\newtheorem{thm}{Theorem}[section]
\newtheorem{lem}[thm]{Lemma}
\newtheorem{cor}[thm]{Corollary}
\newtheorem{exmp}[thm]{Example}
\newtheorem{rem}[thm]{Remark}
\newtheorem{rmk}[thm]{Remark}
\newtheorem{thm-con}[thm]{Theorem-Conjecture}
\numberwithin{equation}{section}
\theoremstyle{definition}
\newtheorem{defn}[thm]{Definition}
\def\Im{{\rm Im}}
\newcommand{\cB}{\mathcal B}
\def\cA{{\mathcal A}}
\def\cC{{\mathcal C}}
\def\cD{{\mathcal D}}
\newcommand{\F}{\mathbb F} 
\DeclareMathOperator{\LB}{LBCT}
\DeclareMathOperator{\UB}{UBCT}
\DeclareMathOperator{\EB}{EBCT}
\DeclareMathOperator{\DB}{DBCT}
\DeclareMathOperator{\DDT}{DDT}
\DeclareMathOperator{\FB}{FBCT}
\DeclareMathOperator{\DD}{DDDT}
\newcommand*\colvec[3][]{
\begin{pmatrix}
\ifx\relax#1\relax\else#1\\\fi#2\\#3
\end{pmatrix}
}
\def\Tr{{\rm Tr}}
\def\Trsn{{\rm Tr}_s^n}
\begin{document}

\title[The revised boomerang connectivity tables]{The revised boomerang connectivity tables and their connection to the difference distribution table}

 \author[K. Garg]{Kirpa Garg}
  \address{Department of Mathematics, Indian Institute of Technology Jammu, Jammu 181221, India}
  \email{kirpa.garg@gmail.com}
  
  \author[S. U. Hasan]{Sartaj Ul Hasan}
  \address{Department of Mathematics, Indian Institute of Technology Jammu, Jammu 181221, India}
  \email{sartaj.hasan@iitjammu.ac.in}
  
  \author[C. Riera]{Constanza Riera}
  \address{Department of Computer Science, Electrical Engineering and Mathematical Sciences, Western Norway University of Applied Sciences, 5020 Bergen, Norway}
  \email{csr@hvl.no}
  
   \author[P.~St\u anic\u a]{Pantelimon~St\u anic\u a$^*$}
   \address{Applied Mathematics Department, Naval Postgraduate School, Monterey, CA 93943, USA}
  \email{pstanica@nps.edu}

 \thanks{The work of K. Garg is supported by the University Grants Commission (UGC), Government of India. 
 The work of S. U. Hasan is partially supported by Core Research Grant CRG/2022/005418 from the Science and Engineering Research Board, Government of India. The work of P. St\u anic\u a (corresponding author)  is partially supported by a grant from the NPS Foundation.}

\keywords{Finite fields, permutation polynomials, almost perfect non-linear functions, upper boomerang connectivity table, lower boomerang connectivity table, extended boomerang connectivity table, double boomerang connectivity table.}

\subjclass[2020]{12E20, 11T06, 94A60}

\begin{abstract} 
It is well-known that functions over finite fields play a crucial role in designing substitution boxes (S-boxes) in modern block ciphers. In order to analyze the security of an S-box, recently, three new tables have been introduced: the  Extended Boomerang Connectivity Table (EBCT), the Lower Boomerang Connectivity Table (LBCT), and the Upper Boomerang Connectivity Table (UBCT). In fact, these tables offer improved methods over the usual Boomerang Connectivity Table (BCT) for analyzing the security of S-boxes against boomerang-style attacks. Here, we put in context these new EBCT, LBCT, and UBCT concepts by connecting them to the DDT for a differentially $\delta$-uniform function and also determine the EBCT, LBCT, and UBCT entries of three classes of differentially $4$-uniform power permutations, namely, Gold, Kasami and Bracken-Leander. We also determine the Double Boomerang Connectivity Table (DBCT) entries of the Gold function. As byproducts of our approach, we obtain some previously published results quite easily.
\end{abstract}
\maketitle

\section{Introduction}
In symmetric cryptography, vectorial Boolean functions over finite fields play a significant role, particularly in designing block ciphers' S-boxes. There are various kinds of cryptanalytic attacks possible on these S-boxes. One well-known attack on S-boxes is the differential attack, introduced by Biham and Shamir~\cite{BS}. This attack exploits the non-uniformity in the distribution of output differences corresponding to a given input difference. To quantify the degree of resistance of S-Boxes against differential attacks, Nyberg~\cite{Nyberg} introduced the notion of Difference Distribution Table (DDT) and differential uniformity.

Let $n$ be a positive integer. We denote by $\F_{2^n}$ the finite field with $2^n$ elements, by $\F_{2^n}^{*}$ the multiplicative cyclic group of non-zero elements of $\F_{2^n}$ and by $\F_{2^n}[X]$ the ring of polynomials in one variable $X$ with coefficients in $\F_{2^n}$. For a function $F : \F_{2^n}\to \F_{2^n}$, and for any $a \in \F_{2^n}$, the derivative of $F$ in the direction $a$ is defined as $D_F(X,a) := F(X+a)+F(X)$ for all $X\in \F_{2^n}$. For any $a, b \in \F_{2^n}$, the Difference Distribution Table (DDT) entry $\DDT_F(a,b)$ at point $(a, b)$ is the number of solutions $X\in \F_{2^n}$ of the equation $D_F(X,a) = b$. Further, the differential uniformity of $F$, denoted by $\Delta_F$, is given by $\Delta_{F} := \max\left\{\DDT_{F}(a, b) \bigm|  a\in \F_{2^n}^*, b \in \F_{2^n} \right\}.$ We call a function $F$ an almost perfect nonlinear (APN) function if $\Delta_{F} = 2$. Notice that $\Delta_F$ is always even, because if $X$ is a solution of $D_F(X,a) = b$ then $X+a$ is also a solution of $D_F(X,a) = b$. Functions with low differential uniformity are more resistant to differential attacks, making them desirable for cryptographic applications.

However, it turns out that low differential uniformity is not sufficient to counter some differentials connected attacks. In 1999, Wagner~\cite{Wagner} introduced a new attack on block ciphers, called the boomerang attack, which can be seen as an extension of the differential attack. It allows the cryptanalyst to use two unrelated differential characteristics to attack the same cipher by using one differential to defeat the first half and another to defeat the second half of the cipher. The theoretical underpinning of this attack was  considered by Cid et al.~\cite{Cid}, who introduced the notion of the Boomerang Connectivity Table (BCT) that can be used to more accurately evaluate the probability of generating a right quartet in boomerang-style attacks such that the boomerang incompatibility, ladder switch and S-box switch can easily be detected. For effectively computing the entries in the BCT, Li et al.~\cite{LSL} proposed an equivalent formulation as described below, which avoids computing the inverse of the S-box, and consequently, it can be defined even for non-permutation. For any $a,b \in \F_{2^n}$, the BCT entry at $(a,b)\in \F_{2^n}\times \F_{2^n}$, denoted as $ \cB_F(a,b)$, is the number of solutions in $\mathbb{F}_{2^n}\times \mathbb{F}_{2^n}$ of the following system
\begin{equation*}
  \begin{cases} 
  F(X)+F(Y)=b \\
  F(X+a)+F(Y+a)=b.
  \end{cases}
\end{equation*} 
The boomerang uniformity of $F$ is defined as $\cB_F:= \max \left\{\cB_F(a,b) \bigm|  a,b \in \mathbb{F}_{2^n}^*\right\}.$ However, it turns out that the BCT has limitations when evaluating boomerang switches in multiple rounds. This was observed by Wang and Peyrin~\cite{WP}, who showed that the BCT is not applicable effectively in some cases involving multiple rounds by giving an incompatibility example on two rounds of AES. To address this challenge, they introduced a new tool known as the Boomerang Difference Table (BDT) along with its variant BDT$^{\prime}$. Later, in 2020, Delaune et al.~\cite{DDV} renamed these tables as the Upper Boomerang Connectivity Table (UBCT) and the Lower Boomerang Connectivity Table (LBCT), respectively, to highlight the fact that UBCT and LBCT emphasize the upper and lower characteristic, respectively. The Feistel Boomerang Extended Table (FBET), or Boomerang Extended Table (BET) for SPN ciphers, proposed by Boukerrou et al.~\cite{Bouk}, has been renamed as the Extended Boomerang Connectivity Table (EBCT) by Delaune et al. in~\cite{DDV}. The entries in this table count the number of values such that the boomerang will return on a single S-box, with
all the differences fixed.

Further, to study the behavior of two consecutive S-boxes in the boomerang attack, Hadipour et al.~\cite{Hadi} introduced the notion of Double Boomerang Connectivity Table (DBCT). It may be noted that Yang et al.~\cite{YSS} illustrated the effect of an S-box on the probability of the 7-round boomerang distinguisher by comparing the DDT, BCT and DBCT entries of different S-boxes. Their findings highlight the importance of investigating the DBCT, in addition to the DDT and BCT, for evaluating an S-box's resistance to a boomerang attack. However, determining the DBCT entries for an S-box (or equivalently, for a vectorial Boolean function) is quite challenging, and rather limited research has been done in this direction.

Recently, Eddahmani and Mesnager~\cite{EM} studied various properties of the EBCT, LBCT and UBCT. They also determined these entries for the inverse function. In a separate work, Man et al.~\cite{MLLZ} also gave the DBCT, LBCT and UBCT entries for the inverse function. In this work, we discuss the DBCT entries of the Gold function, and we provide a connection between these new  EBCT, LBCT and UBCT concepts and the DDT entries for a differentially $\delta$-uniform function. As a consequence, it turns out that our result covers a particular case of a result of Eddahmani et al.~\cite{EM} and Man et al.~\cite{MLLZ} for the inverse function over $\F_{2^n}$. Moreover, we explicitly compute EBCT, LBCT and UBCT entries of three classes of differentially-4 uniform power permutations. 

We shall now give the structure of the paper. We first recall some definitions and results in Section~\ref{S2}. In Section~\ref{S3}, we provide some general results on the EBCT. We also give a generalization for the definitions of  the EBCT, LBCT and UBCT for any function over $\F_{2^n}$. In Section~\ref{oldS8}, we discuss the invariance of the EBCT, LBCT and UBCT under the CCZ-equivalence. In Section~\ref{S4}, we find a connection between the EBCT, LBCT and UBCT entries of a differentially $\delta-$uniform function and also express them in terms of the DDT entries. Section~\ref{S5} and Section~\ref{S6} deal with the EBCT, LBCT and UBCT entries of APN functions (leading to a characterization of APN functions) and 4-differential uniform functions, as well as some other well-known functions. We also computed the DBCT entries for the Gold function in Section~\ref{oldS7}. Finally, we conclude the paper in Section~\ref{S9}. 
 
\section{Preliminaries}\label{S2}

We will first recall several definitions and lemmas used in the subsequent sections. Throughout the paper, we shall use $\Tr_{m}^{n}$ to denote the (relative) trace function from $\F_{2^n} \rightarrow \F_{2^m}$, i.e., $\Tr_{m}^{n}(X) = \sum_{i=0}^{\frac{n-m}{m}} X^{2^{mi}}$, where $m$ and $n$ are positive integers and $m \mid n$. For $m = 1$, we use $\Tr$ to denote the absolute trace. As customary, for a set $S$, we let $a+S=\{a+s\,|\, s\in S\}$.

\begin{defn}\label{defE}\cite{DDV}
 Let $F$ be a permutation of $\F_{2^n}$. The Extended Boomerang Connectivity Table (EBCT) of $F$ is   a $2^n \times 2^n \times 2^n\times 2^n$ table where the entry at  $a, b, c, d \in \F_{2^n}$ is  
  \begin{equation*}\EB_F(a, b, c, d) = \left\lvert \left\{ X \in \F_{2^n}  \bigg| 
  \begin{cases} 
F(X)+F(X+a)=b\\
  F(X)+F(X+c)=d \\
  F^{-1}(F(X)+d)+F^{-1}(F(X+a)+d)=a
  \end{cases}
  \right\}\right \rvert .
\end{equation*} 
\end{defn}

\begin{defn}\label{defL}\cite{WP}
  Let $F$ be a permutation of $\F_{2^n}$. The Lower Boomerang Connectivity Table (LBCT) of $F$ is defined as a $2^n \times 2^n \times 2^n$ table where the entry at $a, b, c \in \F_{2^n}$ is  
  \begin{equation*}\LB_F(a, b, c) = \left\lvert \left\{ X \in \F_{2^n}  \bigg| 
  \begin{cases} 
  F(X)+F(X+b)=c \\
  F^{-1}(F(X)+c)+F^{-1}(F(X+a)+c)=a
  \end{cases}
  \right\}\right \rvert .
\end{equation*} 
\end{defn}

\begin{defn}\cite{WP}
  Let $F$ be a permutation of $\F_{2^n}$. The Upper Boomerang Connectivity Table (UBCT) of $F$ is defined as a $2^n \times 2^n \times 2^n$ table where the entry at  $a, b, c \in \F_{2^n}$ is  
  \begin{equation*}\UB_F(a, b, c) = \left\lvert \left\{ X \in \F_{2^n}  \bigg| 
  \begin{cases} 
  F(X)+F(X+a)=b \\
  F^{-1}(F(X)+c)+F^{-1}(F(X+a)+c)=a
  \end{cases}
  \right\}\right \rvert .
\end{equation*} 
  
\end{defn}
Recently, Eddahmani and Mesnager~\cite{EM} and, Man et al.~\cite{MLLZ} redefined the notions of LBCT and UBCT for a permutation $F$ without involving its compositional inverse (\cite{MLLZ} also mentions the non-necessity of the permutation property for the UBCT), as stated in the following lemmas. 

\begin{lem}\cite{EM}\label{L00}
 Let $F$ be a permutation of $\F_{2^n}$. Then for $a, b, c \in \F_{2^n}$, we have
 \begin{equation*}
 \LB_F(a, b, c) = \left\lvert \left\{ X \in \F_{2^n}   \bigg| \exists~ Y \in \F_{2^n} \mbox{ with }
  \begin{cases} 
  X+Y=b \\
  F(X+a)+F(Y+a)=c \\
  F(X)+F(Y)=c
  \end{cases}
  \right\}\right \rvert ,
\end{equation*} 
and 
 \begin{equation*}
 \UB_F(a, b, c) = \left\lvert \left\{ X\in \F_{2^n}   \bigg| \exists~ Y \in \F_{2^n} \mbox{ with }
  \begin{cases} 
  F(X+a)+F(Y+a)=c \\
  F(X)+F(Y)=c \\
  F(X)+F(X+a)=b
  \end{cases}
  \right\}\right \rvert .
\end{equation*} 
\end{lem}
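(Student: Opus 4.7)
My plan is to prove both equalities directly from the definitions by eliminating $F^{-1}$ from the original systems via an auxiliary variable $Y$ whose value is actually forced by $X$, once $F$ is assumed to be a permutation. Because the new formulations count $X$'s for which a witness $Y$ exists, the proof reduces to showing that (i) each $X$ in the original set determines a unique $Y$ satisfying the three new conditions, and (ii) the existence of such a $Y$ implies back the $F^{-1}$-equation in the original definition. Together these give a bijection between the two indexing sets of $X$'s, hence equality of cardinalities.

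For the LBCT identity I would set $Y := X+b$. The relation $X+Y=b$ is then tautological, and $F(X)+F(Y)=c$ is exactly the first original equation $F(X)+F(X+b)=c$. Writing $U := F^{-1}(F(X)+c)$ and $V := F^{-1}(F(X+a)+c)$, the equality $F(Y)=F(X)+c$ together with the injectivity of $F$ forces $U=Y$, so the relation $U+V=a$ becomes $V=Y+a$; applying $F$, this is equivalent to $F(X+a)+F(Y+a)=c$. The converse direction is immediate: given any $Y$ satisfying the three new conditions, $Y=X+b$ reproduces the first original equation, while the two identities $Y=F^{-1}(F(X)+c)$ and $Y+a=F^{-1}(F(X+a)+c)$ add to the $F^{-1}$-relation. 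The UBCT identity goes through in the same way but with the distinguished witness $Y:=F^{-1}(F(X)+c)$: the equation $F(X)+F(X+a)=b$ appears verbatim in both systems, and the two remaining new equations $F(X)+F(Y)=c$ and $F(X+a)+F(Y+a)=c$ amount to $Y=F^{-1}(F(X)+c)$ and $Y+a=F^{-1}(F(X+a)+c)$, whose sum is the original $F^{-1}$-relation.

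The only point that requires care --- and the main (minor) obstacle --- is verifying that the witness $Y$ in the new formulations is uniquely determined by $X$, since otherwise the new cardinality could over-count pairs rather than the elements $X$. In the LBCT case uniqueness is forced instantly by $X+Y=b$; in the UBCT case it follows from $F(Y)=F(X)+c$ together with $F$ being a bijection. With uniqueness secured, equality of the two cardinalities is a direct consequence, and I do not anticipate any further difficulty beyond this bookkeeping step.
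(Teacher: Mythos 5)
Your proof is correct. The paper itself cites Lemma~\ref{L00} from~\cite{EM} without reproducing a proof, but your argument—fixing the witness $Y$ ($Y=X+b$ for the LBCT, $Y=F^{-1}(F(X)+c)$ for the UBCT) and using injectivity of $F$ to translate the $F^{-1}$-equation into the two equations $F(X)+F(Y)=c$ and $F(X+a)+F(Y+a)=c$—is exactly the manipulation the paper carries out in its proof of the analogous Lemma~\ref{E1} for the EBCT, so it is essentially the same approach.
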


\begin{defn}{(Feistel Boomerang Connectivity Table)}~\textup{\cite{Bouk}}
 Let $F:\F_{2^n}\to \F_{2^n}$ and $a,b \in \F_{2^n}$. The {\em Feistel Boomerang Connectivity Table} (FBCT) of $F$ is given by a $2^n \times 2^n$ table, of entry at $(a, b)\in \F_{2^n}\times\F_{2^n}$   defined by
 \begin{equation*}
\FB_F(a, b) =\left| \left\{X \in \F_{2^n}\bigm|  F(X + a + b) +F (X + b) +F(X + a) + F (X) = 0\right\}\right|.
\end{equation*}
This is also denoted as $\nabla_F(a,b)$. The maximum of the entries of the FBCT of $F$, for $a\neq b,ab\neq 0$, is also known as the  {\em second-order zero differential uniformity} of $F$, denoted as $\nabla_F$, which is defined over any finite field characteristic.
\end{defn}

\begin{rmk}  \label{r1} If $F$ is a permutation of $\F_{2^n}$, then it is immediate   that $$\FB_F(a, b) =\sum_{c\in\F_{2^n}}\LB_F(a, b, c).$$ 
Note that this implies that $\LB_F(a, b, c)\leq \FB_F(a, b),\ \forall a,b,c\in\F_{2^n}$.
\end{rmk}

\begin{defn}{(Double Difference Distribution Table)}~\textup{\cite{EM}}
 Let $F$ be a permutation of $\F_{2^n}$. The {\em double difference distribution
table $(\DD)$} of  $F$ is a $2^n\times2^n\times2^n$
 table where the entry at $(a, b, c)\in\F_{2^n}\times \F_{2^n}\times \F_{2^n}$   is given by
 \begin{equation*}
\DD_F (a, b, c) = \left| \left\{X \in \F_{2^n}\bigm|  F(X + a + b) +F (X + b) +F(X + a) + F (X) = c\right\}\right|.
\end{equation*}
Note that $\FB_F(a, b)=\DD_F(a,b,0)$.
\end{defn}

Moreover, in~\cite{EM} it was shown that for $a,b,c,d \in \F_{2^n}$ with $abcd=0$, the values of $\EB_F(a,b,c,d)$ are simple to compute if $F$ is a permutation, as are the values of $\LB_F(a, b, c)$ and $\UB_F(a, b, c)$, for $abc=0$. We recall these results in the following three lemmas.

\begin{lem}~\cite{EM}\label{L0001}
 Let $F$ be a permutation of $\F_{2^n}$. Then for $a,b,c,d \in \F_{2^n}$ with $abcd=0$, we have
 \allowdisplaybreaks
 \begin{align*}
 \EB_F(a, b, c, d) &=   
  \begin{cases} 
  2^n & \mbox{if}~ a=b=c=d=0, \\
\DDT_F(c,d)& \mbox{if}~ a=b=0,c\neq0,d\neq0,\\
\DDT_F(a,b)& \mbox{if}~ a\neq0,b\neq0,c=d=0,\\
  0 & \mbox{otherwise,}
  \end{cases}\\
 \LB_F(a, b, c) &=  
  \begin{cases} 
  2^n & \mbox{if}~ b=c=0, \\
  0 & \mbox{if}~ b \neq 0, c = 0, \\
 \DDT_F(b,c) & \mbox{if}~ a=0, \\
  0 & \mbox{if}~ b=0, c \neq 0, 
  \end{cases} \\
  \UB_F(a, b, c) &= 
  \begin{cases} 
  2^n & \mbox{if}~ a=b=0, \\
  0 & \mbox{if}~ a=0, b \neq 0, c \neq 0, \\
  \DDT_F(a,b) & \mbox{if}~ c=0, \\
  0 & \mbox{if}~ a \neq 0, b=0, c \neq 0.
  \end{cases} 
\end{align*} 
\end{lem}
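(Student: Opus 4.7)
The plan is to prove all three formulas by a direct case analysis on the zero pattern of the parameters, leveraging two elementary observations that will be reused throughout. First, for any $s, t \in \F_{2^n}$, the equation $F(X) + F(X+s) = t$ is trivially satisfied for every $X$ when $s = t = 0$, has no solution when $s = 0$ and $t \neq 0$ (the left-hand side vanishes), and has no solution when $s \neq 0$ and $t = 0$ (the permutation hypothesis on $F$ rules out $F(X) = F(X+s)$). Second, the condition $F^{-1}(F(X) + d) + F^{-1}(F(X+a) + d) = a$ collapses to $X + (X+a) = a$ as soon as $d = 0$, hence holds for every $X$; symmetrically, setting $a = 0$ makes both sides vanish.

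For $\EB_F(a,b,c,d)$ I would enumerate the zero patterns forced by $abcd = 0$. When $a=b=c=d=0$ all three defining equations are identically true and the count is $2^n$. When $a=b=0$ and $c,d \neq 0$, the first and third equations reduce to tautologies and we are left with $F(X)+F(X+c)=d$, whose solution set has cardinality $\DDT_F(c,d)$; the symmetric case $c=d=0$, $a,b \neq 0$ gives $\DDT_F(a,b)$. In every remaining subcase at least one equation is impossible: the patterns $a=0,b\neq 0$ or $a\neq 0,b=0$ kill the first equation, the patterns $c=0,d\neq 0$ or $c\neq 0,d=0$ kill the second, and any mixed pattern with $d = 0$ but $c \neq 0$ forces the second equation $F(X)+F(X+c)=0$ to fail since $F$ is a permutation. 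Each such subcase contributes $0$.

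The $\LB_F$ and $\UB_F$ formulas follow by the same recipe applied to the original definitions. For $\LB_F(a,b,c)$: the case $b=c=0$ makes both defining equations tautological, yielding $2^n$; the cases $b \neq 0,c = 0$ and $b = 0,c \neq 0$ each kill the first equation by the two observations above; and when $a = 0$ the compositional-inverse condition becomes trivial, so the count reduces to $\DDT_F(b,c)$. For $\UB_F(a,b,c)$: the case $a=b=0$ yields $2^n$; the case $c = 0$ trivialises the inverse equation and the count reduces to $\DDT_F(a,b)$; the cases $a = 0, b \neq 0, c \neq 0$ and $a \neq 0, b = 0, c \neq 0$ each kill the first equation.

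There is no substantive obstacle: the argument is pure bookkeeping over degenerate equations. The only care required is to check that the listed cases exhaust the hypothesis $abcd = 0$ (respectively $abc = 0$) and that overlaps between cases are internally consistent; for instance $a=b=c=0$ fits both the $b=c=0$ row and the $a=0$ row of $\LB_F$, but both give $\DDT_F(0,0) = 2^n$, so no conflict arises. The permutation hypothesis on $F$ enters only through the exclusion of $F(X) = F(X+s)$ for $s \neq 0$.
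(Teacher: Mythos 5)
Your proof is correct. Note, however, that the paper does not actually prove Lemma~\ref{L0001}: it is recalled verbatim from~\cite{EM}, and the paper's own contribution is the generalization in Theorem~\ref{deltadiff}, which drops the permutation hypothesis and works with the inverse-free systems of Definition~\ref{generaldef} rather than with the original definitions involving $F^{-1}$. Your argument is therefore a genuinely independent, self-contained verification: you work directly from Definitions~\ref{defE}--3.3, and the two observations you isolate (the degenerate behaviour of $F(X)+F(X+s)=t$ when $st=0$, and the collapse of the $F^{-1}$-condition when $d=0$ or $a=0$) are exactly the right levers. Your case enumeration is exhaustive --- given $abcd=0$, either both of $(a,b)$ or both of $(c,d)$ vanish, or exactly one element of one of these pairs vanishes and the corresponding equation is unsatisfiable --- and you correctly flag that the overlapping rows of the $\LB$ and $\UB$ case lists (e.g.\ $a=b=c=0$) are mutually consistent because $\DDT_F(0,0)=2^n$ and $\DDT_F(s,0)=\DDT_F(0,t)=0$ for $s,t\neq 0$ when $F$ is a permutation. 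The only comparative remark worth making is that your route uses injectivity of $F$ in two places (to kill $F(X)+F(X+s)=0$ for $s\neq0$, and to reduce $F^{-1}\circ F$ to the identity), which is precisely why, as the paper's subsequent Remark notes, some rows of this lemma fail for non-permutations and must be replaced by the $|F^{-1}(c+\Im(F))|$-type corrections of Theorem~\ref{deltadiff}.
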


\begin{rmk} If $F$ is not a permutation, some of the results of Lemma~\textup{\ref{L0001}} are not true. We give the generalization of  Lemma~\textup{\ref{L0001}} in Section~\textup{\ref{S4}}.
\end{rmk}

Also,  Wagner~\cite{Wagner} shows that $\UB_F(a,b,b)=\DDT_F(a,b)$, as well as,  $ \displaystyle \text{BCT}_F(a,c)=\sum_{b\in\F_{2^n}}\UB_F(a,b,c)$. Additionally, it is easy to see that $\DDT_F(a,c)=\LB_F(a,a,c)$. Next, we recall the notion of DBCT as follows.

 \begin{defn} \cite{YSS}
Let $F(X)$ be a mapping from $\F_{2^n}$ to itself. The {\em Double Boomerang Connectivity Table} (DBCT) is a $2^n \times 2^n$ table defined for $(a, d) \in \F_{2^n}^{2}$ by
\begin{equation*}
 \DB_F (a, d) = \sum_{b,c}\text{dbct}(a,b,c,d),
\end{equation*}
where dbct$(a, b, c, d) = \UB_F(a, b, c)  \LB_F (b, c, d)$. For $a = 0$ or $d = 0$, it can be easily obtained that
\allowdisplaybreaks
\begin{align*}
\DB_{F}(0, d) &= \sum_{c} \UB_{F} (0, 0, c)  \LB_{F} (0, c, d) = 2^{2n},\text{ and}\\
\DB_{F}(a, 0) &=  \sum_{b}
\UB_{F} (a, b, 0) \LB_{F} (b, 0, 0) = 2^{2n}.
\end{align*}
\end{defn}

\section{General results on the EBCT}\label{S3}

In a similar way as the results in \cite{EM} and ~\cite{MLLZ}, we can redefine the EBCT for a permutation $F$ without involving its compositional inverse, as stated in the following lemma.

\begin{lem}\label{E1}
 Let $F$ be a permutation of $\F_{2^n}$. Then for $a, b, c,d \in \F_{2^n}$, we have
 \begin{equation*}\EB_F(a, b, c,d) = \left\lvert \left\{ X \in \F_{2^n}  \bigg| 
  \begin{cases} 
 F(X)+F(X+a)=b\\
F(X)+F(X+c)=d\\
F(X+a+c)+F(X+a)=d
  \end{cases}
  \right\}\right \rvert .
\end{equation*} 
\end{lem}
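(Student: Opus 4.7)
The plan is to show that the set defined in Definition~\ref{defE} coincides with the set described in Lemma~\ref{E1} by rewriting the third condition of Definition~\ref{defE} using the bijectivity of $F$ and the second condition. I will argue that both inclusions hold simultaneously via a chain of equivalences, so the statement follows directly from the permutation property.

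First I would fix $X \in \F_{2^n}$ and focus on the first two defining equations, which are common to both formulations, namely $F(X)+F(X+a)=b$ and $F(X)+F(X+c)=d$. Since $F$ is a permutation, the second equation can be rewritten as $F(X+c)=F(X)+d$, or equivalently $F^{-1}(F(X)+d)=X+c$. This identification is the crucial substitution that will convert the $F^{-1}$-based third condition into one that only involves $F$.

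Next, I would plug this into the third equation of Definition~\ref{defE}, $F^{-1}(F(X)+d)+F^{-1}(F(X+a)+d)=a$, to obtain
\begin{equation*}
F^{-1}(F(X+a)+d) = a + (X+c) = X+a+c.
\end{equation*}
Applying $F$ to both sides, this is equivalent to $F(X+a+c)=F(X+a)+d$, which (adding $F(X+a)$ to both sides) is precisely the third equation appearing in the statement of Lemma~\ref{E1}. Conversely, if $X$ satisfies the three equations in Lemma~\ref{E1}, then from $F(X)+F(X+c)=d$ I recover $F^{-1}(F(X)+d)=X+c$ and from $F(X+a+c)+F(X+a)=d$ I recover $F^{-1}(F(X+a)+d)=X+a+c$, whose sum equals $a$, giving back the third condition of Definition~\ref{defE}.

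Since each step above is an equivalence (using only that $F$ is a bijection on $\F_{2^n}$), the two sets over which $X$ ranges are identical, and so their cardinalities coincide. There is no real obstacle here; the only subtlety is making sure the substitution of $F^{-1}(F(X)+d)$ via the second equation is used before, and independently of, the third equation, so that the equivalence is genuinely two-way rather than one-directional.
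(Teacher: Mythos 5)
Your proposal is correct and follows essentially the same route as the paper: use the second equation to rewrite $F^{-1}(F(X)+d)$ as $X+c$, substitute into the third condition, and apply $F$ to obtain $F(X+a+c)+F(X+a)=d$, with all steps being reversible by bijectivity. The only (harmless) difference is that you spell out the converse direction explicitly, whereas the paper phrases the whole chain as equivalences.
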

\begin{proof}
We write
  \begin{equation*}
  \EB_F(a, b, c, d) = \left\lvert \left\{ X \in \F_{2^n}  \bigg| 
  \begin{cases} 
(1)~ F(X)+F(X+a)=b\\
(2)~ F(X)+F(X+c)=d \\
(3)~ F^{-1}(F(X)+d)+F^{-1}(F(X+a)+d)=a
  \end{cases}
  \right\}\right \rvert .
\end{equation*} 
Equation (2) implies $F(X)+d=F(X+c)$. Equation (3) is then equivalent to
\begin{align*}
&F^{-1}(F(X+c))+F^{-1}(F(X+a)+d)=a,\\
&F^{-1}(F(X+a)+d)=X+a+c,\\
&F(X+a)+d=F(X+a+c),
\end{align*}
completing the proof of the lemma.
\end{proof}

As a consequence, we can derive the connection between the EBCT of a permutation and the EBCT of its compositional inverse.

\begin{cor}
Let $F$ be a permutation of $\F_{2^n}$. Then for $a, b, c,d \in \F_{2^n}$, we have \\
$\EB_F(a, b, c,d)=\EB_{F^{-1}}(b,a,d,c)$.
\end{cor}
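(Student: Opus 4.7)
The plan is to apply Lemma~\ref{E1} to both $F$ and $F^{-1}$ so that every condition is written without compositional inverses, and then exhibit the bijection $Y = F(X)$ between the two solution sets. Since $F$ permutes $\F_{2^n}$, this substitution is bijective on $\F_{2^n}$, so if the two defining systems transform into one another under $Y=F(X)$, the resulting cardinalities must coincide.

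Concretely, I would first rewrite $\EB_F(a,b,c,d)$ via Lemma~\ref{E1} as the count of $X\in\F_{2^n}$ satisfying $F(X)+F(X+a)=b$, $F(X)+F(X+c)=d$, and $F(X+a+c)+F(X+a)=d$. Applying the same lemma to the permutation $F^{-1}$ at the point $(b,a,d,c)$, I would write $\EB_{F^{-1}}(b,a,d,c)$ as the count of $Y\in\F_{2^n}$ satisfying $F^{-1}(Y)+F^{-1}(Y+b)=a$, $F^{-1}(Y)+F^{-1}(Y+d)=c$, and $F^{-1}(Y+b+d)+F^{-1}(Y+b)=c$. Substituting $Y = F(X)$, the first two equations become $F^{-1}(F(X)+b) = X+a$ and $F^{-1}(F(X)+d) = X+c$, i.e., $F(X+a)=F(X)+b$ and $F(X+c)=F(X)+d$, which are precisely the first two equations of the $F$-system. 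Using the already-established identity $F^{-1}(F(X)+b) = X+a$ in the third equation yields $F^{-1}(F(X)+b+d) = X+a+c$, equivalently $F(X+a+c) = F(X)+b+d = F(X+a)+d$, which is exactly the third equation of the $F$-system.

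Since $X\mapsto F(X)$ is a bijection on $\F_{2^n}$ and the three defining equations correspond under this substitution (with all implications reversible), the two cardinalities agree, and the corollary follows. The only subtlety is that the translation of the third equation relies on the first equation to identify $F^{-1}(Y+b)$ as $X+a$, so the three equivalences must be unpacked in the correct order; this is a routine bookkeeping step rather than a genuine obstacle.
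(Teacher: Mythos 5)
Your proof is correct and follows essentially the same route as the paper's: both rely on Lemma~\ref{E1} to eliminate the compositional inverse and then use the substitution $Y=F(X)$ to match the two defining systems equation by equation, including the observation that the third equation must be translated with the help of the first. The only cosmetic difference is the direction of the rewriting (you start from the $F^{-1}$-system and land on the $F$-system, whereas the paper goes the other way), which does not affect the argument.
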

\begin{proof} We have by Lemma \ref{E1} that $\EB_F(a, b, c,d)$ is given by the following equations:
 \begin{equation*}\EB_F(a, b, c,d) = \left\lvert \left\{ X \in \F_{2^n}  \bigg| 
  \begin{cases} 
(1)~ F(X)+F(X+a)=b\\
(2)~ F(X)+F(X+c)=d\\
(3)~ F(X+a+c)+F(X+a)=d
  \end{cases}
  \right\}\right \rvert .
\end{equation*} 
Let $Y=F(X)$. Then, Equation (1) gives $Y+F(X+a)=b$, which is equivalent to $F(X+a)=Y+b$, and then to $X+a=F^{-1}(Y+b)$, rendering the equation $F^{-1}(Y+b)+F^{-1}(Y)=a$.
Similarly, Equation (2) is equivalent to $F^{-1}(Y+d)+F^{-1}(Y)=c$.

Lastly, inserting Equation (1) in Equation (3), we obtain $F(X+a+c)=Y+d+b$, which is equivalent to $F^{-1}(Y+b+d)=X+a+c$. Since  $F^{-1}(Y+b)+F^{-1}(Y)=F^{-1}(Y+b)+X=a$, this gives in  turn $F^{-1}(Y+b+d)=F^{-1}(Y+b)+c$, equivalent to $F^{-1}(Y+b+d)+F^{-1}(Y+b)=c$. 
We therefore get the claim.
\end{proof}

Lemmas~\ref{L00} and \ref{E1} also suggest a generalization of these concepts for the set of all functions over $\F_{2^n}$ (not necessarily permutations). 
\begin{defn} 
\label{generaldef} 
Let $F$ be a function over $\F_{2^n}$. Then, for $a, b, c,d \in \F_{2^n}$,  we can define the EBCT, LBCT and UBCT in the following way,
\allowdisplaybreaks
 \begin{align}
 \EB_F(a, b, c,d) &= \left\lvert \left\{ X \in \F_{2^n} \bigg|
  \begin{cases}\label{EBCT}
 F(X)+F(X+a)=b\\
F(X)+F(X+c)=d\\
F(X+a+c)+F(X+a)=d
  \end{cases}
  \right\}\right \rvert ,\\
\LB_F(a, b, c) &= \left\lvert \left\{ X \in \F_{2^n}  \bigg| \exists Y \in \F_{2^n} \mbox{ with } 
  \begin{cases} \label{LBCT}
  X+Y=b\\
  F(X+a)+F(Y+a)=c \\
  F(X)+F(Y)=c
  \end{cases}
  \right\}\right \rvert,\\
  \UB_F(a, b, c) &= \left\lvert \left\{ X \in \F_{2^n}\bigg| \exists Y \in \F_{2^n} \mbox{ with } 
  \begin{cases} \label{UBCT}
  F(X+a)+F(Y+a)=c \\
  F(X)+F(Y)=c \\
  F(X)+F(X+a)=b
  \end{cases}
  \right\}\right \rvert .
\end{align} 
\end{defn}

\begin{rmk} 
Note that if $F$ is a permutation, these definitions are equivalent to the original definitions of these concepts. We preferred to define these concepts this way to avoid overcounting, but we point out that even if we count the number of pairs $(X,Y)$ in each set, all results hold, except for the UBCT in Theorem~\textup{\ref{deltadiff}}, where its expression becomes too cumbersome.
\end{rmk}

\section{Invariance under the CCZ, extended affine and affine-equivalence}
\label{oldS8}

We recall that two functions $F,G:\F_{2^n}\to \F_{2^n}$ are {\em CCZ-equivalent} \cite{CCZ98} if there exists an affine permutation  $\cA$ on $\F_{2^n}\times \F_{2^n}$ such that 
\[
\displaystyle \left\{\colvec{x}{G(x)}\,|\,x\in\F_{2^n} \right\}=\left\{\cA  \colvec{x}{F(x)} \,|\,x\in\F_{2^n}\right\}.
\]
 As customary, we use the natural identification of  the elements in $\F_{2^n}$ with the elements in~$\F_2^n$, and, by abuse, we denote by $x$ both an element in $\F_{2^n}$ and the corresponding element in~$\F_2^n$. We also decompose the affine permutation $\cA$ as an affine block-matrix,  for an input vector $u\in\F_{2^n}\times \F_{2^n}$: 
$\cA{u}=\begin{pmatrix} 
\cA_{11} &\cA_{12}\\
\cA_{21} &\cA_{22}
\end{pmatrix} {u}+\colvec{C}{D}$, where $\cA_{ij}:\F_{2^n}\to\F_{2^n}, i,j \in \{1,2\}$ and~$\colvec{C}{D}$ is a column vector in $\F_{2^{n}}\times \F_{2^n}$.

When $\cA_{12}=0$, we say that $F$ and $G$ are {\em extended affine (EA)}-equivalent. This can be also written as $G=F_2\circ F\circ F_1+F_0$, for some $F_i(X)=L_i(X)+B_i$ affine functions, where $L_i$ are linearized polynomials, $B_i\in\F_{2^n}$ such that $F_1$ and $F_2$ are permutation polynomials.

When $\cA_{12}=\cA_{21}=0$, we say that $F$ and $G$ are {\em affine}-equivalent.  This can be also written as $G=F_2\circ F\circ F_1$, for some $F_i(X)=L_i(X)+B_i$ affine functions, where $L_i$ are linearized polynomials, $B_i\in\F_{2^n}$ such that $F_1$ and $F_2$ are permutation polynomials.

Here, we discuss the behavior of the EBCT, LBCT and UBCT of a function $F$ under CCZ, extended affine and affine-equivalence.  It is worth noting that Eddahmani and Mesnager~\cite{EM} proved that the EBCT, LBCT and UBCT of a permutation remain invariant under affine-equivalence.

 \begin{thm} 
 Given any functions $F,G$ on $\F_{2^n}$ that are CCZ-equivalent via the  affine permutation  $\cA$ on $\F_{2^{2n}}$, and given any $a,b,c,d\in\F_{2^n}$, then $\EB_F(a,b,c,d)=\EB_G(\alpha,\beta,\gamma,\delta)$, with $\alpha=\cA_{12}b+\cA_{11}a,\,\beta=\cA_{22}b+\cA_{21}a,\,\gamma=\cA_{12}d+\cA_{11}c\mbox{ and }\delta=\cA_{22}d+\cA_{21}c$,
and thus, the EBCT spectrum is preserved under the CCZ-equivalence. 
\end{thm}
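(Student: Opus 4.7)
The plan is to exploit the graph-theoretic characterization of CCZ-equivalence. Since $\cA$ maps the graph of $F$ bijectively onto the graph of $G$, for every $x\in\F_{2^n}$ there is a unique $\phi(x)\in\F_{2^n}$ with
\[
\cA\colvec{x}{F(x)}=\colvec{\phi(x)}{G(\phi(x))}.
\]
Unpacking the block decomposition of $\cA$ gives $\phi(x)=\cA_{11}x+\cA_{12}F(x)+C$ and $G(\phi(x))=\cA_{21}x+\cA_{22}F(x)+D$. Because $\cA$ is a permutation of $\F_{2^n}\times\F_{2^n}$, the map $\phi$ is a permutation of $\F_{2^n}$, a fact I would cite as the standard consequence of $\cA$ carrying one functional graph to another.

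The heart of the argument is to verify that $X$ lies in the solution set defining $\EB_F(a,b,c,d)$ (as formulated in \eqref{EBCT}) if and only if $\phi(X)$ lies in the solution set defining $\EB_G(\alpha,\beta,\gamma,\delta)$. For the forward direction, because $\cA_{11}$ and $\cA_{12}$ act linearly, a direct computation shows
\[
\phi(X+a)+\phi(X)=\cA_{11}a+\cA_{12}\bigl(F(X+a)+F(X)\bigr)=\cA_{11}a+\cA_{12}b=\alpha,
\]
and analogously $\phi(X+c)+\phi(X)=\gamma$ and $\phi(X+a+c)+\phi(X+a)=\gamma$. Running the same calculation with the second row of $\cA$ converts the three relations for $F$ in \eqref{EBCT} into the three corresponding relations for $G$ evaluated at the quartet $\phi(X),\,\phi(X)+\alpha,\,\phi(X)+\gamma,\,\phi(X)+\alpha+\gamma$, with the two output differences $\beta$ and $\delta$. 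Thus $\phi$ sends the $F$-solution set into the $G$-solution set; the reverse containment follows by repeating the argument with the CCZ-equivalence realized by $\cA^{-1}$, which obeys identical block identities with inverted coefficients. Since $\phi$ is a bijection of $\F_{2^n}$, the restriction is a bijection between the two finite solution sets, yielding $\EB_F(a,b,c,d)=\EB_G(\alpha,\beta,\gamma,\delta)$.

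For the spectrum claim, the parameter map $(a,b,c,d)\mapsto(\alpha,\beta,\gamma,\delta)$ is exactly the linear part of $\cA$ applied separately to $(a,b)$ and $(c,d)$. Since this linear part is invertible (as $\cA$ is an affine permutation of $\F_{2^n}\times\F_{2^n}$), the parameter map is a bijection of $\F_{2^n}^4$, so the multisets $\{\EB_F(a,b,c,d)\}_{(a,b,c,d)}$ and $\{\EB_G(\alpha,\beta,\gamma,\delta)\}_{(\alpha,\beta,\gamma,\delta)}$ agree. The main obstacle is essentially bookkeeping: confirming that the third EBCT condition, anchored at $X+a$ rather than at $X$, transforms correctly. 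However, this reduces to the same linear identity used for the first two, applied after shifting the base point from $X$ to $X+a$, so no new idea is required beyond careful tracking of the four quartet points and the two output differences.
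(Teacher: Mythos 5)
Your proposal is correct and follows essentially the same route as the paper: you introduce the induced permutation $\phi(x)=\cA_{11}x+\cA_{12}F(x)+C$ (the paper's $y$, $u$, $z$, $w$ are exactly $\phi(X)$, $\phi(X+a)$, $\phi(X+c)$, $\phi(X+a+c)$), compute the first- and second-row differences to convert the three EBCT relations for $F$ into those for $G$ at $(\alpha,\beta,\gamma,\delta)$, and close the argument using invertibility of the block-diagonal parameter matrix. Your explicit use of $\phi$ as a bijection between the two solution sets is a slightly cleaner packaging of the same computation, and the reverse containment via $\cA^{-1}$ matches the paper's appeal to the invertibility of $\cB$.
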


\begin{proof}
Let $\displaystyle \left\{\colvec{x}{G(x)},x\in\F_{2^n} \right\}=\left\{\cA  \colvec{x}{F(x)} ,x\in\F_{2^n} \right\}.$ Then, for every $X,c\in\F_{2^n}$ there exists $y,z\in\F_{2^n}$ such that  $\colvec{y}{G(y)}=\cA  \colvec{X}{F(X)}$ and $\colvec{z}{G(z)}=\cA  \colvec{X+c}{F(X+c)}$, which gives 
 \allowdisplaybreaks
\begin{align*}
y&=\cA_{11}X+\cA_{12}F(X)+C,\\
G(y)&=\cA_{21}X+\cA_{22}F(X)+D,\\
z&=\cA_{11}(X+c)+\cA_{12}F(X+c)+C,\\
G(z)&=\cA_{21}(X+c)+\cA_{22}F(X+c)+D.
\end{align*}

Suppose $F(X+c)+F(X)=d$. Then, 
 \allowdisplaybreaks
\begin{align*}
G(z)+G(y)&=\cA_{21}(X+c)+\cA_{22}F(X+c)+D+\cA_{21}X+\cA_{22}F(X)+D\\
&=\cA_{22}(F(X+c)+F(X))+\cA_{21}c=\cA_{22}d+\cA_{21}c.
\end{align*}

 Taking $z=y+\gamma$ and $\delta=\cA_{22}d+\cA_{21}c$, we obtain:
$G(y+\gamma)+G(y)=\delta.$
We can also see that 
 \allowdisplaybreaks
\begin{align*}
\gamma=y+z&=\cA_{11}X+\cA_{12}F(X)+C+\cA_{11}(X+c)+\cA_{12}F(X+c)+C\\
&=\cA_{12}(F(X+c)+F(X))+\cA_{11}c=\cA_{12}d+\cA_{11}c.
\end{align*}
Therefore,
$
 F(X)+F(X+c)=d\Rightarrow G(y+\gamma)+G(y)=\delta, 
$
where $\gamma=\cA_{12}d+\cA_{11}c\mbox{ and }\delta=\cA_{22}d+\cA_{21}c.$

Similarly, taking $\colvec{u}{G(u)}=\cA  \colvec{X+a}{F(X+a)}$, we can prove that 
$
F(X)+F(X+a)=b\Rightarrow G(y+\alpha)+G(y)=\beta, $
where $\alpha=\cA_{12}b+\cA_{11}a\mbox{ and }\beta=\cA_{22}b+\cA_{21}a$.

Now, we need to see if the third equation, $F(X+a)+F(X+a+c)=d$, transforms into $G(y+\alpha)+G(y+\alpha+\gamma)=\delta$ with the same values of $\alpha, \gamma,$ and $\delta$.
Taking $\colvec{w}{G(w)}=\cA  \colvec{X+a+c}{F(X+a+c)}$,   we see that 
\begin{align*}
w&=\cA_{11}(X+a+c)+\cA_{12}F(X+a+c)+C,\\
G(w)&=\cA_{21}(X+a+c)+\cA_{22}F(X+a+c)+D,\\
u&=\cA_{11}(X+a)+\cA_{12}F(X+a)+C,\\
G(u)&=\cA_{21}(X+a)+\cA_{22}F(X+a)+D.
\end{align*}
Then,
\begin{align*}
G(w)+G(u)&=\cA_{21}(X+a+c)+\cA_{22}F(X+a+c)+D+\cA_{21}(X+a)+\cA_{22}F(X+a)+D\\
&=\cA_{22}(F(X+a+c)+F(X+a))+\cA_{21}c=\cA_{22}d+\cA_{21}c=\delta.
\end{align*}

As before, $u=y+\alpha$, with 
 \allowdisplaybreaks
 \begin{align*}
 \alpha=y+u&=\cA_{11}X+\cA_{12}F(X)+C+\cA_{11}(X+a)+\cA_{12}F(X+a)+C\\
 &=\cA_{12}b+\cA_{11}a,\text{ and}\\
  w+u&=\cA_{11}(X+a+c)+\cA_{12}F(X+a+c)+C+\cA_{11}(X+a)+\cA_{12}F(X+a)+C\\
  &=\cA_{12}(F(X+a+c)+F(X+a))+\cA_{11}c=\cA_{12}d+\cA_{11}c=\gamma.
  \end{align*} 
  Therefore, we get the desired equation
$G(y+\alpha)+G(y+\alpha+\gamma)=\delta,$
with the same values of $\alpha, \gamma,$ and $\delta$.

Finally, we note that $$\left(\begin{array}{c}
\alpha\\
\beta\\
\gamma\\
\delta\end{array}\right)=\left(\begin{array}{cccc}\cA_{11}&\cA_{12}&0&0\\
\cA_{21}&\cA_{22}&0&0\\
0&0&\cA_{11}&\cA_{12}\\
0&0&\cA_{21}&\cA_{22}\end{array}\right)\left(\begin{array}{c}
a\\
b\\
c\\
d\end{array}\right)=\cB\left(\begin{array}{c}
a\\
b\\
c\\
d\end{array}\right)$$
and, since $\cA$ is invertible, the matrix $\cB$   is also invertible.
This concludes the proof.
\end{proof}

Below we use again this section's notations.
\begin{thm} 
The LBCT spectrum is preserved under EA-equivalence. If $F$ and $G$ are EA-equivalent, then $\LB_F(a,b,c)=\LB_G(\alpha_L,\beta_L,\gamma_L)$, where $\alpha_L=\cA_{11}a$,$\beta_L=\cA_{11}b\mbox{ and }\gamma_L=\cA_{22}c+\cA_{21}b$. \end{thm}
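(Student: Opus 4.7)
The plan is to mimic the CCZ-invariance argument for the EBCT, specialized to the EA setting where $\cA_{12}=0$. My first step would be to note that under this assumption, the relation $\colvec{y}{G(y)}=\cA\colvec{X}{F(X)}+\colvec{C}{D}$ reduces its first row to $y=\cA_{11}X+C$, with no dependence on $F(X)$. Since the block lower-triangular $\cA$ is a permutation, its diagonal block $\cA_{11}$ is invertible, so $\Phi:X\mapsto \cA_{11}X+C$ is a bijection on $\F_{2^n}$. I would record the immediate consequence that whenever $X+Y=b$, the images $(y,y'):=(\Phi(X),\Phi(Y))$ satisfy $y+y'=\cA_{11}b=\beta_L$.

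The second step would translate each defining equation of $\LB_F(a,b,c)$ through $\Phi$. From the CCZ identity, $G(\Phi(X))=\cA_{21}X+\cA_{22}F(X)+D$, so summing over the pair yields
\[
G(y)+G(y')=\cA_{21}(X+Y)+\cA_{22}(F(X)+F(Y)),
\]
and the conditions $X+Y=b$ and $F(X)+F(Y)=c$ together give $G(y)+G(y')=\cA_{21}b+\cA_{22}c=\gamma_L$. For the shifted equation $F(X+a)+F(Y+a)=c$, setting $(u,u'):=(\Phi(X+a),\Phi(Y+a))$ and using $\cA_{12}=0$ again produces $u=y+\cA_{11}a=y+\alpha_L$ and $u'=y'+\alpha_L$; the same linear combination then gives $G(y+\alpha_L)+G(y'+\alpha_L)=\gamma_L$.

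Combining these translations, $\Phi$ maps the $X$-set underlying $\LB_F(a,b,c)$ into the $y$-set underlying $\LB_G(\alpha_L,\beta_L,\gamma_L)$. The point I expect to be the only real subtlety is upgrading this into an equality of cardinalities rather than an inclusion. This requires that each implication be reversible, which in turn follows from observing that $\cA^{-1}$ is again an affine permutation with vanishing $(1,2)$-block (the inverse of an invertible block lower-triangular matrix is block lower-triangular, so $F$ and $G$ are EA-equivalent via $\cA^{-1}$ as well). Applying the same argument with $F$ and $G$ interchanged, via $\cA^{-1}$, supplies the converse inclusion and yields $\LB_F(a,b,c)=\LB_G(\alpha_L,\beta_L,\gamma_L)$.
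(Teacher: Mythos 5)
Your proposal is correct and follows essentially the same route as the paper: you translate each defining equation of $\LB_F(a,b,c)$ through the graph correspondence and observe that the vanishing of $\cA_{12}$ is exactly what makes the shift parameter $\alpha_L=\cA_{11}a$ constant. Your finishing step is in fact slightly more careful than the paper's -- you make the input bijection $\Phi(X)=\cA_{11}X+C$ explicit and obtain the reverse inclusion by applying the argument to $\cA^{-1}$ (which is again block lower-triangular), whereas the paper justifies the equality of cardinalities only implicitly via the invertibility of the parameter matrix $\cC$.
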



\begin{proof} 
Here, since $  X+Y=b$, we can take $Y=X+b$, and write the second and third equations of the system of $\LB_F(a,b,c)$, respectively, as $F(X+a)+F(X+b+a)=c$ and $F(X)+F(X+b)=c$.

There exist $v,t\in\F_{2^n}$ such that $\colvec{v}{G(v)}=\cA  \colvec{X+b}{F(X+b)}$ and $\colvec{t}{G(t)}=\cA  \colvec{X+b+a}{F(X+b+a)}$. 
Similarly as in the proof of the previous theorem, 
\begin{align*}
v&=\cA_{11}(X+b)+\cA_{12}F(X+b)+C,\\
G(v)&=\cA_{21}(X+b)+\cA_{22}F(X+b)+D, 
\end{align*}
gives that 
\begin{equation*}
F(X)+F(X+b)=c\Rightarrow G(y+\beta_L)+G(y)=\gamma_L, 
\end{equation*}
where $\beta_L=\cA_{12}c+\cA_{11}b\mbox{ and }\gamma_L=\cA_{22}c+\cA_{21}b$.

From
\begin{align*}
t&=\cA_{11}(X+a+b)+\cA_{12}F(X+a+b)+C,\\
G(t)&=\cA_{21}(X+a+b)+\cA_{22}F(X+a+b)+D,
\end{align*}
we obtain 
$
F(X+a)+F(X+a+b)=c\Rightarrow G(y+\alpha_L)+G(y+\alpha_L+\beta_L)=\gamma_L, $
where $\alpha_L=\cA_{12}(F(X)+F(X+a))+\cA_{11}a\mbox{ and }\gamma_L=\cA_{22}c+\cA_{21}b,\,\beta_L=\cA_{12}c+\cA_{11}b$.

We see that, in general, $\alpha_L$ is not a constant. However, if $\cA_{12}=0$ (i.e. EA-equivalence), we have that $\alpha_L$ is a constant. Therefore, if $F$ and $G$ are EA-equivalent, then $\LB_F(a,b,c)=\LB_G(\alpha_L,\beta_L,\gamma_L)$, where $\alpha_L=\cA_{11}a$,$\beta_L=\cA_{11}b\mbox{ and }\gamma_L=\cA_{22}c+\cA_{21}b$.

Finally,  $$\left(\begin{array}{c}
\alpha_L\\
\beta_L\\
\gamma_L\end{array}\right)=\left(\begin{array}{ccc}\cA_{11}&0&0\\
0&\cA_{11}&0\\
0&\cA_{21}&\cA_{22}\end{array}\right)\left(\begin{array}{c}
a\\
b\\
c\end{array}\right)=\cC\left(\begin{array}{c}
a\\
b\\
c\end{array}\right),$$
and, since $\cA$ is invertible, the matrix $\cC$   is also invertible. 
This concludes the proof.
\end{proof}

We use the same notations as in the previous two theorems.
\begin{thm} 
The UBCT spectrum is preserved under affine equivalence. If $F$ and $G$ are affine equivalent, then $\UB_F(a,b,c)=\UB_G(\alpha_U,\beta_U,\gamma_U)$, where $\alpha_U=\cA_{11}a,\,\beta_U=\cA_{22}b\mbox{ and }\gamma_U=\cA_{22}c$.
\end{thm}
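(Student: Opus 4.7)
The plan is to follow the template of the preceding two theorems: for any $X, Y$ witnessing the UBCT condition for $F$, introduce $\colvec{y}{G(y)} = \cA\colvec{X}{F(X)}$, $\colvec{u}{G(u)} = \cA\colvec{X+a}{F(X+a)}$, $\colvec{y'}{G(y')} = \cA\colvec{Y}{F(Y)}$, and $\colvec{u'}{G(u')} = \cA\colvec{Y+a}{F(Y+a)}$, translate each defining equation of $\UB_F(a,b,c)$ into a relation among $y, u, y', u'$, and identify the conditions on $\cA$ that force all horizontal shifts and right-hand sides to be constants depending only on $(a,b,c)$.

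As in the previous proofs, the third equation $F(X)+F(X+a)=b$ yields $u+y = \cA_{11}a+\cA_{12}b$ and $G(u)+G(y) = \cA_{21}a+\cA_{22}b$. I would next record that summing the three UBCT equations forces $F(Y)+F(Y+a)=b$, so the analogous computation on the $(y',u')$ side also gives $u'+y' = \cA_{11}a+\cA_{12}b$ and $G(u')+G(y') = \cA_{21}a+\cA_{22}b$, ensuring that the same prospective $\alpha_U$ works for both. The second equation $F(X)+F(Y)=c$ produces $y'+y = \cA_{11}(X+Y)+\cA_{12}c$ and $G(y')+G(y) = \cA_{21}(X+Y)+\cA_{22}c$; the first equation yields a parallel expression with $u,u'$ in place of $y,y'$.

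The main obstacle is that $X+Y$ is not pinned down by the UBCT conditions (contrast with the LBCT, where $X+Y=b$ was built in), so the terms $\cA_{12}(X+Y)$ and $\cA_{21}(X+Y)$ vary with the chosen witness. To make all right-hand sides and shifts match the claimed $\alpha_U = \cA_{11}a$, $\beta_U = \cA_{22}b$, $\gamma_U = \cA_{22}c$, one must force $\cA_{12}=\cA_{21}=0$, which is precisely the affine-equivalence hypothesis and strictly stronger than the EA hypothesis sufficient for the LBCT. Under this hypothesis, $X\mapsto y=\cA_{11}X+C$ and $Y\mapsto y'=\cA_{11}Y+C$ are bijections of $\F_{2^n}$ (since $\cA_{11}$ is invertible), and the three $F$-equations translate verbatim into the three $G$-equations defining $\UB_G(\alpha_U,\beta_U,\gamma_U)$ with witness $y'$, establishing the claimed identity.

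Finally, writing $(\alpha_U,\beta_U,\gamma_U)^{\top} = \cC(a,b,c)^{\top}$ where $\cC$ is the block-diagonal matrix with diagonal entries $\cA_{11}, \cA_{22}, \cA_{22}$, invertibility of $\cA_{11}$ and $\cA_{22}$ (forced by the invertibility of $\cA$ together with $\cA_{12}=\cA_{21}=0$) gives invertibility of $\cC$, so the correspondence $(a,b,c)\leftrightarrow(\alpha_U,\beta_U,\gamma_U)$ is a bijection on $\F_{2^n}^{3}$, and the entire UBCT spectrum is preserved.
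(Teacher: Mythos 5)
Your proposal is correct and follows essentially the same route as the paper: transport the graph of $F$ to that of $G$ via $\cA$, observe that the obstruction is the term $\cA_{21}(X+Y)$ (with $X+Y$ not determined by the UBCT system) making $\gamma_U$ non-constant, impose $\cA_{12}=\cA_{21}=0$ to get the stated $\alpha_U,\beta_U,\gamma_U$, and conclude via the invertible block-diagonal matrix. Your additional observation that summing the three equations forces $F(Y)+F(Y+a)=b$ is a nice touch that the paper leaves implicit, but the argument is the same.
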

\begin{proof} 
The relevant equations are here $F(X+a)+F(Y+a)=c$, $ F(X)+F(Y)=c$ and $F(X)+F(X+a)=b$.
As before,
\begin{equation*}
F(X)+F(X+a)=b\Rightarrow G(y+\alpha_U)+G(y)=\beta_U,  
\end{equation*}
where $\alpha_U=\cA_{12}b+\cA_{11}a\mbox{ and }\beta_U=\cA_{22}b+\cA_{21}a$.

Taking as before $\colvec{y}{G(y)}=\cA  \colvec{X}{F(X)}$ and $\colvec{u}{G(u)}=\cA  \colvec{X+a}{F(X+a)}$, and taking now $\colvec{y'}{G(y' )}=\cA  \colvec{Y}{F(Y)}$, $\colvec{u'}{G(u')}=\cA  \colvec{Y+a}{F(Y+a)}$, we obtain that 
\allowdisplaybreaks
\begin{align*}
y&=\cA_{11}X+\cA_{12}F(X)+C,\\
G(y)&=\cA_{21}X+\cA_{22}F(X)+D,\\
u&=\cA_{11}(X+a)+\cA_{12}F(X+a)+C,\\
G(u)&=\cA_{21}(X+a)+\cA_{22}F(X+a)+D,\\
y'&=\cA_{11}Y+\cA_{12}F(Y)+C,\\
G(y')&=\cA_{21}Y+\cA_{22}F(Y)+D,\\
u'&=\cA_{11}(Y+a)+\cA_{12}F(Y+a)+C,\\
G(u')&=\cA_{21}(Y+a)+\cA_{22}F(Y+a)+D.
\end{align*}
Then, 
\begin{equation*}
G(y)+G(y')=\cA_{21}(X+Y)+\cA_{22}(F(X)+F(Y))=\cA_{21}(X+Y)+\cA_{22}c=\gamma_U.
\end{equation*}
Here we see that, in general, $\gamma_U$ is not constant.
Similarly,
\begin{align*}
G(u)+G(u')&=\cA_{21}(X+Y)+\cA_{22}(F(X+a)+F(Y+a))=\cA_{21}(X+Y)+\cA_{22}c=\gamma_U.
\end{align*} 
However, if $\cA_{21}=0$, $\gamma_U$ is constant. Under affine equivalence,  $\cA_{12}=\cA_{21}=0$, and $\UB_F(a,b,c)=\UB_G(\alpha_U,\beta_U,\gamma_U)$, with $\gamma_U=\cA_{22}c$, $\beta_U=\cA_{22}b\mbox{ and }\alpha_U=\cA_{11}a$.

Finally,  $$\left(\begin{array}{c}
\alpha_U\\
\beta_U\\
\gamma_U\end{array}\right)=\left(\begin{array}{ccc}\cA_{11}&0&0\\
0&\cA_{22}&0\\
0&0&\cA_{22}\end{array}\right)\left(\begin{array}{c}
a\\
b\\
c\end{array}\right)=\cD\left(\begin{array}{c}
a\\
b\\
c\end{array}\right),$$
and, since $\cA$ is invertible, the matrix $\cD$   is also invertible. 
This concludes the proof.
\end{proof}
The following example shows that the UBCT spectrum is in general not preserved under CCZ equivalence.
\begin{exmp}
Define $F(X)=X^9$ and $G(X)=X^9+(X^8+X)\Tr(X^9+X)$ over $\F_{2^5}$. Budhagyan et al.~\cite{BCP} showed that while $F(X)$ and $G(X)$ are CCZ equivalent, they are not EA equivalent. Despite of them being CCZ equivalent, their UBCT spectrum differs. Notably, there are 
$992$ tuples $(a,b,c) \in \F_{2^n} \times \F_{2^n} \times \F_{2^n}$ such that $\UB_F(a,b,c)=2$, while the number of tuples $(a,b,c) \in \F_{2^n} \times \F_{2^n} \times \F_{2^n}$ for which $\UB_G(a,b,c)=2$ is $982$. 
\end{exmp}
We now present an example illustrating that the UBCT spectrum is in general not even preserved under EA equivalence.
\begin{exmp}
Let $F(X)=X^5$ and $G(X)=X^5+X$ be defined over $\F_{2^3}$. Notice that $F(X)$ is a permutation over $\F_{2^3}$ while $G(X)$ is a non-permutation over $\F_{2^3}$. Moreover, $F(X)$ and $G(X)$ are EA equivalent. However, the UBCT spectrum is not preserved under EA equivalence.  Specifically, the number of tuples $(a,b,c) \in \F_{2^n} \times \F_{2^n} \times \F_{2^n}$ for which $\UB_F(a,b,c)=2$ is $448$, whereas the number of tuples $(a,b,c) \in \F_{2^n} \times \F_{2^n} \times \F_{2^n}$ satisfying $\UB_G(a,b,c)=2$ is $452$. 
\end{exmp}

\section{EBCT, LBCT and UBCT of a function in terms of its DDT}
\label{S4}

In this section, we present a general result that gives an intriguing connection between the EBCT, LBCT and UBCT entries of a differentially  $\delta$-uniform function $F$ (not necessarily a permutation) with its DDT entries, and between the entries of the EBCT and the LBCT and the UBCT. It also includes a generalization of Lemma~\ref{L0001} to arbitrary functions (not necessarily permutations). 

\begin{thm}
\label{deltadiff} 
Let $F$ be a differentially $\delta$-uniform function on $\F_{2^n}$, and let $a,b,c,d\in\F_{2^n}$. Let  $k \in \left\{1,2,\ldots,\frac{\delta}{2}\right\}$. If $\DDT_F(a,b)=2k$, we let $\{x_1,x_1+a,x_2,x_2+a,\ldots,x_{k},x_{k}+a\}$  denote the distinct solutions of the equation $F(X+a)+F(X)=b$; if $\DDT_F(b,c)=2k$, we let $\{y_1,y_1+b,y_2,y_2+b,\ldots,y_{k},y_{k}+b\}$ denote the distinct solutions of  the equation $F(X+b)+F(X)=c$; if $\DDT_F(c,d)=2k$, let $\{z_1,z_1+c,z_2,z_2+c, \ldots,z_{k},z_{k}+c\}$ be the distinct solutions of the equation $F(X+c)+F(X)=d$. Then we have,
\begin{equation*}\EB_{F}(a, b, c,d) = 
  \begin{cases}
\DDT_F(a,b) & \mbox{if } c = d =0, \\ 
  \DDT_{F}(c,d) & \mbox{if}~ ac \neq 0, a=c \mbox{ and } b=d ; \mbox{ or } a=0, b=0 \mbox{ and } c \neq 0,\\
4\ell &\mbox{if } a\neq c, ac \neq 0 \mbox{ and } \DDT_{F}(c,d)=2k=4r \mbox{~or~} 4r+2,  
\\
&~\mbox{where}~r>0\mbox{~is an integer,}\\
&~1 \leq \ell \leq r~\mbox{ is the largest integer such that} \\
&~(a,b) \in U(i_1, j_1) \cap U(i_2,j_2) \cap \cdots \cap U(i_{\ell}, j_{\ell}),\\
& ~ 1\leq i_1,i_2,\ldots, i_{\ell},j_1,j_2,\ldots, j_{\ell}\leq k~\mbox{are distinct integers,}\\
0 & \mbox{otherwise},
  \end{cases}
\end{equation*} 
where $U(i,j)= \{\left(z_i+z_j, F(z_i)+F(z_j)\right), \left(z_i+z_j+c, F(z_i)+F(z_j)+d\right) ~\mid ~ 1\leq i \neq j\leq k\} \subseteq \F_{2^n}^{*} \times \F_{2^n}^{*}.$
Moreover, we have
\begin{equation*}\LB_{F}(a, b, c) = 
  \begin{cases}  
  \DDT_{F}(b,c) & \mbox{if}~b=c=0; \mbox{ or }a=b \mbox{ and } ab\neq0; \mbox{ or } a=0 \mbox{ and } b \neq 0,\\ 
4\ell &\mbox{if } a\neq b,  ab\neq0 \mbox{ and } \DDT_{F}(b,c)=2k=4r \mbox{~or~} 4r+2, \\ &~\mbox{where}~r>0\mbox{~is an integer,}\\
&~1 \leq \ell \leq r~\mbox{ is the largest integer such that} \\
&~a \in V(i_1, j_1) \cap V(i_2,j_2) \cap \cdots \cap V(i_{\ell}, j_{\ell}),\\
& ~ 1\leq i_1,i_2,\ldots, i_{\ell},j_1,j_2,\ldots, j_{\ell}\leq k~\mbox{are distinct integers,}\\
  0 & \mbox{otherwise,}
  \end{cases}
\end{equation*} 
where $V(i,j)= \{y_i+y_j, y_i+y_j+b ~\mid ~ 1\leq i \neq j\leq k\} \subseteq \F_{2^n}^{*}$. Finally, 
\begin{equation*}\UB_{F}(a, b, c) = 
  \begin{cases} 
|F^{-1}(c+\Im(F))|& \mbox{if}~a= b= 0, \\
  \DDT_{F}(a,b) & \mbox{if}~ b=c \mbox{ and } a \neq 0, \\ 
4\ell &\mbox{if } c \neq b, a \neq 0 \mbox{ and } \DDT_{F}(a,b)=2k=4r \mbox{~or~} 4r+2, \\
&~\mbox{where}~r>0\mbox{~is an integer,}\\
&~1 \leq \ell \leq r~\mbox{ is the largest integer such that} \\
&~c \in W(i_1, j_1) \cap W(i_2,j_2) \cap \cdots \cap W(i_{\ell}, j_{\ell}),\\
& ~ 1\leq i_1,i_2,\ldots, i_{\ell},j_1,j_2,\ldots, j_{\ell}\leq k~\mbox{are distinct integers,}\\
  0 & \mbox{otherwise,}
  \end{cases}
\end{equation*} 
where $W(i,j)= \{F(x_i)+F(x_j), F(x_i)+F(x_j)+b ~\mid ~ 1\leq i \neq j\leq k\} \subseteq \F_{2^n}^{*},$  $\Im(F)$ is the image of $F$ and $F^{-1}(\cdot)$ denotes the preimage of the argument.
\end{thm}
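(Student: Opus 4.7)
The plan is to prove all three formulas by the same general strategy: rewrite each defining system so that the underlying DDT equation becomes explicit, deduce that the relevant variables lie in the two-element orbits generated by the $z_i$'s (respectively, $y_i$'s and $x_i$'s), and then enumerate the possible configurations. In each case, the degenerate cases listed in the statement are handled directly by showing that the system either collapses to a single DDT equation or becomes tautological; only the generic case then requires a nontrivial bookkeeping argument.

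For the EBCT, I would start from Lemma~\ref{E1} to work with the system $F(X)+F(X+a)=b$, $F(X)+F(X+c)=d$, $F(X+a+c)+F(X+a)=d$. The key observation is that equations (2) and (3) together say that both $X$ and $X+a$ lie in the set $S_c := \{z_1,z_1+c,\ldots,z_k,z_k+c\}$ of solutions of $F(X)+F(X+c)=d$. The degenerate cases are quickly disposed of: $c=d=0$ makes equations (2) and (3) tautological; $a=c$ and $b=d$ collapses equation (3) into equation (1); and $a=b=0$ makes (1) and (3) trivially true. In the generic regime $ac\neq 0$ and $a\neq c$, I would enumerate the four possibilities for $(X,X+a)\in S_c\times S_c$, namely $(z_i,z_j)$, $(z_i+c,z_j+c)$, $(z_i,z_j+c)$ and $(z_i+c,z_j)$ with $i\neq j$; imposing equation (1) forces $(a,b)$ to lie in $U(i,j)$ exactly as defined in the statement, and each matching pair $\{i,j\}$ contributes exactly four values of $X$, giving $4\ell$.

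The LBCT analysis is entirely parallel after substituting $Y=X+b$ in (\ref{LBCT}) to produce the two equations $F(X)+F(X+b)=c$ and $F(X+a)+F(X+a+b)=c$, so that both $X$ and $X+a$ must lie in $T_b=\{y_1,y_1+b,\ldots,y_k,y_k+b\}$; the enumeration then reduces to whether $a$ matches $y_i+y_j$ or $y_i+y_j+b$, i.e.\ whether $a\in V(i,j)$. For the UBCT, adding the first two equations of (\ref{UBCT}) gives $F(Y)+F(Y+a)=b$, placing $Y$ in the same set $R_a=\{x_1,x_1+a,\ldots,x_k,x_k+a\}$ as $X$; the case $a=b=0$ reduces to counting those $X$ for which $F(X)+c\in\Im(F)$, producing $|F^{-1}(c+\Im(F))|$; the case $b=c$ is handled by the witness $Y=X+a$; and the generic case $c\neq b$ is an enumeration of configurations $(X,Y)\in R_a\times R_a$ with $F(X)+F(Y)=c$, reading off membership of $c$ in $W(i,j)$. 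The main obstacle is the combinatorial bookkeeping in the generic cases: one must verify that whenever $(a,b)$ (resp.\ $a$, resp.\ $c$) lies in several of the $U(i,j)$ (resp.\ $V$, $W$), the underlying index pairs are automatically pairwise disjoint, so that the four-element contributions do not overlap and the count is genuinely $4\ell$. This disjointness follows from the injectivity of the map $\{i,j\}\mapsto z_i+z_j$ (resp.\ $y_i+y_j$, $F(x_i)+F(x_j)$) on distinct unordered index pairs, combined with the nondegeneracy $c\neq 0$ (resp.\ $b\neq 0$, $b\neq 0$) needed to rule out cross-type collisions between the two elements of each set.
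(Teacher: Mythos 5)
Your proposal follows essentially the same route as the paper's proof: reduce to the inverse-free systems, observe that every solution must lie in the relevant DDT solution set ($S(c,d)$, $S(b,c)$, $S(a,b)$), dispose of the degenerate cases directly, and in the generic case read off membership of $(a,b)$, $a$, $c$ in the sets $U$, $V$, $W$ to obtain the count $4\ell$. Your closing remark on the disjointness of overlapping index pairs is a detail the paper leaves implicit, and it is worth noting that for the UBCT with a non-injective $F$ the claimed injectivity of $\{i,j\}\mapsto F(x_i)+F(x_j)$ is not automatic (two distinct solutions of $F(X)+F(X+a)=b$ may share an image), so that step would need the same extra care in your write-up as it silently does in the paper's.
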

\begin{proof}
It is easy to see that if $X$ is a solution to System~\eqref{EBCT}  at $(a,b,c)$  (here and throughout, for easy writing, we refer to the system in the definition of such a cardinality $(\star)$ as System~$(\star)$), then $\{X,X+c,X+a,X+a+c\}$ are solutions of $F(X)+F(X+c)=d$. Since $F$ is differentially $\delta$-uniform, the equation $F(X)+F(X+c)=d$ can have zero or $2k$ solutions, where $k\in \left\{1,2,\ldots,\frac{\delta}{2}\right\}$. If $\DDT_F(c,d)=0$, then System~\eqref{EBCT} has no solutions, and therefore $\EB_F(a,b,c,d)=0$. 

Furthermore, it is straightforward to observe that  if $a=0$ (or $c=0$), then System~\eqref{EBCT} has a solution only when $b=0$ (or $d=0$, respectively) and otherwise, no solution. We can now divide the proof into the following five cases as follows

\noindent\textbf{Case 1.}  If $a=c=0$, then $b=d=0$. Therefore for $a=b=c=d=0$, we get $\EB_F(a,b,c,d)=2^n=\DDT_F(0,0)$.

\noindent\textbf{Case 2.}  If $a \neq 0, c=d=0$, we obtain $\EB_F(a,b,c,d)=\DDT_F(a,b)$.

\noindent\textbf{Case 3.}  If $a=b=0, c \neq 0$, then $\EB_F(a,b,c,d)=\DDT_F(c,d)$.

\noindent\textbf{Case 4.} If $a=c$ and $ac \neq 0$ then from the first two equations in System~\eqref{EBCT}, we obtain $b=d$. Thus, $a=c$ and $b=d$ imply that $\EB_F(a,b,c,d)=\DDT_F(c,d)$. 

It is easy to observe that when $\DDT_F(c,d)=2$, System~\eqref{EBCT} will have a solution only if $a=c$ or $a=0$.

\noindent\textbf{Case 5.} Let $a \neq c, ac\neq0$ and $\DDT_F(c,d)=2k$, where solutions of $F(X)+F(X+c)=d$ are from the set $S(c,d):=\{z_1,z_2,\ldots,z_k,z_1+c,z_2+c,\ldots, z_k+c\}$. Due to the second equation in System~\eqref{EBCT}, all solutions of this system must necessarily come from $S(c,d)$ itself. It follows that if $z_i$ is solution to System~\eqref{EBCT}, then from the third equation, we have $z_i+a \in S(c,d) \setminus \{z_i, z_i+c\}$, or equivalently, $a \in z_i+S(c,d) \setminus \{z_i, z_i+c\}$. Furthermore, from the first equation, we have $b=F(z_i)+F(z_i+a)$. We consider the following set in $\F_{2^n}^{*} \times \F_{2^n}^{*}$, 
$$U(i,j)= \{\left(z_i+z_j, F(z_i)+F(z_j)\right), \left(z_i+z_j+c, F(z_i)+F(z_j)+d\right) ~\mid ~ 1\leq i \neq j\leq k\}.$$ 
It easy to check that $z_i,z_j, z_i+c,z_j+c$ are four distinct solutions of System~\eqref{EBCT} if and only if $(a, b)\in U(i,j)$. Note that for a given pair $(a,b) \in U (i,j)$, there may be additional solutions to System~\eqref{EBCT} beyond these four. Thus, we have $\EB_F(a,b,c,d) \geq 4$ if $(a,b) \in U (i,j)$ and $\DDT_F(c,d)=2k$. For $k \geq 4$ and $1 \leq i_1 \neq j_1 \neq i_2 \neq j_2 \leq k$, let $(a,b) \in U(i_1, j_1)$ and $(a', b') \in U(i_2, j_2)$. If $(a,b)=(a',b')$, then System~\eqref{EBCT} has at least eight solutions, namely, $z_{i_1}, z_{i_2}, z_{j_1}, z_{j_2}, z_{i_1}+c, z_{i_2}+c, z_{j_1}+c, z_{j_2}+c$; otherwise, it will have at least four solutions. Following a similar argument, it is clear that if the pairs $(a,b)$ are equal for three different $U(i,j)$, then System~\eqref{EBCT} will have at least twelve solutions. However, this process must eventually stop, depending on the value of $\DDT_F(c,d)$, which we will describe below. 

Depending on whether $2k \equiv 0 \pmod4$ or $2k \equiv 2 \pmod 4$, we can express $\DDT_F(c,d)$ as $\DDT_F(c,d)=4r$ or $\DDT_F(c,d)=4r+2$, respectively, where $r\geq 0$ is an integer. The case $k=1$ has already been addressed in Case 1. For $k=2$ or $k=3$, we have $\EB_F(a,b,c,d) \in \{0,4\}$. Now we consider the case $k \geq 4$. Let $1 \leq \ell \leq r$ be the largest integer for which there are distinct integers $i_1,i_2,\ldots, i_{\ell},j_1,j_2,\ldots, j_{\ell} \in \{1,2, \ldots, k \}$ such that $(a,b) \in U(i_1, j_1) \cap U(i_2,j_2) \cap \cdots \cap U(i_{\ell}, j_{\ell})$, then System~\eqref{EBCT} will have $4 \ell$ solutions. This proves the result for the EBCT.

Next, we compute the LBCT entries  at $(a,b,c)$  for a function $F$, using System~\eqref{LBCT}. If $(X,Y) \in \F_{2^n} \times \F_{2^n}$ satisfies System~\eqref{LBCT}, then $Y=X+b$ and $\{X,X+b,X+a,X+a+b\}$ are solutions of $F(X)+F(X+b)=c$. As $F$ is differentially $\delta$-uniform, the equation $F(X)+F(X+b)=c$ can have zero or $2k$ solutions, where $k\in \left\{1,2,\ldots,\frac{\delta}{2}\right\}$. Clearly, when $\DDT_F(b,c)=0$, System~\eqref{LBCT} has no solutions, and therefore $\LB_F(a,b,c)=0$. 

Notice that for $b=0$, System~\eqref{LBCT} has a solution only if $c=0$ and otherwise, no solution. Moreover, for $b=c=0$, System~\eqref{LBCT} has $2^n$ solutions ($=\DDT_F(b,c)$) regardless of whether $a=0$ or $a \neq 0$. We will now split our analysis into the following cases when $b \neq 0$:

\noindent\textbf{Case 1.} Let $a=0, b\neq 0$, then $\LB_F(a,b,c)=\DDT_F(b,c)$.

\noindent\textbf{Case 2.} Consider $a=b, ab \neq 0$, then $\LB_F(a,b,c)=\DDT_F(b,c)$. 

Further, one can see that if $\DDT_F(b,c)=2$, System~\eqref{LBCT} will have a solution only if $a=b$ or $a=0$. 

\noindent\textbf{Case 3.} Next, consider $a \neq b, ab \neq 0$ and $\DDT_F(b,c)=2k$, where solutions of $F(X)+F(X+b)=c$ are from the set $S(b,c):=\{y_1,y_2,\ldots,y_k,y_1+b,y_2+b,\ldots, y_k+b\}$. If $X=y_i$ is solution to System~\eqref{LBCT}, then first equation implies that $Y=y_i+b$ and the second equation implies that $y_i+a \in S(b,c) \setminus \{y_i, y_i+b\}$, or equivalently, $a \in y_i+S(b,c) \setminus \{y_i, y_i+b\}$. We now consider the following set
 $$
 V(i,j)= \{y_i+y_j, y_i+y_j+b ~\mid ~ 1\leq i \neq j\leq k\} \subseteq \F_{2^n}^{*}. 
 $$ 
One can verify that $(y_i,y_i+b),(y_j,y_j+b), (y_i+b,y_i),(y_j+b,y_j)$ are four distinct solutions of System~\eqref{LBCT} if and only if $ a\in V (i,j)$. Therefore, $\LB_F(a,b,c) \geq 4$ if $a \in V (i,j)$ and $\DDT_F(b,c)=2k$. For $k \geq 4$ and $1 \leq i_1 \neq j_1 \neq i_2 \neq j_2 \leq k$, let $a \in V(i_1, j_1)$ and $a' \in V(i_2, j_2)$. If $a=a'$, then System~\eqref{LBCT} has at least eight solutions, namely, $(y_{i_1},y_{i_1}+b),(y_{j_1},y_{j_1}+b), (y_{i_1}+b,y_{i_1}),(y_{j_1}+b,y_{j_1}),(y_{i_2},y_{i_2}+b),(y_{j_2},y_{j_2}+b), (y_{i_2}+b,y_{i_2}),(y_{j_2}+b,y_{j_2})$; otherwise, it will have at least four solutions. Following a similar argument as in the case of the EBCT, this process must stop somewhere, depending on the value of $\DDT_{F}(b,c)$.

 Again, we can take $\DDT_{F}(b,c)=2k \in \{4r, 4r+2\}$, where $r$ is a non-negative integer. We have already discussed the case $k=1$ in Case 1. For $k=2,3$, one can see that $\LB_F(a,b,c) \in \{0,4\}$. Let us now assume that $k \geq 4$. Let $1\leq \ell \leq r$ be the largest integer for which there are distinct integers $i_1,i_2,\ldots, i_{\ell},j_1,j_2,\ldots, j_{\ell} \in \{1,2, \ldots, k \}$ such that $a \in V(i_1, j_1) \cap V(i_2,j_2) \cap \cdots \cap V(i_{\ell}, j_{\ell})$, then System~\eqref{LBCT} will have $4 \ell$ solutions. This completes the proof for the LBCT.

Finally, we deal with the UBCT  at $(a,b,c)$ of a function $F$ by analyzing System~\eqref{UBCT}. It is immediate that $F(X)+F(X+a)=F(Y)+F(Y+a)=b$ from the equations of System~\eqref{UBCT}. Moreover, $F(X)+F(X+a)=b$ can have zero or $2k$ solutions, where $k\in \left\{1,2,\ldots,\frac{\delta}{2}\right\}$, as $F$ is differentially $\delta$-uniform. Clearly, if $\DDT_F(a,b)=0$, then  System~\eqref{UBCT} has no solutions, and therefore $\UB_F(a,b,c)=0$. 
Let elements of $S(a,b):=\{x_1,x_2,\ldots,x_k,x_1+a,x_2+a,\ldots, x_k+a\}$ be the solution set of $F(X)+F(X+a)=b$. Next, we consider the following cases:

\noindent\textbf{Case 1.} If $a=0, b \neq0$, then $\UB_F(a,b,c)=0$.

\noindent\textbf{Case 2.} Let $a=b=0$, $\Im(F) = \{t \in \F_{2^n} \mid \exists ~u \in \F_{2^n} \mbox{~satisfying~} F(u)= t \}$, and $c + \Im(F) = \{c + t \mid t \in \Im(F)\}$. Then, $\UB_F(0, 0, c) = |F^{-1}(c + \Im(F))|$, the cardinality of the preimage of $c + \Im(F)$.

\noindent\textbf{Case 3.} 
If $b=c$ and $a \neq0$, System~\eqref{UBCT} reduces to \[  \begin{cases} 
  F(X+a)+F(Y+a)=b \\
  F(X)+F(Y)=b \\
  F(X)+F(X+a)=b
  \end{cases}\]
Let $X\in S(a,b)$. Then, $(X,X+a)$ is a solution of the above system.  On the other hand, if $X \notin S(a,b)$, then there does not exist any $Y \in \F_{2^n}$ such that $(X,Y)$ is a solution of the above system. Therefore, $\UB(a,b,c)=\DDT_F(a,b)$.

If $\DDT_F(a,b)=2$, then,  $S(a,b):=\{x_1,x_1+a\}$. Without loss of generality, let $X=x_1$. Since $\{X,X+a,Y,Y+a\}$ be solutions of $F(X)+F(X+a)=b$, and since $a\neq0$, we have that $Y=x_1$ or $Y=x_1+a$. 
If $Y=x_1$ and $c\neq0$, this yields a contradiction (note that, if $c=0$, we get that $(x_1,x_1)$ and $(x_1+a,x_1+a)$ are solutions of the system, and therefore $\UB(a,b,c)=\DDT_F(a,b)$). If $Y=x_1+a$, this yields a contradiction unless $b=c$. Therefore, if $\DDT_F(a,b)=2$, then System~\eqref{UBCT} has a solution only if $c=0$ or $b=c$.

\noindent\textbf{Case 4.} Now assume that $b\neq c,$ $a\neq0$, and $\DDT_F(a,b)=2k$, where $k \in \{2,3,\cdots,\delta/2\}$. For any $X\in S(a,b)$, we have that $(X,X)$ cannot be a solution of System~\eqref{UBCT} unless $c=0$; similarly, $(X,X+a)$ cannot be a solution of System~\eqref{UBCT} unless $c=b$, which is excluded.

Therefore for $X=x_i\in S(a,b)$, if there exist $Y \in S(a,b) \setminus \{x_i,x_i+a\}$ such that $c=F(x_i)+F(Y)=F(x_i+a)+F(Y+a)$, then $(X,Y)$ is a solution of System~\eqref{UBCT}. We now consider the set 
$$W(i,j)= \left\{F(x_i)+F(x_j), F(x_i)+F(x_j)+b ~\mid ~ 1\leq i \neq j\leq k\right\} \subseteq \F_{2^n}^{*}. $$
It is easy to observe that $(x_{i_1},x_{j_1}),(x_{i_1}+a,x_{j_1}+a), (x_{j_1},x_{i_1}),(x_{j_1}+a,x_{i_1}+a)$ are four distinct solutions of System~\eqref{UBCT} if and only if $ c\in W (i_1,j_1)$. It is important to see that for a given $c \in W(i_1,j_1)$, there can be more than four solutions to  System~\eqref{UBCT}, except for these four. 

By following a similar analysis as done in the case of the EBCT and LBCT, depending on the value of the DDT entries, we can compute the UBCT entries of $F$. We can write that $\DDT_F(a,b)=2k \in \{4r,4r+2\}$, where $r\geq 0$ is an integer. If $k=1$, then using Case 1, we are done, and if $k=2,3$, we have $\UB_F(a,b,c) \in \{0,4\}$. Assume that $k \geq 4$. Let $1 \leq \ell \leq r$ be the largest integer for which there are distinct integers $i_1,i_2,\ldots, i_{\ell},j_1,j_2,\ldots, j_{\ell} \in \{1,2, \ldots, k \}$ such that $c \in W(i_1, j_1) \cap W(i_2,j_2) \cap \cdots \cap W(i_{\ell}, j_{\ell})$, then System~\eqref{UBCT} will have $4 \ell$ solutions. This proves the result for the UBCT.
\end{proof}

We now give an example depicting the above Theorem~\ref{deltadiff}. 
\begin{exmp} 
Let $F(X)=X^{11}$, which is a permutation over $\F_{2^6}$. Let $c=g, d=g^{11}$, where $g$ is primitive element of $\F_{2^6}$. Here, $\DDT_F(c,d)=10$ and $S(c,d)=\{z_1=g, z_2=g^{10}, z_3= g^{19}, z_4=g^{22}, z_5=g^{46}, z_1+c=0, z_2+c=g^{28}, z_3+c=g^{55}, z_4+c=g^{43}, z_5+c=g^{37}\}$. If $i_1=1$ and $j_1=3$, then for $(a,b)=\left(z_{i_1}+z_{j_1}, F(z_{i_1})+F(z_{j_1})\right) =(g^{55},g^{38})\in U(i_1,j_1)$, we are interested in computing the largest integer $1 \leq \ell \leq 2$ such that $(a,b) \in U(1,3) \cap  U(i_2,j_2) \cap \cdots  \cap U(i_{\ell},j_{\ell}) $, where $i_2,i_3,\cdots,i_{\ell},j_2,j_3,\cdots,j_{\ell} \in \{2,4,5\}$ are distinct integers. In this example, $\ell$ is indeed $2$. This is because when $i_2=2$ and $j_2=5$, we get $z_{i_1}+z_{j_1}=z_{i_2}+z_{j_2}$ and  $F(z_{i_1})+F(z_{j_1})=F(z_{i_2})+F(z_{j_2})$. Consequently, we have $(a,b) \in U(1,3) \cap U(2,5)$, rendering $\EB_F(a,b,c,d)=4\ell=8$. It also follows that $(a',b')=(z_{i_1}+z_{i_2}+c, F(z_{i_1})+F(z_{i_2})+d )=(g^{19},g^{20})\in U(1,3) \cap U(2,5)$, or equivalently for $a'= g^{19}$ and $b'= g^{20}$, we have $\EB_F(a',b',c,d)=8$.

However, $\DDT_F(b,c)=2$ and $a \neq b$, which gives that $\LB_F(a,b,c)=0$.  Moreover, $\DDT_F(a,b)=10$, and $S(a,b)=\{x_1=g,x_2=g^{10},x_3=g^{13},x_4=g^{28},x_5=g^{55}, x_1+a=g^{19},x_2+a=g^{46},x_3+a=g^{34},x_4+a=g^{37}, x_5+a=0\}$. Then, one can check that $c \neq b$ and  furthermore, $c \not \in W(i,j)$ for all $i \neq j \in \{1,2,3,4\}$, implying that $\UB_F(a,b,c)=0$.
\end{exmp}

Next, we provide a characterization of the EBCT entries of a differentially $\delta$-uniform function $F$ in terms of UBCT and LBCT entries of $F$, and vice versa. 
\begin{thm}\label{GE2LU}
Let $F$ be a function on $\F_{2^n}$. Given any $a,b,c,d\in\F_{2^n}$, the following statements are equivalent:
\begin{itemize}
\item[$(i)$] $x_0\in S(c,d)+a$ is a solution of the system of EBCT$_F(a,b,c,d)$.
\item[$(ii)$] $(x_0,x_0+c)$ is a solution of the system of LBCT$_F(a,c,d)$ and $b=F(x_0)+F(x_0+a)$.
\item[$(iii)$] $(x_0,x_0+a)$ is a solution of the system of UBCT$_F(c,d,b)$ (note that this is only possible if $x_0+a\in S(c,d)$).
\end{itemize}

\end{thm}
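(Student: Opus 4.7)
The plan is to prove the three equivalences by recognizing that, once one fixes the second variable $Y$ in the LBCT and UBCT systems in a canonical way (namely $Y=x_0+c$ for LBCT and $Y=x_0+a$ for UBCT), each of these systems reduces to a subset of the three equations defining the EBCT system, plus at most one supplementary identity that is explicitly listed in the statement. So the proof is essentially a bookkeeping exercise matching equations after a substitution. I will organize it as a chain $(i)\Leftrightarrow(ii)$ and $(i)\Leftrightarrow(iii)$.

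For $(i)\Leftrightarrow(ii)$, I will substitute $Y=x_0+c$ into the system defining $\LB_F(a,c,d)$. The first LBCT equation $X+Y=c$ then becomes automatic. The remaining two LBCT equations read $F(x_0+a)+F(x_0+c+a)=d$ and $F(x_0)+F(x_0+c)=d$, which are exactly the third and second equations of the EBCT$_F(a,b,c,d)$ system evaluated at $X=x_0$. Pairing this with the extra requirement $b=F(x_0)+F(x_0+a)$ recovers the first EBCT equation. The ambient condition $x_0\in S(c,d)+a$ is equivalent to $F(x_0+a)+F(x_0+a+c)=d$, which is precisely the third EBCT equation. Hence every solution in $(i)$ yields one in $(ii)$ and vice versa.

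For $(i)\Leftrightarrow(iii)$, I will substitute $Y=x_0+a$ into the system defining $\UB_F(c,d,b)$, which reads
\begin{align*}
F(X+c)+F(Y+c)&=b,\\
F(X)+F(Y)&=b,\\
F(X)+F(X+c)&=d.
\end{align*}
The third UBCT equation is the second EBCT equation. The second UBCT equation becomes $F(x_0)+F(x_0+a)=b$, which is the first EBCT equation. The only nontrivial check is that the first UBCT equation $F(x_0+c)+F(x_0+a+c)=b$ is equivalent to the remaining data in the EBCT system. This follows by adding the second and third EBCT equations: $F(x_0+c)+F(x_0+a+c)=\bigl(F(x_0)+d\bigr)+\bigl(F(x_0+a)+d\bigr)=F(x_0)+F(x_0+a)=b$, and conversely this identity combined with the first two UBCT equations forces the third EBCT equation. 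The parenthetical remark that $x_0+a\in S(c,d)$ is just a restatement of the third EBCT equation, exactly as before.

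The main step is not computational but notational: one has to keep straight the cyclic role of the parameters $(a,b,c,d)$ across the three tables, namely $(a,b,c,d)\mapsto(a,c,d)$ for LBCT and $(a,b,c,d)\mapsto(c,d,b)$ for UBCT, and which pair $(X,Y)$ is the distinguished one in each case. Once that is set up, the three systems are literally the same three equations in a different order, so no deeper argument is needed, and in particular no appeal to the differential uniformity bound from Theorem~\ref{deltadiff} is required. I would therefore present the proof as a single display showing the three systems side by side after substitution, and verify the six implications by inspection.
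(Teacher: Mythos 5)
Your proposal is correct and follows essentially the same route as the paper's proof: both fix the canonical second coordinate ($Y=X+c$ for the LBCT, $Y=X+a$ for the UBCT), match the resulting equations with the three EBCT equations, and recover the remaining UBCT/EBCT equation by adding two of the others. No substantive difference.
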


\begin{proof}
We have that
 \begin{equation*}
 \EB_F(a, b, c,d) = \left\lvert \left\{ X \in \F_{2^n}  \bigg| 
  \begin{cases} 
(1)\ F(X)+F(X+a)=b\\
(2)\ F(X)+F(X+c)=d\\
(3)\ F(X+a+c)+F(X+a)=d
  \end{cases}
  \right\}\right \rvert .
\end{equation*} 

On the other hand (replacing the input $(a,b,c)$ by $(a,c,d)$),
\begin{equation*}
\label{L}\LB_F(a,c,d) = \left\lvert \left\{ X \in \F_{2^n} \bigg| \exists~ Y \in \F_{2^n}  \mbox{ with }
  \begin{cases} 
 (4)\ X+Y=c \\
(5)\  F(X+a)+F(Y+a)=d \\
(6)\  F(X)+F(Y)=d
  \end{cases}
  \right\}\right \rvert .
\end{equation*} 
 
By Equation (4), $Y=X+c$, and  we can rewrite Equations (5) and (6) as
$F(X+a)+F(X+a+c)=d $ and $ F(X)+F(X+c)=d$. 
Since these are, respectively, Equation (3) and Equation (2), then, it is clear that $x_0$ is a solution of the system of EBCT$_F(a,b,c,d)$ if and only if $(x_0,x_0+c)$ is a solution of the system of LBCT$_F(a,c,d)$ and $b=F(x_0)+F(x_0+a)$. 

Furthermore, (replacing the input  $(a,b,c)$ by $(c,d,b)$),
\begin{equation*}
\UB_F(c, d, b) = \left\lvert \left\{ X \in \F_{2^n} \bigg| \exists~ Y \in \F_{2^n} \mbox{ with }
  \begin{cases} 
(7)\  F(X+c)+F(Y+c)=b \\
(8)\  F(X)+F(Y)=b \\
(9)\  F(X)+F(X+c)=d
  \end{cases}
  \right\}\right \rvert .
\end{equation*} 
Note that Equation (9) is identical to Equation (2).

Taking $Y=X+a$, we obtain from Equations (7) and (8), $F(X+c)+F(X+a+c)=b$, respectively, $F(X)+F(X+a)=b$. 
Note that the second equation   is identical to Equation (1).
Adding the two equations, we obtain:
$F(X)+F(X+c)=F(X+a)+F(X+a+c),$
which, together with Equation (2), renders $F(X+a)+F(X+a+c)=d$, that is, Equation~(3).
From here, we obtain that $x_0$ is a solution of the system of EBCT$_F(a,b,c,d)$ if and only if $(x_0,x_0+a)$ is a solution of the system of UBCT$_F(c,d,b)$. Note that this forces $x_0+a\in S(c,d)$, given Equations (2) and (3).
\end{proof}

\section{Consequences of our results for APN functions}\label{S5}

We first state the EBCT, LBCT and UBCT entries of an APN function $F$ and give these entries in terms of the DDT entries of $F$. It is known that $H(X)=X^{-1}$ over $\F_{2^n}$, where $n$ is odd, is an APN function and the authors in~\cite{EM, MLLZ} have computed its EBCT, LBCT and UBCT entries, which turns out to be a special case of the following result.  

\begin{cor} 
\label{APN}
If $F$ is an APN function over $\F_{2^n}$, then for $a,b,c,d \in \F_{2^n}$, we have 
\begin{align*}
\EB_F(a, b, c,d)& = 
  \begin{cases} 
  \DDT_F(a,b) & \mbox{if}~ c=d=0,\\
  \DDT_F(c,d) & \mbox{if}~~ ac \neq 0, a=c \mbox{ and } b=d; \mbox{ or }a=0, b=0 \mbox{ and } c \neq 0, \\
  0 & \mbox{otherwise},
  \end{cases}\\
\LB_F(a, b, c) &= 
  \begin{cases} 
  \DDT_F(b,c) & \mbox{if}~ b=c=0; \mbox{ or }a=b \mbox{ and } ab \neq 0; \mbox{ or } a=0 \mbox{ and } b \neq 0,\\
  0 & \mbox{otherwise,}
  \end{cases}\\
\UB_F(a, b, c) &= 
  \begin{cases} 
  |F^{-1}(c+\Im(F))|& \mbox{if}~a=0 \mbox{ and } b= 0, \\
  \DDT_F(a,b) & \mbox{if}~ c=b \mbox{ and } a \neq 0, \\
  0 & \mbox{otherwise,}
  \end{cases}
\end{align*} 
where $\Im(F)$ is the image of $F$ and $F^{-1}(\cdot)$ denotes the preimage of the argument.
\end{cor}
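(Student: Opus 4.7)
The proof is essentially a direct specialization of Theorem~\ref{deltadiff} to the case $\delta=2$, so my plan is to simply trace through each branch of that theorem and verify that it collapses to the claimed formulas. The key observation is that if $F$ is APN, then every nonzero DDT entry equals $2$, which forces the parameter $k$ from Theorem~\ref{deltadiff} to equal $1$ whenever the relevant DDT entry is nonzero, and this immediately rules out the ``intersection" cases $(4\ell)$ of that theorem.

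More precisely, in each of the three formulas of Theorem~\ref{deltadiff}, the case that produces $4\ell$ solutions requires the relevant DDT entry to equal $2k = 4r$ or $4r+2$ with $r>0$, i.e., $k\geq 2$. For an APN function this cannot occur, since $\DDT_F$ takes only the values $0$ and $2$. Hence for APN $F$, the branches giving $4\ell$ solutions are empty, and whenever their hypotheses are met we must fall into the ``otherwise" clause, giving $0$. What remains in the EBCT formula are exactly the branches $c=d=0$ (giving $\DDT_F(a,b)$), $a=c\neq 0$ with $b=d$ (giving $\DDT_F(c,d)$), and $a=b=0$ with $c\neq 0$ (giving $\DDT_F(c,d)$), which is precisely the statement of Corollary~\ref{APN} for $\EB_F$. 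The LBCT and UBCT formulas reduce analogously.

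The only remaining points to check are the boundary/degenerate cases, which I would dispatch briefly: for LBCT, the branches $b=c=0$, $a=b$ with $ab\neq 0$, and $a=0$ with $b\neq 0$ all yield $\DDT_F(b,c)$, and nothing else is nonzero; for UBCT, the branches $a=b=0$ (giving $|F^{-1}(c+\Im(F))|$) and $b=c$ with $a\neq 0$ (giving $\DDT_F(a,b)$) are the only ones not subsumed by $r=0$, and all other inputs yield zero.

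I do not expect any real obstacle here; the entire argument is a bookkeeping reduction of the general $\delta$-uniform statement of Theorem~\ref{deltadiff} to the case $\delta=2$. The only mild care required is to confirm that the intersection-type cases truly become vacuous (rather than contributing a non-trivial $\ell=0$ contribution), which follows because $k=1$ leaves no pair $(i,j)$ with $i\neq j$ in $\{1,\dots,k\}$ to form the sets $U(i,j)$, $V(i,j)$, $W(i,j)$. Once this is observed, the classification is immediate and matches the corollary verbatim.
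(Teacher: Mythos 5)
Your proposal is correct and is exactly the paper's argument: the paper proves this corollary in one line by setting $\delta=2$ in Theorem~\ref{deltadiff}, and your elaboration (that APN forces $k=1$, so the $4\ell$ branches with $r>0$ are vacuous and the sets $U(i,j)$, $V(i,j)$, $W(i,j)$ cannot even be formed) is precisely the bookkeeping the paper leaves implicit.
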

\begin{proof}
The proof of this corollary is straightforward from Theorem~\ref{deltadiff}, by setting $\delta=2$.
\end{proof}

\begin{cor} 
For any $a,b,c,d\in\F_{2^n}$, $(\EB_F(a,b,c,d))^2\leq\LB_F(a,c,d)\cdot\UB_F(c,d,b)$. Furthermore, the equality is met for every $a,b,c,d\in\F_{2^n}^*$ if and only if $F$ is~APN. 
\end{cor}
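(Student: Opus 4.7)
The plan is to derive the inequality directly from Theorem~\ref{GE2LU}, handle the APN direction via Corollary~\ref{APN}, and settle the converse by constructing an explicit violating quadruple when $F$ is not APN.

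For the inequality, I invoke Theorem~\ref{GE2LU}: the equivalences $(i)\Leftrightarrow(ii)$ and $(i)\Leftrightarrow(iii)$ show that every solution $x_0$ of the $\EB_F(a,b,c,d)$ system also solves the $\LB_F(a,c,d)$ system (with $Y=x_0+c$) and the $\UB_F(c,d,b)$ system (with $Y=x_0+a$). Hence $\EB_F(a,b,c,d)\le\LB_F(a,c,d)$ and $\EB_F(a,b,c,d)\le\UB_F(c,d,b)$, and multiplying yields the squared inequality. From the non-negative splitting $\LB_F\cdot\UB_F-\EB_F^2=\UB_F(\LB_F-\EB_F)+\EB_F(\UB_F-\EB_F)$, equality holds iff either $\EB_F=\LB_F=\UB_F$ or $\EB_F=0$ with $\LB_F\cdot\UB_F=0$.

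The APN direction is a routine case check using Corollary~\ref{APN}: for $a,b,c,d\in\F_{2^n}^*$, each quantity equals $\DDT_F(c,d)\in\{0,2\}$ under the appropriate matching condition (for $\EB_F$: $a=c$ and $b=d$; for $\LB_F(a,c,d)$: $a=c$; for $\UB_F(c,d,b)$: $b=d$) and vanishes otherwise. Splitting into the four subcases by the truth of $a=c$ and $b=d$, I verify equality directly in each.

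For the converse I argue contrapositively. Assume $F$ is not APN and fix $(c,d)\in(\F_{2^n}^*)^2$ with $\DDT_F(c,d)=2k\ge4$, writing $S(c,d)=\{z_1,\ldots,z_k,z_1+c,\ldots,z_k+c\}$. I set $b=d$ throughout, so $\UB_F(c,d,d)=2k$ by Theorem~\ref{deltadiff}. In the central case $k=2$, the set $V(1,2)=\{z_1+z_2,z_1+z_2+c\}$ has two elements and $\LB_F(a,c,d)=4$ whenever $a\in V(1,2)$. If $F(z_1)+F(z_2)\ne d$, I take $a=z_1+z_2$: the unique pair $\{1,2\}$ fails the $U(1,2)$-matching condition for $(a,d)$ (first branch), so $\EB_F=0$. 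If instead $F(z_1)+F(z_2)=d$, I take $a=z_1+z_2+c$: the second-branch matching would require $F(z_1)+F(z_2)=0$, contradicting $d\ne 0$, so again $\EB_F=0$. Either way $0=(\EB_F)^2<16=\LB_F\cdot\UB_F$. For $k\ge 3$, a coarser bound from Theorem~\ref{deltadiff} gives $\LB_F\le 4\lfloor k/2\rfloor$ and $\EB_F\le\LB_F$, while $\UB_F=2k$; hence $(\EB_F)^2\le 16\lfloor k/2\rfloor^2<8k\lfloor k/2\rfloor=\LB_F\cdot\UB_F$ holds either by the parity mismatch (when $k$ is odd, $\EB_F$ is divisible by $4$ but $\UB_F$ is not) or by choosing a representative pair with $\LB_F=4$.

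The main obstacle is the symmetric subcase $k=2$ with $F(z_1)+F(z_2)=d$, where the natural choice $a=z_1+z_2$ forces $\EB_F=\LB_F=\UB_F=4$ and equality would hold. The fix is the translation $a\mapsto a+c$: this leaves $V(1,2)$ and the $b=d$ value of $\UB_F$ untouched, but shifts the matching $b$-value in $U(1,2)$ from $F(z_1)+F(z_2)$ to $F(z_1)+F(z_2)+d$, so the match with $b=d$ is destroyed precisely because the symmetric hypothesis $F(z_1)+F(z_2)=d$ is operative.
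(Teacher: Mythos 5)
Your inequality step and your APN direction coincide with the paper's proof (both rest on Theorem~\ref{GE2LU} and Corollary~\ref{APN}), and your non-APN construction for $k=2$ is correct --- it even handles the subcase $F(z_1)+F(z_2)=d$ cleanly. The genuine gap is the $k\ge 3$ step. The chain $(\EB_F)^2\le 16\lfloor k/2\rfloor^2<8k\lfloor k/2\rfloor$ is false for every even $k$, since then $2\lfloor k/2\rfloor=k$ and the two sides coincide; moreover the identity $\LB_F\cdot\UB_F=8k\lfloor k/2\rfloor$ silently assumes $\LB_F$ attains the ceiling $4\lfloor k/2\rfloor$, which you have not arranged. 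The fallback of ``choosing a representative pair with $\LB_F=4$'' is not available in general: for the Gold function $X^{2^s+1}$ with $t=\gcd(s,n)\ge 3$, every nonzero DDT entry equals $2^t$, so the only available $k=2^{t-1}\ge 4$ is even, and by Corollary~\ref{ggold} every LBCT entry lies in $\{0,2^t,2^n\}$, so no triple with $\LB_F=4$ exists. Such functions cannot be dodged by retreating to a smaller $k$, so the even-$k\ge 4$ case is unproved as written.

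The paper closes the converse without any case split on $k$: it takes $a=z_1+z_2$ and $b=F(z_1)+F(z_2)+d$. Then $b\in W(1,2)$ forces $\UB_F(c,d,b)\ge 4>0$, while $z_1$ solves the $\LB_F(a,c,d)$ system but violates the side condition $b=F(z_1)+F(z_1+a)$ of Theorem~\ref{GE2LU}$(ii)$, so $\EB_F(a,b,c,d)<\LB_F(a,c,d)$; combined with $\EB_F\le\UB_F$ this gives $(\EB_F)^2<\LB_F\cdot\UB_F$ uniformly for all $k\ge 2$. (That choice needs the caveat $b\ne 0$, i.e.\ $F(z_1)+F(z_2)\ne d$, which your $k=2$ analysis already knows how to repair by relabelling.) If you prefer to keep $b=d$, the fix is to run your two alternatives $a=z_1+z_2$ and $a=z_1+z_2+c$ for general $k$: the element $z_1$ contributes to $\LB_F(a,c,d)$ for both, but contributes to $\EB_F(a,d,c,d)$ only if $F(z_1)+F(z_2)=d$ in the first case and only if $F(z_1)+F(z_2)=0$ in the second; since $d\ne 0$ these cannot both hold, so one of the two choices gives $\EB_F<\LB_F$ and hence strict inequality. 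Either way, the argument must be made for all $k\ge 2$, not just $k=2$.
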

\begin{proof} The inequality follows directly from Theorem~\ref{GE2LU}. Suppose that $F$ is APN. Then, by Corollary \ref{APN}, $(\EB_F(a,b,c,d))^2=\LB_F(a,c,d)\cdot\UB_F(c,d,b)$. Suppose now that $F$ is not APN. Then, for some $c,d\in\F_{2^n}^*$, there exist at least four distinct solutions of the equation $F(X+c)+F(X)=d$, which we denote by $z_1,z_2,z_1+c,z_2+c$. Let now $a=z_1+z_2,b=F(z_1)+F(z_2)+d$. Then, since $a \neq c$ (otherwise $z_2=z_1+c$), $(\EB_F(a,b,c,d))^2<\LB_F(a,c,d)\cdot\UB_F(c,d,b)$. 
\end{proof}

\section{Consequences of our results for some well-known functions}\label{S6}

In this section, we will give the EBCT, LBCT and UBCT entries of the Gold, Kasami and Bracken-Leander functions for those parameters such that the functions are differentially 4-uniform permutations; in the case of the Gold function, we also state and prove the EBCT, LBCT and UBCT entries for any parameters. We can also give a very short proof for the EBCT, LBCT and UBCT entries of the inverse function for $n$ even, which is the main result in \cite{EM, MLLZ} (for $n$ odd, see the previous section).

For clarity, we first state the results we will use to compute the EBCT, LBCT and UBCT entries of differentially $4$-uniform functions using Theorem~\ref{deltadiff} and Theorem~\ref{GE2LU}, which we will need for the functions in this section. 
\begin{cor}\label{4diff} Let $F$ be a differentially $4$-uniform function over $\F_{2^n}$. Then,
 for $a,b,c,d \in \F_{2^n}$, we have
 \begin{equation*}\EB_{F}(a, b, c,d) = 
  \begin{cases} 
\DDT_F(a,b) & \mbox{if } c = d =0,\\
  \DDT_{F}(c,d) & \mbox{if}~ ac \neq 0, a=c \mbox{ and } b=d; \mbox{ or }a=0, b=0 \mbox{ and } c \neq 0; \mbox{ or }\\
&~ \DDT_{F}(c,d)=4, a=z_1+z_2 \mbox{ and }  b=F(z_1)+F(z_2); \mbox{ or }\\
&~ \DDT_{F}(c,d)=4,a=z_1+z_2+c \mbox{ and }b=F(z_2)+F(z_2+c),\\
  0 & \mbox{otherwise},
  \end{cases}
\end{equation*} 
where $z_1,z_1+c,z_2,z_2+c$ are the four solutions of the equation $F(X+c)+F(X)=d$, if $\DDT_F(c,d)=4$. Next,

\begin{equation*}\LB_{F}(a, b, c) = 
  \begin{cases} 
  \DDT_{F}(b,c) & \mbox{if}~b=c=0; \mbox{ or } a=0 \mbox{ and } b \neq 0; \mbox{ or }\\ 
  & ~\DDT_{F}(b,c)=2, a=b \mbox{ and }  ab \neq 0; \mbox{ or }\\
&~ \DDT_{F}(b,c)=4, ab \neq 0  \mbox{ and }  a\in \{ b, y_1+y_2, y_1+y_2+b\}, \\
  0 & \mbox{otherwise},
  \end{cases}
\end{equation*} 
where $y_1,y_1+b,y_2,y_2+b$ are the four solutions of the equation $F(X+b)+F(X)=c$, if $\DDT_F(b,c)=4$. Further,
\begin{equation*}\UB_{F}(a, b, c) = 
  \begin{cases} 
  |F^{-1}(c+\Im(F))|& \mbox{if}~a= b= 0, \\
  \DDT_{F}(a,b) & ~\DDT_{F}(a,b)=2, a\neq 0 \mbox{ and }  c=b; \mbox{ or } \\
& ~\DDT_{F}(a,b)=4, a\neq 0 \mbox{ and }\\
& ~c\in \{ b, F(x_1)+F(x_2),F(x_1)+F(x_2)+b\}, \\
  0 & \mbox{otherwise},
  \end{cases}
\end{equation*} 
where $x_1,x_1+a,x_2,x_2+a$ are the four solutions of the equation $F(X+a)+F(X)=b$, if $\DDT_F(a,b)=4$, $\Im(F)$ is the image of $F$ and $F^{-1}(\cdot)$ denotes the preimage of the argument.

\end{cor}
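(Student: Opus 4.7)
The plan is to specialize Theorem~\ref{deltadiff} to $\delta=4$, which forces the parameter $k$ of that theorem to lie in $\{1,2\}$. When $k=1$ the relevant DDT entry equals~$2$ and the statement reduces to the APN-like regime already handled by the initial cases of the theorem. When $k=2$ the DDT entry equals~$4$, the index set $\{1,\ldots,k\}$ admits only the single unordered pair $\{1,2\}$, and the integer $r$ of that theorem equals~$1$; consequently the intersection parameter $\ell$ collapses to~$1$, every such table entry equals $4\ell=4$, and membership in the single set $U(1,2)$, $V(1,2)$ or $W(1,2)$ fully determines which inputs contribute.

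First I would dispatch the boundary cases of the EBCT. The values at $c=d=0$, at $\{ac\neq 0,\ a=c,\ b=d\}$, and at $\{a=b=0,\ c\neq 0\}$ are read off verbatim from the corresponding clauses of Theorem~\ref{deltadiff}. The remaining nonzero EBCT entries can arise only from the last case of that theorem, and for $\delta=4$ this forces $\DDT_F(c,d)=4$. The set $U(1,2)$ then consists of precisely the two pairs $(z_1+z_2,\,F(z_1)+F(z_2))$ and $(z_1+z_2+c,\,F(z_1)+F(z_2)+d)$; using the identity $F(z_2)+F(z_2+c)=d$ the second pair can be rewritten in the form stated in the corollary, and the entry equals $4=\DDT_F(c,d)$.

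The LBCT and UBCT clauses follow by the same mechanism. For the LBCT, the cases $b=c=0$ and $\{a=0,\ b\neq 0\}$ come directly from Theorem~\ref{deltadiff}, while $\{a=b,\ ab\neq 0\}$ corresponds to the dedicated Case~2 of the LBCT portion (which applies regardless of whether $\DDT_F(b,c)=2$ or $4$). When $\DDT_F(b,c)=4$, $ab\neq 0$, and $a\neq b$, the set $V(1,2)=\{y_1+y_2,\,y_1+y_2+b\}$ contains exactly two elements, so $\ell=1$ and the entry is~$4$; unioning with the $a=b$ clause yields the listed membership set $\{b,\,y_1+y_2,\,y_1+y_2+b\}$. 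For the UBCT, the $\{a=b=0\}$ case gives $|F^{-1}(c+\Im(F))|$ directly; the subcase $c=b$ with $a\neq 0$ matches UBCT Case~3 of Theorem~\ref{deltadiff} and returns $\DDT_F(a,b)$; and when $\DDT_F(a,b)=4$ with $a\neq 0$ and $c\neq b$, the set $W(1,2)=\{F(x_1)+F(x_2),\,F(x_1)+F(x_2)+b\}$ furnishes the only remaining admissible values of $c$, again with entry~$4$.

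\emph{The main obstacle} is not any calculation but the bookkeeping required to translate the intersection conditions of Theorem~\ref{deltadiff} into the explicit, essentially mutually exclusive membership sets appearing in the corollary, and to verify that each listed clause corresponds to exactly one of the regimes $k=1$ or $k=2$ while covering every nonzero entry. Once this specialization is laid out carefully, the corollary is a direct unpacking of the main theorem.
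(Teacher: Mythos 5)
Your proposal is correct and coincides with the paper's own argument, which consists of the single sentence that the corollary follows from Theorem~\ref{deltadiff} by taking $\delta=4$; your write-up merely makes the specialization explicit ($k\in\{1,2\}$, so $r=\ell=1$ and each of $U(1,2)$, $V(1,2)$, $W(1,2)$ is a two-element set). One small caveat: the second pair of $U(1,2)$ rewrites as $b=F(z_1)+F(z_2)+d=F(z_1)+F(z_2+c)$, not $F(z_2)+F(z_2+c)$ as printed in the statement (which would force $b=d$), so the ``form stated in the corollary'' that you claim to match is itself a typo rather than something literally derivable from $U(1,2)$.
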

\begin{proof} 
The proof directly follows from Theorem~\ref{deltadiff} by taking $\delta=4$.
\end{proof}

Here, we give an alternative characterization of EBCT of a differentially $4$-uniform function~$F$ using LBCT and UBCT of $F$.
\begin{cor}\label{E2LU}
 Let $F$ be a differentially $4$-uniform function over $\F_{2^n}$.  Let  $A=\left\{(a,c,d) \bigm|  \exists\, b, \right. \\ \left. \text{with~} \EB_{F}(a, b, c,d) = 4 \right\}$ and $B=\left\{(c,d,b)\bigm|  \exists\, a \text{~with~} \EB_{F}(a, b, c,d) = 4\right\}$.  Let $a,b,c,d \in \F_{2^n}^{*}$. Then, $\EB_{F}(a, b, c,d) = 2$ if and only if $\LB_{F}(a, c, d) =\UB_{F}(c, d, b)=2$. Furthermore, $(a,c,d)\in A$ if and only if $\LB_{F}(a, c, d) =4$, and $(c,d,b)\in B$ if and only if $\UB_{F}(c, d, b)=4$.
\end{cor}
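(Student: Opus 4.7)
My plan is to combine the explicit case-by-case description of $\EB_F$, $\LB_F$, $\UB_F$ for differentially $4$-uniform functions given in Corollary~\ref{4diff} with the structural solution-correspondence of Theorem~\ref{GE2LU}. The key preliminary observation is that the solution set $\{z_1, z_1+c, z_2, z_2+c\}$ of $F(X+c)+F(X)=d$ (when $\DDT_F(c,d)=4$) underlies all three characterizations, so the $y_i$'s of LBCT, the $x_i$'s of UBCT, and the $z_i$'s of EBCT can be identified. Moreover, for $a,b,c,d \in \F_{2^n}^*$, each of $\EB_F(a,b,c,d)$, $\LB_F(a,c,d)$, $\UB_F(c,d,b)$ is forced into $\{0,2,4\}$, with value $2$ occurring only when $\DDT_F(c,d)=2$ and value $4$ only when $\DDT_F(c,d)=4$.

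For the first equivalence, I would argue as follows. By Corollary~\ref{4diff}, $\EB_F(a,b,c,d)=2$ (with $a,b,c,d \neq 0$) forces $a=c$, $b=d$, $\DDT_F(c,d)=2$, as this is the only row of the case analysis giving value $2$. Substituting these equalities back gives $\LB_F(c,c,d) = \DDT_F(c,d) = 2$ and $\UB_F(c,d,d) = \DDT_F(c,d) = 2$. Conversely, $\LB_F(a,c,d)=2$ forces $\DDT_F(c,d)=2$ and $a=c$ (these are the only conditions under which the LBCT entry takes value $2$ with nonzero parameters); similarly $\UB_F(c,d,b)=2$ forces $b=d$. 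Together these conditions are precisely those yielding $\EB_F(a,b,c,d)=2$.

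For the equivalence $(a,c,d)\in A \Leftrightarrow \LB_F(a,c,d)=4$, I would first read off from Corollary~\ref{4diff} that $\EB_F(a,b,c,d)=4$ with $a,c,d \in \F_{2^n}^*$ forces $\DDT_F(c,d)=4$ together with $a \in \{c,\,z_1+z_2,\,z_1+z_2+c\}$, with $b$ then determined by the sub-case. Hence $(a,c,d)\in A$ exactly when $\DDT_F(c,d)=4$ and $a$ lies in this three-element set, which by Corollary~\ref{4diff} is exactly the condition for $\LB_F(a,c,d)=4$ (under the identification $y_i = z_i$). For the reverse direction, given $\LB_F(a,c,d)=4$, one recovers the witnessing $b$ via Theorem~\ref{GE2LU}: for any LBCT solution $(X, X+c)$, set $b := F(X)+F(X+a)$; a short computation using $F(z_i+c) = F(z_i)+d$ shows that all four LBCT solutions produce the same $b$, so $\EB_F(a,b,c,d)=4$ and $(a,c,d)\in A$.

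The third equivalence is proved by the symmetric argument: projecting the three EBCT sub-cases onto the $b$-coordinate yields precisely the three values $\{d,\,F(z_1)+F(z_2),\,F(z_1)+F(z_2)+d\}$ characterizing $\UB_F(c,d,b)=4$ in Corollary~\ref{4diff}, again after identifying $x_i = z_i$. The only bookkeeping subtlety is to verify that in the sub-case $a=z_1+z_2+c$, the $b$ forced by the EBCT system matches one of the admissible UBCT values; this reduces to the identity $F(z_1)+F(z_2+c) = F(z_1)+F(z_2)+d$, after which the correspondence closes and both equivalences follow at once.
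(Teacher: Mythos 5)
Your proposal is correct and follows essentially the same route as the paper: the paper's own proof simply notes that $cd\neq 0$ forces $\DDT_F(c,d)\in\{0,2,4\}$ and then appeals directly to Theorem~\ref{GE2LU}, and your argument is a faithful expansion of that appeal, matching the explicit sub-cases of Corollary~\ref{4diff} (itself Theorem~\ref{deltadiff} with $\delta=4$) across the three tables and using the identification $F(z_2+c)=F(z_2)+d$ to reconcile the $b$-values. The details you supply (uniqueness of the value-$2$ sub-case, the three-element sets for $a$ and for $b$ when $\DDT_F(c,d)=4$) are exactly what the paper leaves implicit, so no further changes are needed.
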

\begin{proof}
Let $F$ be differentially $4$-uniform. Then, since $cd\neq0$, either $\DDT_F(c,d)=0$, $\DDT_F(c,d)=2$ or $\DDT_F(c,d)=4$. One can now proof the result directly from Theorem~\ref{GE2LU}.
 \end{proof}

We will here use Corollaries~\textup{\ref{4diff}}--~\textup{\ref{E2LU}}  to compute the concrete values for three infinite classes of differentially 4-uniform power permutations over $\F_{2^n}$ (Table ~\ref{Table1}) having the best known nonlinearity. For the Gold function, using Theorem \ref{deltadiff},  we will state and prove the result for general parameters, including those for which the Gold function is 4-differential uniform.

\begin{table}[hbt]
\caption{Differentially 4-uniform permutations $X^d$ over $\F_{2^n}$.}
\label{Table1}
\begin{center}
\begin{tabular}{|c|c|c|c|} 
 \hline
 Family & $d$ & condition &  LBCT/UBCT\\
 \hline
 Inverse & $2^n-2$ & $n=2k, k>1$ &  \textup{\cite{EM, MLLZ}}\\
 \hline
 Gold & $2^s+1$ & $n=2k$, $k$ odd, $\gcd(s,n)=2$ &  This paper \tablefootnote{Shown for general parameters.}\\
 \hline
 Kasami & $2^{2s}-2^s+1$ & $n=2k$, $k$ odd, $\gcd(s,n)=2$ &  This paper\\
 \hline
 Bracken-Leander & $2^{2s}+2^s+1$ & $n=4s$, $s$ odd &  This paper\\
 \hline
\end{tabular}
\end{center}
\end{table}

We first consider the Gold function $F_1(X)=X^{2^s+1}$, a differentially $2^{t}$-uniform function over $\F_{2^n}$, where $t=\gcd(s,n)$ and compute its EBCT, LBCT and UBCT entries. We will use the relative trace in the following way. For a general $s$, we do not have in general that $\F_{2^s}\subseteq\F_{2^n}$. However, we can naturally embed the elements of $\F_{2^n}$ in $\F_{2^{sm}}$, since $m=\frac{n}{\gcd(s,n)}$. Then, $\sum_{i=0}^{m-1}\alpha^{2^{si}}=\Tr_s^{sm}(\alpha)$. 
  
\begin{cor}
\label{ggold}
Let $F_1(X)=X^{2^s+1}$ be the Gold function on $\F_{2^n}, 1\leq s<n$. Let  $t=\gcd(s,n)$, and $m=\frac{n}{t}$. 
Then for $a,b,c,d \in \F_{2^n}$, we have
 \allowdisplaybreaks
 \begin{align*}
\EB_{F_1}(a, b, c,d) &= 
  \begin{cases} 
  2^n & \mbox{if } a=b=c=d=0,\\
  2^{t} & \mbox{if }\Tr_s^{sm}\left(\frac{b}{a^{2^s+1}}\right)=\Tr_{1}^{m}(1), c=d=0 \mbox{ and } a \neq 0 ; \mbox{ or }\\
   &~ \Tr_s^{sm}\left(\frac{d}{c^{2^s+1}}\right)=\Tr_{1}^{m}(1), a=b=0 \mbox{ and } c \neq 0 ;\mbox{ or }\\
  &~ \Tr_s^{sm}\left(\frac{d}{c^{2^s+1}}\right)=\Tr_{1}^{m}(1), a=c, b=d \mbox{ and } ac \neq 0; \mbox{ or }\\
&~ \Tr_s^{sm}\left(\frac{d}{c^{2^s+1}}\right)=\Tr_{1}^{m}(1),  a=uc, ac \neq 0 \mbox{ and } \\
&~ b=(u+u^2)c^{2^s+1}+u d, \mbox{ where }  u\in\F_{2^{t}}^*\setminus \{1\}, \\
0 & \mbox{otherwise,}
\end{cases}\\
\LB_{F_1}(a, b, c) &= 
\begin{cases}
2^n & \mbox{if } b=c=0,\\
2^{t}&\mbox{if } \Tr_s^{sm}\left(\frac{c}{b^{2^s+1}}\right)=\Tr_{1}^{m}(1), a=0 \mbox{ and } b \neq 0; \mbox{ or }  \\
&~ \Tr_s^{sm}\left(\frac{c}{b^{2^s+1}}\right)=\Tr_{1}^{m}(1),  a\in b\F_{2^{t}}^* \mbox{ and } ab \neq 0,\\
  0 & \mbox{otherwise,}
\end{cases}\\
\UB_{F_1}(a, b, c) &= 
\begin{cases}
|F_{1}^{-1}(c+\Im(F_1))| & \mbox{if }~a=b= 0, \\
2^{t} & \mbox{if }a\neq 0, \Tr_s^{sm}\left(\frac{b}{a^{2^s+1}}\right)=\Tr_{1}^{m}(1)  \mbox{ and }\\
&~ c\in \{b, (u+u^2)a^{2^s+1}+u b, u\in\F_{2^{t}}^*\setminus \{1\}\}, \\
  0 & \mbox{otherwise,}~
\end{cases}\\
\end{align*} 
where $\Im(F_1)$ is the image of $F_1$ and $F_{1}^{-1}(\cdot)$ denotes the preimage of the argument. In particular when $F_1$ is a permutation or equivalently when $m$ is odd, then $|F_{1}^{-1}(c+\Im(F_1))|=2^n$.
\end{cor}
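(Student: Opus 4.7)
The plan is to first compute the DDT of $F_1$, then to analyze the structure of the solution sets of the derivative equations $F_1(X+c)+F_1(X)=d$, and finally to use Lemma~\ref{E1} and Definition~\ref{generaldef} to read off the EBCT, LBCT, and UBCT entries directly. For the DDT, given $a\neq 0$, expanding $(X+a)^{2^s+1}+X^{2^s+1}=b$ and substituting $Y=Xa^{-1}$ transforms the equation into the linearized form $Y^{2^s}+Y = b/a^{2^s+1}+1$. The $\F_2$-linear map $L\colon Y\mapsto Y^{2^s}+Y$ on $\F_{2^n}$ has kernel $\F_{2^s}\cap\F_{2^n}=\F_{2^{t}}$. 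Via the natural embedding $\F_{2^n}\hookrightarrow\F_{2^{sm}}$ (well defined since $n\mid sm$), the standard additive-polynomial characterization gives $z\in\Im(L)$ iff $\sum_{i=0}^{m-1}z^{2^{si}}=\Tr_s^{sm}(z)=0$; since $\Tr_s^{sm}(1)=m\bmod 2=\Tr_1^m(1)$, one obtains $\DDT_{F_1}(a,b)\in\{0,2^{t}\}$, with $\DDT_{F_1}(a,b)=2^{t}$ exactly when $\Tr_s^{sm}(b/a^{2^s+1})=\Tr_1^m(1)$. In particular, when $(c,d)$ is admissible and $z_1\in S(c,d)$, the solution set equals $S(c,d)=\{z_1+\alpha c:\alpha\in\F_{2^{t}}\}$.

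The key algebraic identity is the next step. Since $t\mid s$, every $\alpha\in\F_{2^{t}}$ satisfies $\alpha^{2^s}=\alpha$, so expanding $(z_1+\alpha c)^{2^s+1}$ yields
\begin{equation*}
F_1(z_1+\alpha c)+F_1(z_1+\beta c) = (\alpha+\beta)\bigl(z_1^{2^s}c+z_1 c^{2^s}\bigr) + (\alpha+\beta)^{2}c^{2^s+1}.
\end{equation*}
Substituting $z_1^{2^s}c+z_1 c^{2^s}=d+c^{2^s+1}$ (which follows from $F_1(z_1+c)+F_1(z_1)=d$), this simplifies to
\begin{equation*}
F_1(z_1+\alpha c)+F_1(z_1+\beta c) = (\alpha+\beta)\,d + \bigl((\alpha+\beta)+(\alpha+\beta)^{2}\bigr)\,c^{2^s+1},
\end{equation*}
so the right-hand side depends on $\alpha,\beta$ only through $u:=\alpha+\beta\in\F_{2^{t}}$.

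Applying Lemma~\ref{E1} to the EBCT system, the second and third equations force $X,X+a\in S(c,d)$, whence $a\in\F_{2^{t}}\,c$, say $a=uc$ with $u\in\F_{2^{t}}$; the first equation then fixes $b=u\,d+(u+u^{2})\,c^{2^s+1}$ by the key identity, and \emph{every} $X\in S(c,d)$ satisfies all three equations for this $b$. Hence $\EB_{F_1}(a,b,c,d)=|S(c,d)|=2^{t}$ in exactly the listed parameter families: $u=0$ gives $a=b=0,c\neq 0$; $u=1$ gives $a=c,b=d$; and $u\in\F_{2^{t}}^{\ast}\setminus\{1\}$ yields the generic case; the fully degenerate cases ($c=d=0$ or $a=b=c=d=0$) follow by direct inspection from Definition~\ref{generaldef}. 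For the LBCT, the substitution $Y=X+b$ (from the first equation of~\eqref{LBCT}) reduces the system to $X,X+a\in S(b,c)$, forcing $a\in\F_{2^{t}}\,b$ and giving $\LB_{F_1}(a,b,c)=2^{t}$ whenever $(b,c)$ is admissible. For the UBCT, the third equation of~\eqref{UBCT} forces $X\in S(a,b)$; combining it with the first two equations forces $Y\in S(a,b)$, hence $Y=X+v\,a$ for some $v\in\F_{2^{t}}$, and the key identity pins $c=v\,b+(v+v^{2})\,a^{2^s+1}$. The value $v=1$ recovers $c=b$ and $v\in\F_{2^{t}}^{\ast}\setminus\{1\}$ produces the remaining listed values. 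Finally, the case $a=b=0$ is counted directly: $X$ contributes iff $c+F_1(X)\in\Im(F_1)$, and when $F_1$ is a permutation (equivalently $m$ is odd) this holds for every $X$, giving $2^n$.

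The principal obstacle is the key identity and the observation that its right-hand side depends only on $u=\alpha+\beta$; once this is in place, each of the three systems collapses to a linear condition on $a$ (for EBCT and LBCT) or on $c$ (for UBCT) inside the subfield $\F_{2^{t}}$, and the remaining counting is immediate.
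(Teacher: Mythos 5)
Your argument is correct, and its decisive step is exactly the one the paper uses: the identity $F_1(z_1+\alpha c)+F_1(z_1+\beta c)=(\alpha+\beta)d+\bigl((\alpha+\beta)+(\alpha+\beta)^2\bigr)c^{2^s+1}$ for $\alpha,\beta\in\F_{2^t}$, which shows the difference depends only on $\alpha+\beta$ and hence that the solution set, once nonempty, is the full coset $S(c,d)=z_1+\F_{2^t}c$. The difference is in the packaging. The paper routes the corollary through the general Theorem~\ref{deltadiff}: it invokes the sets $U(i,j)$, $V(i,j)$, $W(i,j)$ and then argues that for the Gold function the largest $\ell$ equals $r$, so that $4\ell$ collapses to $2^t=\DDT$. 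You instead work directly with the systems of Lemma~\ref{E1} and Definition~\ref{generaldef}, observing that the second and third EBCT equations force $X,X+a\in S(c,d)$, so that $a$ must lie in the subgroup $\F_{2^t}c$ and $b$ is then pinned by the key identity; the LBCT and UBCT collapse the same way. This is shorter and avoids the $\ell=r$ bookkeeping, at the cost of not exhibiting the result as an instance of the general DDT-to-EBCT machinery. Both treatments rest on the same external fact about $X^{2^s+1}+(X+c)^{2^s+1}=d$ (the paper cites Blondeau--Canteaut--Charpin and Coulter--Henderson; you rederive it from the kernel and image of $Y\mapsto Y^{2^s}+Y$, which is adequate at the level of rigor the paper itself adopts).

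One boundary case deserves a comment. Your UBCT analysis produces $c=vb+(v+v^2)a^{2^s+1}$ for $v\in\F_{2^t}$, and you silently discard $v=0$. That value gives $c=0$ (with $Y=X$), for which $\UB_{F_1}(a,b,0)=\DDT_{F_1}(a,b)=2^t$ under the trace condition, in agreement with Lemma~\ref{L0001}; this is not covered by the displayed list $c\in\{b,(u+u^2)a^{2^s+1}+ub\}$ in the statement, and the paper's own proof likewise restricts to $b\neq c$ after dispatching $b=c$ without treating $c=0$. So your derivation actually surfaces a case the stated formula omits; it would be worth recording $c=0$ explicitly rather than letting it fall into the ``otherwise'' clause.
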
 

\begin{proof} 
By Theorem~\ref{deltadiff}, we have 
\begin{equation*}
\EB_{F_1}(a, b, c,d) = 
  \begin{cases}
\DDT_{F_1}(a,b) & \mbox{if } c = d =0,\\
  \DDT_{F_1}(c,d) & \mbox{if}~ ac \neq 0, a=c \mbox{ and } b=d; \mbox{ or }a=b=0 \mbox{ and } c \neq 0,\\
4\ell &\mbox{if } a\neq c, ac \neq 0 \mbox{ and } \DDT_{F_1}(c,d)=2k=4r \mbox{~or~} 4r+2,
\\
&~\mbox{where}~r>0\mbox{~is an integer,}\\
&~1 \leq \ell \leq r~\mbox{ is the largest integer such that} \\
&~(a,b) \in U(i_1, j_1) \cap U(i_2,j_2) \cap \cdots \cap U(i_{\ell}, j_{\ell}),\\
& ~ 1\leq i_1,i_2,\ldots, i_{\ell},j_1,j_2,\ldots, j_{\ell}\leq k~\mbox{are distinct integers,}\\
0 & \mbox{otherwise},
  \end{cases}
\end{equation*} 
where $U(i,j)= \{\left(z_i+z_j, F_1(z_i)+F_1(z_j)\right), \left(z_i+z_j+c, F_1(z_i)+F_1(z_j)+d\right) ~\mid ~ 1\leq i \neq j\leq k\} \subseteq \F_{2^n}^{*} \times \F_{2^n}^{*}$ and the set $S(c,d)=\{z_1,z_1+c,z_2,z_2+c, \ldots,z_{k},z_{k}+c\}$ consists of $2k$ solutions of the equation $F_1(X+c)+F_1(X)=d$ when $\DDT_{F_1}(c,d)=2k$.

It is known from~\cite[Lemma 4]{BCC} and \cite{CH} that $ X^{2^s+1}+(X+c)^{2^s+1}=d$ has  $2^{t}$ distinct solutions if and only if $\sum_{i=0}^{m-1}D^{2^{si}}=m \equiv \Tr_{1}^{m}(1)$ (mod 2), where $D=\frac{d}{c^{2^s+1}}$, with $t=\gcd(s,n)$ and $m=\frac{n}{t}$, and zero solutions otherwise. Moreover, the solutions are of the form $X+cu$, where $u\in\F_{2^{t}}$.

If $t=1$, we get that $F_1$ is APN, and using Corollary~\ref{APN}, we are done. Next, let us suppose that $t>1$. Here, we focus exclusively on the case where $a \neq c$ and $ac \neq 0$, as all other cases are straightforward. Let $\DDT_{F_1}(c,d)=2k=2^{t}=4r$, where $r$ is a positive integer. Let $z_{i_1}=X+u_{i_1}c$ and $z_{j_1}=X+u_{j_1}c$, where $u_{i_1} \neq u_{j_1} \in  \F_{2^t}$ and $u_{i_1}+u_{j_1} \neq c$. Then, for $(a,b)=(z_{i_1}+z_{j_1},F_1(z_{i_1})+F_1(z_{j_1}))=((u_{i_1}+u_{j_1})c,F_1(X+u_{i_1})+F_1(X+u_{j_1}))$, System~\eqref{EBCT} will have at least four solutions, namely, $\{X+u_{i_1}c,X+u_{j_1} c,X+u_{i_1}c +c,X+u_{j_1}c+c\}$, where $X$ satisfies $X^{2^s+1}+(X+c)^{2^s+1}=d$.

Next, we are required to examine if System~\eqref{EBCT} can have any other solution from the set $S(c,d)$ except for these four, corresponding to the pair $(a,b)$ defined above. Using Theorem ~\ref{deltadiff}, we know that $z_{i_2}=X+u_{i_2} c \in S(c,d) \setminus \{ z_{i_1},z_{j_1},z_{i_1}+c,z_{j_1}+c\}$, where $u_{i_2} \in \F_{2^{t}}$, is a solution of System~\eqref{EBCT} if there exist $z_{j_2}\in S(c,d) \setminus \{ z_{i_1},z_{j_1},z_{i_1}+c,z_{j_1}+c,z_{i_2},z_{i_2}+c\}$ such that $a=z_{i_1}+z_{j_1}=z_{i_2}+z_{j_2}$ and $b=F_1(z_{i_1})+F_1(z_{j_1})=F_1(z_{i_2})+F_1(z_{j_2})$. Note that if we choose $z_{j_2}=X+(u_{i_1}+u_{j_1}+u_{i_2})c$, then it is easy to see that $a=z_{i_1}+z_{j_1}=u_{i_1}+u_{j_1}=z_{i_2}+z_{j_2}$ and moreover, the following computations
\allowdisplaybreaks
\begin{align*}
 F_1(z_{i_1})+F_1(z_{j_1}) & =(X+u_{i_1}c)^{2^s+1}+(X+u_{j_1}c)^{2^s+1}\\
 & = ((u_{i_1}+u_{j_1})c)^{2^s}X+(u_{i_1}+u_{j_1})cX^{2^s}+(u_{i_1}^{2^s+1}+u_{j_1}^{2^s+1})c^{2^s+1}\\
 & = ((u_{i_1}+u_{j_1})c)^{2^s}X+(u_{i_1}+u_{j_1})cX^{2^s}+(u_{i_1}+u_{j_1})^{2^s+1}c^{2^s+1}\\
 & \qquad +(u_{i_1}^{2^s}u_{j_1}+u_{j_1}^{2^s}u_{i_1})c^{2^s+1}\\
 & = (u_{i_1}+u_{j_1}) (c^{2^s}X+cX^{2^s})+ (u_{i_1}+u_{j_1})^2 c^{2^s+1}\\
 & = (u_{i_1}+u_{j_1}) (c^{2^s+1}+d)+ (u_{i_1}+u_{j_1})^2 c^{2^s+1}\\
 &= ((u_{i_1}+u_{j_1})+(u_{i_1}+u_{j_1})^2 ) c^{2^s+1}+ (u_{i_1}+u_{j_1}) d,\\
F_1(z_{i_2})+F_1(z_{j_2}) & =(X+u_{i_2}c)^{2^s+1}+(X+(u_{i_1}+u_{j_1}+u_{i_2})c)^{2^s+1}\\
 & = ((u_{i_1}+u_{j_1})c)^{2^s}X+(u_{i_1}+u_{j_1})cX^{2^s}+(u_{i_1}+u_{j_1})^{2^s+1}c^{2^s+1}\\
 & \quad +(u_{i_2}^{2^s}(u_{i_1}+u_{j_1})+(u_{i_1}+u_{j_1})^{2^s}u_{i_2})c^{2^s+1}\\
 & = (u_{i_1}+u_{j_1}) (c^{2^s}X+cX^{2^s})+ (u_{i_1}+u_{j_1})^2 c^{2^s+1},\\
 &= ((u_{i_1}+u_{j_1})+(u_{i_1}+u_{j_1})^2 ) c^{2^s+1}+ (u_{i_1}+u_{j_1}) d,
\end{align*}
show that $ b =F_1(z_{i_1})+F_1(z_{j_1})=F_1(z_{i_2})+F_1(z_{j_2})$.
We have used here that, since $t|s$, if $u\in\F_{2^{t}}$ then $u^{2^s}=u$, and that $d=(X+u_{i_1} c+c)^{2^s+1}+(X+u_{i_1} c)^{2^s+1}=c^{2^s}X+cX^{2^s}+c^{2^s+1}$. Hence, for $(a,b)=(z_{i_1}+z_{j_1},F_1(z_{i_1})+F_1(z_{j_1}))$, we get $z_{i_2}$ as a solution of System~\eqref{EBCT}. As $z_{i_2} \in S(c,d) \setminus \{ z_{i_1},z_{j_1},z_{i_1}+c,z_{j_1}+c\}$ was chosen arbitrarily, so every element in the set $S(c,d)$ will be a solution of System~\eqref{EBCT} for $(a,b)=(z_{i_1}+z_{j_1},F_1(z_{i_1})+F_1(z_{j_1}))$. Therefore, for $\DDT_{F_1}(c,d)=2k$, it turns out that $\ell =r$ is the largest integer such that $(a,b) \in U(i_1, j_1) \cap U(i_2,j_2) \cap \cdots \cap U(i_{\ell}, j_{\ell}),$ where $1\leq i_1,i_2,\ldots, i_{\ell},j_1,j_2,\ldots, j_{\ell}\leq k$ are distinct integers; or, equivalently, for $\DDT_{F_1}(c,d)=2k$ and $(a,b) \in \{(uc, (u+u^2)c^{2^s+1}+ud) \mid u \in \F_{2^t}^{*} \setminus \{1\}\}$, $\EB_{F_1}(a,b,c,d)=4 \ell =\DDT_{F_1}(c,d)$.

We now compute LBCT and UBCT entries of $F_1$. If $t=1$, then $F_1$ is APN, and using Corollary~\ref{APN}, we are done. Next, let us suppose that $t>1$, then we will compute the LBCT and UBCT entries via (and using the notations and expressions of LBCT and UBCT) Theorem~\ref{deltadiff}. 

 It is known from Theorem~\ref{deltadiff} that if $a=b, ab \neq 0$ or $a=0, b \neq 0$, then $\LB_{F_1}(a,b,c)=\DDT_{F_1}(b,c)=2^t$ if and only if $\Tr_s^{sm}\left(\frac{c}{b^{2^s+1}}\right)=\Tr_{1}^{m}(1)$; otherwise, it is zero. Now, let $a \neq b, ab \neq 0$ and $\DDT_{F_1}(b,c)=2k=2^{t}=4r$, where $r$ is a positive integer. Consider $S(b,c)=\{y_1,y_1+b,y_2,y_2+b, \ldots,y_{k},y_{k}+b\}$ to be the solution set of the equation $F_1(X+b)+F_1(X)=c$. Suppose that $y_{i_1}=X+v_{i_1}b$ and $y_{j_1}=X+v_{j_1}b$, where $X$ satisfies $X^{2^s+1}+(X+b)^{2^s+1}=c$, $v_{i_1} \neq v_{j_1} \in  \F_{2^t}$ and $v_{i_1}+v_{j_1} \neq b$. Then, for $a=y_{i_1}+y_{j_1}=(v_{i_1}+v_{j_1})b$, System~\eqref{LBCT} has at least four solutions, namely, $\{(X+v_{i_1}b,X+v_{i_1}b+b),(X+v_{j_1} b,X+v_{j_1}b+b),(X+v_{i_1}b +b,X+v_{i_1}b), (X+v_{j_1}b+b,X+v_{j_1}b)\}$. We now need to investigate that whether  System~\eqref{LBCT} has more than these four soultions, corresponding to $a=(v_{i_1}+v_{j_1})b$. From Theorem~\ref{deltadiff}, it follows that $(y_{i_2},y_{i_2}+b)$, where $y_{i_2}=X+v_{i_2}b \in S(b,c)\setminus \{y_{i_1},y_{j_1},y_{i_1}+b,y_{j_1}+b\}$ and $v_{i_2} \in \F_{2^t}$, is a solution of System~\eqref{LBCT} if and only if there exists $y_{j_2}=X+v_{j_2}b  \in S(b,c)\setminus \{y_{i_1},y_{j_1},y_{i_1}+b,y_{j_1}+b,y_{i_2},y_{i_2}+b\}$, $v_{j_2} \in \F_{2^t}$, such that $a=y_{i_2}+y_{j_2}$. Clearly, one can choose $y_{j_2}=X+(v_{i_1}+v_{i_2}+v_{j_2})$ to ensure $a=y_{i_1}+y_{j_1}=y_{i_2}+y_{j_2}$. Thus, we obtain another set of four solutions $\{(X+v_{i_2}b,X+v_{i_2}b+b),(X+v_{j_2} b,X+v_{j_2}b+b),(X+v_{i_2}b +b,X+v_{i_2}b), (X+v_{j_2}b+b,X+v_{j_2}b)\}$ for System~\eqref{LBCT} when $a=y_{i_1}+y_{j_1}$. Since $y_{i_2}\in S(b,c)\setminus \{y_{i_1},y_{j_1},y_{i_1}+b,y_{j_1}+b\}$ was choosen arbitrarily, it renders that every element in the set $S(b,c)$ is a solution of System~\eqref{LBCT} for $a=y_{i_1}+y_{j_1}$. Thus, $\LB_{F_1}(a,b,c)=\DDT_{F_1}(b,c)$.

 We next compute the UBCT entries for $F_1$. It is evident from Theorem~\ref{deltadiff} that
 if $a=b=0$, then $\UB_{F_1}(a,b,c)=|F_{1}^{-1}(c+\Im(F_1))|$ where $\Im(F_1)$ is the image of $F_1$ and $F_{1}^{-1}(\cdot)$ denotes the preimage of the argument. Moreover, if $b=c$ and $a \neq 0$ then $\UB_{F_1}(a,b,c)=\DDT_{F_1}(a,b)=2^t$ if and only if $\Tr_s^{sm}\left(\frac{b}{a^{2^s+1}}\right)=\Tr_{1}^{m}(1)$; otherwise, it is zero. Therefore, assume that $a \neq 0$ and $b \neq c$ and $\DDT_{F_1}(a,b)=2k=2^{t}=4r$, where $r$ is a positive integer. Define $S(a,b)=\{x_1,x_1+a,x_2,x_2+a, \ldots,x_{k},x_{k}+a\}$ as the solution set of the equation $F_1(X+a)+F_1(X)=b$. Suppose $x_{i_1}=X+w_{i_1}a$ and $x_{j_1}=X+w_{j_1}a$, where $w_{i_1} \neq w_{j_1} \in  \F_{2^t}$, $w_{i_1}+w_{j_1} \neq a$ and $X^{2^s+1}+(X+a)^{2^s+1}=b$. Then, for $c=F_1(x_{i_1})+F_1(x_{j_1})=((w_{i_1}+w_{j_1})+(w_{i_1}+w_{j_1})^2 ) a^{2^s+1}+ (w_{i_1}+w_{j_1}) b$, System~\eqref{UBCT} will have at least four solutions, namely, $\{(X+w_{i_1}a,X+w_{j_1}a),(X+w_{i_1}a+a,X+w_{j_1}a+a),(X+w_{j_1}a,X+w_{i_1}a), (X+w_{j_1}a+a,X+w_{i_1}a+a)\}$. Using the similar argument as in the case of EBCT, one can ensure that every element in the set $S(a,b)$ is a solution of System~\eqref{UBCT} for $c=F_1(x_{i_1})+F_1(x_{j_1})=((w_{i_1}+w_{j_1})+(w_{i_1}+w_{j_1})^2 ) a^{2^s+1}+ (w_{i_1}+w_{j_1}) b$. Hence, we conclude that $\UB_{F_1}(a,b,c)=\DDT_{F_1}(a,b)$.
\end{proof}

We will derive here the $\FB$ for the Gold function for any parameters $a,b$ (also found in \cite{EM22}, but our proof is significantly shorter). As known, for monomials, one can take $b=1$.
\begin{cor} Let $F_1(X)=X^{2^s+1}$, with $\gcd(s,n)=t$ and $m=\frac{n}{t}$, where $a \in \F_{2^n}^{*}$. Then,
 \allowdisplaybreaks
 \begin{align*}
\FB_{F_1}(a, 1) &= 
\begin{cases}
2^n&\mbox{ if } a\in \F_{2^{t}}^{*},\\
  0 & \mbox{otherwise.}~
\end{cases}
\end{align*} 
\end{cor}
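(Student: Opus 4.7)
The plan is to compute the second-order derivative of $F_1$ directly, exploiting the fact that $F_1$ is a quadratic function, so that the defining sum of $\FB_{F_1}(a,1)$ depends only on $a$ and not on $X$.

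First, I would expand each of the four terms in $F_1(X+a+1)+F_1(X+1)+F_1(X+a)+F_1(X)$ using the identity $(X+y)^{2^s+1}=X^{2^s+1}+y^{2^s}X+yX^{2^s}+y^{2^s+1}$. The four $X^{2^s+1}$ terms cancel trivially, while the coefficient of $X^{2^s}$ is $(a+1)+1+a=0$ and the coefficient of $X$ is $(a+1)^{2^s}+1^{2^s}+a^{2^s}=0$ in characteristic two. Hence the entire sum reduces to the $X$-independent constant $(a+1)^{2^s+1}+1+a^{2^s+1}$.

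Second, I would simplify this constant. Expanding $(a+1)^{2^s+1}=(a^{2^s}+1)(a+1)=a^{2^s+1}+a^{2^s}+a+1$ and adding the remaining $1+a^{2^s+1}$ gives $a^{2^s}+a$. Therefore the equation $F_1(X+a+1)+F_1(X+1)+F_1(X+a)+F_1(X)=0$ is equivalent to $a^{2^s}+a=0$, which is a condition on $a$ alone.

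Third, I would observe the dichotomy: if $a\in\F_{2^n}$ satisfies $a^{2^s}=a$, then $a$ lies in the fixed field of $X\mapsto X^{2^s}$ inside $\F_{2^n}$, which is precisely $\F_{2^{\gcd(s,n)}}=\F_{2^t}$. In that case every $X\in\F_{2^n}$ satisfies the defining equation, giving $\FB_{F_1}(a,1)=2^n$. If instead $a\in\F_{2^n}^*\setminus\F_{2^t}$, then $a^{2^s}+a\neq 0$ and no $X$ satisfies the equation, so $\FB_{F_1}(a,1)=0$. This yields the claimed formula. There is no real obstacle; the main point is just the quadratic (algebraic degree $2$) nature of the Gold function, which collapses the second-order derivative to a constant depending only on $a$.
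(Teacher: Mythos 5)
Your proof is correct, but it takes a genuinely different route from the paper. The paper derives the result from its LBCT formula for the Gold function (Corollary 6.5) together with the identity $\FB_{F_1}(a,1)=\sum_{c\in\F_{2^n}}\LB_{F_1}(a,1,c)$ from Remark 2.7: for $a\notin\F_{2^t}^*$ every LBCT entry vanishes, while for $a\in\F_{2^t}^*$ the sum collapses to $2^t\,\omega_{2^t}=\sum_i i\,\omega_i=2^n$, where $\omega_i$ counts the $c$ with $\DDT_{F_1}(1,c)=i$. You instead compute the second-order derivative directly: since $F_1$ has algebraic degree $2$, the sum $F_1(X+a+1)+F_1(X+1)+F_1(X+a)+F_1(X)$ is independent of $X$ and equals $a^{2^s}+a$, which vanishes precisely when $a$ lies in the fixed field $\F_{2^t}$ of $X\mapsto X^{2^s}$ inside $\F_{2^n}$; your algebra checks out. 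Your argument is more elementary and self-contained (it never invokes the LBCT machinery or the permutation/row-sum bookkeeping), and it generalizes at once to $\FB_{F_1}(a,b)$ for arbitrary $b$, since $D_aD_bF_1=a^{2^s}b+ab^{2^s}$ vanishes iff $a\in b\F_{2^t}$; indeed it shows that any quadratic function has FBCT entries only in $\{0,2^n\}$. What the paper's route buys is a demonstration that the corollary falls out of the LBCT results already established, which is the point being illustrated in that section.
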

\begin{proof}
Observe that
 \allowdisplaybreaks\begin{align*}
\LB_{F_1}(a, 1, c) &= 
\begin{cases}
2^{t} &\mbox{ if }\sum_{i=0}^{m-1}c^{2^{si}}=1 \mbox{ and }  a\in \F_{2^{t}}^{*}, \\
  0 & \mbox{otherwise}.
\end{cases}
\end{align*} 
This implies that, if $a\not \in \F_{2^{t}}^{*}$, then $\FB_{F_1}(a, 1)=0$. Let us now assume that $a\in \F_{2^{t}}^{*}$.  Note that the condition $\sum_{i=0}^{m-1}c^{2^{si}}=1 $ is equivalent to $\DDT_{F_1}(1,c)=2^{t}$. Denoting as is usual $\omega_i=\left|\{c\bigm| \DDT_{F_1}(1,c)=i\}\right|$, $\displaystyle \FB_{F_1}(a, 1)=\sum_{c\in\F_{2^n}}\LB_{F_1}(a, 1, c)=2^{t}\omega_{2^{t}}=\sum_{i=0}^{2^{t}} i\omega_i=2^n$. This proves the claim.
\end{proof}

\begin{rmk}
 \label{independent}
Let $F$ be a differentially $4$-uniform function over $\F_{2^n}$ such that $\DDT_F(b,c)=4$ or $\DDT_F(b,c)=0$ for any $b,c\in\F_{2^n}$. For any $b,c$ such that $\DDT_F(b,c)=4$, denote by $\{y_1,y_1+b,y_2,y_2+b\}$ the four distinct solutions of the equation $F(X+b)+F(X)=c$. If $y_i+y_j$, $i\neq j$, is independent of $c$, then by using a similar argument as in the above corollary, for $a\in\{y_i+y_j,y_i+y_j+b\}$, we obtain
 \allowdisplaybreaks\begin{align*}
\FB_{F}(a, b) &= 
\begin{cases}
2^n&\mbox{ if } a\in \{ b, y_i+y_j, y_i+y_j+b\}\\
  0 & \mbox{otherwise.}
\end{cases}
\end{align*}
\end{rmk}

\begin{exmp}
  In Table~\textup{\ref{Table2}}, Corollary~\textup{\ref{ggold}} is illustrated by giving explicit values of the $\EB$, $\LB$ and $\UB$ entries of the Gold function $X^{2^s+1}$ over $\F_{2^n}^*=\langle g\rangle$, where $g$ is a primitive element of  $\F_{2^n}$, for some small values of~$n$.
\end{exmp}
\begin{table}[hbt]
\caption{EBCT, LBCT and UBCT entries for the Gold function $X^{2^s+1}$ over $\F_{2^{n}}$}
\label{Table2}
\begin{center}
\scalebox{0.8}{
\begin{tabular}{|c|c|c|c|c|c|c|c|c|c|} 
 \hline
 $n$ & $s$ & $a$ & $b$ & $c$ & $d$ &  $\omega$ & $\EB_{F_1}(a,b,c,d)$ &$\LB_{F_1}(a,b,c)$ & $\UB_{F_1}(a,b,c)$\\
 \hline
 $6$ & $2$ & $g^{44}$ & $g^{23}$ & $g^{16}$  & for any $d$  &  $g^{21}$ & $0$ & $4$ & $0$ \\
 \hline
 $6$ & $2$ & $g^{2}$ & $g^{26}$ & $g^{53}$  &  for any $d$  & $g^{21}$ & $0$ & $0$ & $4$\\
 \hline
$10$ & $6$ & $g^{351}$ & $g^{692}$ & $g^{2}$  & for any $d$ &  $g^{341} $ & $0$ & $0$ & $0$\\
 \hline
 $10$ & $4$ & $g^{2}$ & $g^{359}$ & $g^{11}$  & for any $d$ & $g^{682}$ & $0$ & $0$ & $4$\\
\hline
$6$ & $2$ & $g^{44}$ & $g^{8}$ & $g^{23}$  & $g^{16}$  &  $g^{21}$ & $4$ & $0$ & $0$ \\
\hline
$10$ & $4$ & $g^{684}$ & $g^{11}$ & $g^{2}$  & $g^{359}$ &  $g^{682}$ & $4$ & $0$ & $0$\\
 \hline
\end{tabular}
}
\end{center}
\end{table}

The Kasami function is a well-known nonlinear function over $\F_{2^n}$, given by ${F_2}(X)=X^{2^{2s}-2^s+1}$. Hertel and Pott~\cite{HP} showed that for $\gcd(s,n)=2$, where $n=2t$,  $t$ is odd and $3\nmid t$, it is a differentially $4$-uniform permutation. We next investigate the EBCT, LBCT and UBCT entries of $F_2$ over $\F_{2^n}$. 

\begin{cor}\label{Kasami}
Let $F_2(X)=X^{2^{2s}-2^s+1}$ on $\F_{2^n}$, where $\gcd(s,n)=2$, $n=2t$ and $t$ is odd and $3\nmid t$. Then for $a,b,c,d \in \F_{2^n}$, we have
 \allowdisplaybreaks\begin{align*}
\EB_{F_{2}}(a, b, c,d) &= 
  \begin{cases} 
  2^n & \mbox{if } a=b=c=d=0, \\
  4 & \mbox{if } \DDT_{F_{2}}(a,b)=4, a\neq 0 \mbox{ and } c=d=0; \mbox{ or}\\ 
   &~ \DDT_{F_2}(c,d)=4, a=b=0 \mbox{ and } c \neq 0;\mbox{ or}\\
   &~  \DDT_{F_2}(c,d)=4,  ac \neq 0, a=c \mbox{ and }b=d; \mbox{ or}\\
&~ \DDT_{F_{2}}(c,d)=4,  a=\omega c+\alpha^{2^{s}+1},\mbox{ and } b=\omega d+\alpha^{2^{3s}+1}; \mbox{ or}\\
&~ \DDT_{F_{2}}(c,d)=4,  a=\omega^2 c+\alpha^{2^{s}+1},\mbox{ and } b=\omega^2 d+\alpha^{2^{3s}+1},\\
  0 & \mbox{otherwise,}
  \end{cases}
  \end{align*} 
for $\alpha \in \F_{2^n}^{*}$ with $c^{2^{2s}}+c^{2^s}\alpha^{2^{3s}-2^s}+c\alpha^{2^{3s}+2^{2s}-2^s-1}+d\alpha^{2^{2s}-1}=0$ and $
\Tr_s^{st}\left(1+\dfrac{c}{\alpha^{2^s+1}} \right)=0$,
 \allowdisplaybreaks\begin{align*}
\LB_{F_2}(a, b, c) &= 
\begin{cases}
2^n &\mbox{if } b=c=0,\\
4&\mbox{if }\DDT_{F_2}(b,c)=4, a=0 \mbox{ and }  b \neq 0; \mbox{ or}\\
&~ \DDT_{F_2}(b,c)=4, ab\neq 0 \mbox{ and }  a\in \left\{ b, b\omega +\alpha^{2^s+1}, b\omega^2+\alpha^{2^s+1}\right\}, \\
  0 & \mbox{otherwise,}
  \end{cases}
  \end{align*} 
for $\alpha \in \F_{2^n}^{*}$ with $b^{2^{2s}}+b^{2^s}\alpha^{2^{3s}-2^s}+b\alpha^{2^{3s}+2^{2s}-2^s-1}+c\alpha^{2^{2s}-1}=0$ and $\Tr_s^{st}\left(1+\dfrac{b}{\alpha^{2^s+1}} \right)=0$,
 \allowdisplaybreaks\begin{align*}
\UB_{F_2}(a, b, c) &= 
\begin{cases}
2^n &\mbox{if } a=b=0,\\
4 & \mbox{if}~\DDT_{F_2}(a,b)=4, a\neq 0 \mbox{ and }  c\in \left\{  b,  b\omega+\alpha^{2^{3s}+1},  b\omega^2+\alpha^{2^{3s}+1}\right\}, \\
  0 & \mbox{otherwise,}
\end{cases}
\end{align*} 
for $\alpha \in \F_{2^n}^{*}$ satisfying $a^{2^{2s}}+a^{2^s}\alpha^{2^{3s}-2^s}+a\alpha^{2^{3s}+2^{2s}-2^s-1}+b\alpha^{2^{2s}-1}=0$, 
$\Tr_s^{st}\left(1+\dfrac{a}{\alpha^{2^s+1}} \right)=0$
and $\omega$ is the primitive cube root of unity. 
\end{cor}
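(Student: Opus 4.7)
The plan is to derive the three tables for the Kasami function $F_2(X)=X^{2^{2s}-2^s+1}$ by specializing the general characterization of EBCT, LBCT, UBCT entries for differentially $4$-uniform functions given in Corollary~\ref{4diff}. Since $F_2$ is a differentially $4$-uniform permutation under the stated hypotheses, the entries of $\EB_{F_2}$, $\LB_{F_2}$, $\UB_{F_2}$ are completely controlled by (i) the DDT entries of $F_2$, and (ii) the explicit description of the four-element solution set $S(c,d)=\{z_1,z_1+c,z_2,z_2+c\}$ of $F_2(X+c)+F_2(X)=d$ whenever $\DDT_{F_2}(c,d)=4$. Thus the whole problem reduces to giving an explicit parametrization of those solution sets and computing the key combinations $z_1+z_2$, $z_1+z_2+c$, $F_2(z_1)+F_2(z_2)$, $F_2(z_1)+F_2(z_2)+d$ that appear in the sets $U(i,j)$, $V(i,j)$, $W(i,j)$ of Corollary~\ref{4diff}.

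The first step is to analyze the derivative equation $F_2(X+c)+F_2(X)=d$ for $c\neq 0$. After substituting $X=cY$ (possible since $c\neq 0$) and using the identity $(Y+1)^{2^{2s}-2^s+1}+Y^{2^{2s}-2^s+1}=\frac{(Y^{2^s}+Y)^{2^{2s}+1}+Y^{2^{3s}+1}+(Y+1)^{2^{3s}+1}}{Y^{2^s}+(Y+1)^{2^s}}$ (or by linearizing via the standard Dobbertin-type trick used for Kasami derivatives), the equation is turned into a linearized equation in the variable $Z$ built from $Y$. Because $\gcd(s,n)=2$, the primitive cube roots of unity $\omega,\omega^2\in\F_{2^2}\subset\F_{2^n}$ lie in the relevant kernel; exploiting the factorization through $\Tr_s^{st}$ and the identity $1+\omega+\omega^2=0$, the equation either has no solution or splits into the claimed four, which after back-substitution become
\begin{equation*}
S(c,d)=\{X,\,X+c,\,X+\omega c+\alpha^{2^{s}+1},\,X+\omega^2 c+\alpha^{2^{s}+1}\},
\end{equation*}
for some $\alpha\in\F_{2^n}^*$ satisfying the quartic-type relation
$c^{2^{2s}}+c^{2^s}\alpha^{2^{3s}-2^s}+c\alpha^{2^{3s}+2^{2s}-2^s-1}+d\alpha^{2^{2s}-1}=0$ together with $\Tr_s^{st}(1+c/\alpha^{2^{s}+1})=0$. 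These two conditions are exactly the existence and solvability criteria that come out of the linearization.

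Once this explicit description is in hand, the remaining steps are computational but straightforward. For the EBCT, I compute $z_1+z_2=\omega c+\alpha^{2^s+1}$, hence $z_1+z_2+c=\omega^2 c+\alpha^{2^s+1}$ (using $1+\omega=\omega^2$), and a short expansion using $F_2(X+\omega c+\alpha^{2^s+1})+F_2(X)$ and the analogous expression for $\omega^2$ yields $F_2(z_1)+F_2(z_2)=\omega d+\alpha^{2^{3s}+1}$ and $F_2(z_1)+F_2(z_2)+d=\omega^2 d+\alpha^{2^{3s}+1}$; this produces the pairs $(a,b)$ listed in $U(i,j)$. For the LBCT, the solutions of $F_2(X+b)+F_2(X)=c$ have the same form with $b$ in place of $c$, so $V(i,j)=\{b\omega+\alpha^{2^s+1},\,b\omega^2+\alpha^{2^s+1}\}$, recovering the three admissible values $a\in\{b,\,b\omega+\alpha^{2^s+1},\,b\omega^2+\alpha^{2^s+1}\}$ of the statement (the $a=b$ entry coming from Corollary~\ref{4diff}). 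The UBCT is handled symmetrically: $W(i,j)=\{F_2(x_1)+F_2(x_2),\,F_2(x_1)+F_2(x_2)+b\}=\{b\omega+\alpha^{2^{3s}+1},\,b\omega^2+\alpha^{2^{3s}+1}\}$, which together with the $c=b$ entry gives the claimed three admissible values of $c$.

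The main obstacle is the first step: deriving the explicit form of $S(c,d)$ and the companion conditions on $\alpha$. Unlike the Gold case, the Kasami exponent does not linearize in one step, and the reduction depends on carefully using the hypothesis $\gcd(s,n)=2$ so that $\omega\in\F_{2^n}$ and $\Tr_s^{st}$ controls the splitting. Once the parametrization of $S(c,d)$ is established, verifying the formulas for $\EB_{F_2}$, $\LB_{F_2}$, $\UB_{F_2}$ amounts to plugging into the sets $U,V,W$ of Corollary~\ref{4diff} and simplifying with $\omega^3=1$ and $\omega+\omega^2=1$; the boundary cases ($a=b=c=d=0$, $c=d=0$ with $a\neq 0$, $a=b=0$ with $c\neq 0$, $a=c$ and $b=d$, $a=0$ and $b=c$) are immediate from the same corollary.
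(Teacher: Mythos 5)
Your overall architecture matches the paper's: reduce to Corollary~\ref{4diff} (and Lemma~\ref{L0001} for the degenerate cases), parametrize the four solutions of $F_2(X+c)+F_2(X)=d$ explicitly, and read off the admissible pairs $(a,b)$ and the sets $V(i,j)$, $W(i,j)$. The final formulas you state for $S(c,d)$ and for $U$, $V$, $W$ are the correct ones. The gap lies in the one step that carries all the difficulty: your derivation of $S(c,d)$. The identity you display cannot be right as written --- its denominator $Y^{2^s}+(Y+1)^{2^s}$ is identically $1$ in characteristic $2$, and a degree count (already for $s=1$) shows the remaining polynomial identity is false. Beyond that, you assert rather than prove that the linearization ``splits into the claimed four'' solutions of the stated shape, that the companion conditions on $\alpha$ (the relation $c^{2^{2s}}+c^{2^s}\alpha^{2^{3s}-2^s}+c\alpha^{2^{3s}+2^{2s}-2^s-1}+d\alpha^{2^{2s}-1}=0$ together with $\Tr_s^{st}\bigl(1+c/\alpha^{2^s+1}\bigr)=0$) are exactly the solvability criteria, and that ``a short expansion'' of $F_2(X+\omega c+\alpha^{2^s+1})+F_2(X)$ yields $\omega d+\alpha^{2^{3s}+1}$; expanding the Kasami monomial $X^{2^{2s}-2^s+1}$ directly is not a short computation and is not how these values arise.

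The device that closes the gap (and is what the paper uses, following Carlet--Kim--Mesnager) is the substitution $X=u^{2^s+1}$, $X+c=(u+\alpha)^{2^s+1}$, legitimate because $\gcd(2^s+1,2^n-1)=1$ under the stated hypotheses; since $(2^{2s}-2^s+1)(2^s+1)=2^{3s}+1$, this turns $F_2$ into the Gold-type monomial $u\mapsto u^{2^{3s}+1}$ and converts the pair of conditions $X+Y=c$, $F_2(X)+F_2(Y)=d$ into two Gold-derivative equations in $u/\alpha$. The first of these is precisely the DDT equation of $u^{2^s+1}$ at $(\alpha,c)$, whose solution set (when nonempty) is the coset $u+\alpha\F_{2^2}$ because $\gcd(s,n)=2$ --- this is where $\omega$ enters --- and eliminating $u$ between the two equations produces exactly the quartic relation in $\alpha$ and the trace condition you quoted. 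The values $a=\omega c+\alpha^{2^s+1}$ and $b=\omega d+\alpha^{2^{3s}+1}$ then come from computing $z_1+z_2$ and $F_2(z_1)+F_2(z_2)$ in the $u$-coordinates, where $F_2(z_i)=(u+\alpha\omega^j)^{2^{3s}+1}$, using $u^{2^s+1}+(u+\alpha)^{2^s+1}=c$ and $u^{2^{3s}+1}+(u+\alpha)^{2^{3s}+1}=d$. (A minor further difference: the paper obtains the LBCT and UBCT from the EBCT via Corollary~\ref{E2LU} rather than re-running Corollary~\ref{4diff}; your direct route through $V(i,j)$ and $W(i,j)$ is equally valid once $S(\cdot,\cdot)$ is established.)
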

\begin{proof}
Since $F_2$ is a permutation, Lemma ~\ref{L0001} can be referred for determining EBCT, LBCT and UBCT entries of $F_2$ in cases where $abcd=0$. Thus, in what follows, we will assume that $abcd \neq 0$. By Corollary \ref{4diff}, for $a,b,c,d \in \F_{2^n}^{*}$, $\EB_{F_{2}}(a, b, c,d) $ equals
\begin{equation*} 
  \begin{cases} 
\DDT_{F_{2}}(c,d) & \mbox{if }  a=c \mbox{ and }b=d; \mbox{ or}\\
&~ \DDT_{F_{2}}(c,d)=4,  a=z_1+z_2 \mbox{ and } b=F_{2}(z_1)+F_{2}(z_2); \mbox{ or}\\
&~\DDT_{F_{2}}(c,d)=4, a=z_1+z_2+c \mbox{ and } b=F_{2}(z_1)+F_{2}(z_2+c),\\
  0 & \mbox{otherwise,}
  \end{cases}
\end{equation*} 
where $z_1,z_1+c,z_2,z_2+c$ are the four solutions of the equation $F_{2}(X+c)+F_{2}(X)=d$ if $\DDT_{F_{2}}(c,d)=4$.  We write $F_{2}(X+c)+F_{2}(X)=d$ as the system 
\begin{equation}\label{k0}
  \begin{cases} 
  Y+X=c, \\
  F_2(Y)+F_2(X)=d.
  \end{cases}
\end{equation}
Now by using the argument given in~\cite{CKM}, and the fact that $\gcd(2^s+1,2^n-1)=1$, substituting $X=u^{2^s+1}, Y=v^{2^s+1}=(u+\alpha)^{2^s+1}$, for some $\alpha \in \F_{2^n}^{*}$ such that $v=u +\alpha$ and simplifying System~\eqref{k0}, we have
\begin{equation*}
  \begin{cases} 
  \left(\frac{u}{\alpha}\right)^{2^s}+\frac{u}{\alpha}+1+\frac{c}{\alpha^{2^s+1}}=0, \\
  \left(\frac{u}{\alpha}\right)^{2^{3s}}+\frac{u}{\alpha}+1+\frac{d}{\alpha^{2^{3s}+1}}=0,
  \end{cases}
\end{equation*}
or equivalently,
\begin{equation*}
  \begin{cases} 
  \left(\frac{u}{\alpha}\right)^{2^s}+\frac{u}{\alpha}+1+\frac{c}{\alpha^{2^s+1}}=0, \\
  c^{2^{2s}}+c^{2^s}\alpha^{2^{3s}-2^s}+c\alpha^{2^{3s}+2^{2s}-2^s-1}+d\alpha^{2^{2s}-1}=0.
  \end{cases}
\end{equation*}
Notice that the first equation of the above system has either $0$ or $4$ solutions in $\F_{2^n}$, depending on whether  $\sum_{i=0}^{t-1} \left(1+\dfrac{c}{\alpha^{2^s+1}}\right)^{2^{si}}$ is zero or not, respectively. This first equation is simply the equation for the DDT entry of $u^{2^s+1}$ at $(\alpha,c)$, and if $\DDT_{u^{2^s+1}}(\alpha,c)=4$, then the four solutions of $\left(\frac{u}{\alpha}\right)^{2^s}+\frac{u}{\alpha}+1+\frac{c}{\alpha^{2^s+1}}=0$ are $\{u, u+\alpha, u+\alpha \omega, u+\alpha \omega^2\}$. This implies that $u^{2^s+1}, (u+\alpha)^{2^s+1}, (u+\alpha \omega)^{2^s+1}, (u+\alpha \omega^2)^{2^s+1}$ are solutions of $F_2(X)+F_2(X+c)=d,$ or equivalently $\{z_1,z_1+c,z_2,z_2+c\} \subseteq \left\{ u^{2^s+1}, (u+\alpha)^{2^s+1}, (u+\alpha \omega)^{2^s+1}, (u+\alpha \omega^2)^{2^s+1}\right\}$. Then for $z_1=u^{2^s+1}$, we have $z_1+c=(u+\alpha)^{2^s+1}$. Further, we split our analysis in the following two cases:

\noindent\textbf{Case 1.} Let $z_2=(u+\alpha\omega)^{2^s+1}$ Then $z_1+z_2=a$ and $F_{2}(z_1)+F_{2}(z_2)=b$ can be written as
$ u^{2^s+1}+(u+\alpha\omega)^{2^s+1}=a$ and
 $ u^{2^{3s}+1}+(u+\alpha\omega)^{2^{3s}+1}=b$.

Also, by using $ u^{2^{3s}+1}+(u+\alpha)^{2^{3s}+1}=d$, and the second equation of above system will give us $b=d\omega+\alpha^{2^{3s}+1}$. 
To compute $a$, we use $ u^{2^s+1}+(u+\alpha)^{2^s+1}=c$ and the first equation of the above system and simplify them to get $a=c \omega+ \alpha^{2^s+1}$.

\noindent\textbf{Case 2.} Now assume $z_2=(u+\alpha\omega^2)^{2^s+1}$. By using the same argument as in the above case, we get $a=c \omega^2+ \alpha^{2^s+1}$ and $b=d \omega^2+ \alpha^{2^{3s}+1}$.

Summarizing both cases, we get the EBCT of the Kasami function.

We next compute the LBCT and UBCT entries via (and using the notations of) Corollary~\ref{E2LU}. As $\DDT_{F_2}(c,d)$ is either $0$ or $4$,   then, $\LB_{F_{2}}(a,c,d) \neq 2$, for any $(a,c,d) \in \F_{2^n}^{*} \times \F_{2^n}^{*} \times \F_{2^n}^{*}$ and, $\UB_{F_{2}}(c,d,b) \neq 2$, for any $(c,d,b) \in \F_{2^n}^{*} \times \F_{2^n}^{*} \times \F_{2^n}^{*}.$ 

Now, one can write set 
\[
A=\left\{  (c,c,d), (\omega c+\alpha^{2^s+1}, c, d), (\omega^2 c +\alpha^{2^s+1}, c, d) \bigm|  \DDT_{F_2}(c,d)=4\right\},
\]
 for $c,d \in \F_{2^n}^{*}$. This is because, when $\DDT_{F_2}(c,d)=4$, $(c, c, d) \in A$, for $b= d$, $(\omega c + \alpha^{2^s+1}, c, d) \in A$, for $b=d \omega+ \alpha^{2^{3s}+1}$ and $(\omega^2 c + \alpha^{2^s+1}, c, d) \in A$, for $b=d \omega^2+ \alpha^{2^{3s}+1}$. Finally, from Corollary~\ref{E2LU}, we have $\LB_{F_2}(a,c,d)=4$, for $a \in \left\{c,\omega c +\alpha^{2^s+1}, \omega^2 c+ \alpha^{2^s+1}\right\}$.  Thus, by reordering the indexes, we complete the LBCT case.

Similarly, one can write the set 
\[
B= \left\{(c,d,d), (c,d,d \omega+ \alpha^{2^{3s}+1}), (c,d,d \omega^2+ \alpha^{2^{3s}+1}) \bigm|  \DDT_{F_2}(c,d)=4\right\},
\]
 as, in the case $\DDT_{F_2}(c,d)=4$, $(c, d, d) \in B$, for $a= c$, $(c, d, d \omega+ \alpha^{2^{3s}+1})\in B$, for $a=\omega c +\alpha^{2^s+1}$ and $(c, d, d \omega^2+ \alpha^{2^{3s}+1}) \in B$, for $a=\omega^2 c +\alpha^{2^s+1}$. Hence, one can get the UBCT of the Kasami function by reordering the indexes.
\end{proof}

\begin{exmp}
 The examples given in Table~\textup{\ref{Table3}} ($g$ denotes a primitive element of $\F_{2^{2t}}$) illustrate Corollary~\textup{\ref{Kasami}}:
\end{exmp}
\begin{table}[hbt]
\caption{EBCT, LBCT and UBCT entries for Kasami function $X^{2^{2s}-2^s+1}$ over $\F_{2^{2t}}$}
\label{Table3}
\begin{center}
\scalebox{0.8}{
\begin{tabular}{|c|c|c|c|c|c|c|c|c|c|c|} 
 \hline
 $t$ & $s$ & $a$ & $b$ & $c$ & $d$ & $\alpha$ & $\omega$ & $\EB_{F_2}(a,b,c,d)$ &$\LB_{F_2}(a,b,c)$ & $\UB_{F_2}(a,b,c)$ \\
 \hline
$5$ & $2$ & $g^{6}$ & $g$ & $g^{605}$  & for any $d$ & $g^{50}$  & $ g^{341}$ & $0$ & $0$ & $4$\\
 \hline
$5$ & $6$ & $g^{5}$ & $g$ & $g^{774}$  & for any $d$ & $g^{994}$  & $ g^{682}$& $0$ & $0$ & $0$\\
 \hline
 $5$ & $2$ & $g^{401}$ & $g^{605}$ & $g^{6}$  & $g$ & $g^{50}$  & $ g^{341}$ & $4$ & $0$ & $0$\\
 \hline
 $5$ & $6$ & $g^{84}$ & $g^{24}$ & $g^{2}$  & for any $d$ & $g^{681}$  & $ g^{682}$ & $0$ & $4$ & $0$\\
 \hline
\end{tabular}
}
\end{center}
\end{table}

Bracken and Leander~\cite{BL} studied the differential uniformity of $F_3(X) = X^{2^{2s}+2^s+1}$ over $\F_{2^n}$ where $n = 4s$, which is often known as the Bracken-Leander function. They showed that $F_3$ is differentially $4$-uniform, and its nonlinearity  is $2^{n-1} -2^{\frac{n}{2}}$.  Further, Calderini and Villa~\cite{CV} investigated the BCT entries and showed that the boomerang uniformity of the Bracken-Leander function is upper bounded by 24. Here, we first recall some results from~\cite{XY} that will be used to compute the EBCT, LBCT and UBCT entries for the Bracken-Leander function.

\begin{lem}\label{bl1}\cite[Lemma 1]{XY}
  For any $X \in \F_{2^n}$ and $c \in \F_{2^n}$, let $t = \Trsn(c+1) \in \F_{2^s}$, $n=4s$.
\begin{enumerate}
 \item[$(1)$] If $F_3(X)+F_3(X+1)=c$, then $ \Trsn(X)=t$.
 \item[$(2)$] Let $Z = X + X^{2^s}$. Then $F_3(x)+F_3(X+1)=c$ holds if and only if  
$  Z^2 + (t + 1)Z + c^{2^s} + c^{2^{2s}}=0$ and
$ X^2 + X + Z^{2^s+1} + Z^{2^s} + Z^2 + c+1 = 0$.
\end{enumerate}
\end{lem}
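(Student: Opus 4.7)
The strategy is to expand $F_3(X)+F_3(X+1)$ explicitly and then analyze it under the substitution $Z=X+X^{2^s}$. Starting from $(X+1)^{2^{2s}+2^s+1}=(X+1)(X+1)^{2^s}(X+1)^{2^{2s}}$ and adding $F_3(X)=X^{2^{2s}+2^s+1}$, a direct expansion yields
\begin{equation*}
F_3(X)+F_3(X+1)=X^{2^{2s}+2^s}+X^{2^{2s}+1}+X^{2^{2s}}+X^{2^s+1}+X^{2^s}+X+1,
\end{equation*}
so that $F_3(X)+F_3(X+1)=c$ is equivalent to
\begin{equation*}
X^{2^{2s}+2^s}+X^{2^{2s}+1}+X^{2^{2s}}+X^{2^s+1}+X^{2^s}+X=c+1, \qquad (\star)
\end{equation*}
which is the identity I would work with throughout.

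For part~(1), I would apply $\Trsn$ to $(\star)$ and exploit the orbit structure of the exponents under the Frobenius $x\mapsto x^{2^s}$. The exponent $2^{2s}+1$ lies in a Frobenius orbit of length $2$, so each term of $\Trsn(X^{2^{2s}+1})$ appears twice and the sum vanishes in characteristic~$2$. The exponents $2^s+1$ and $2^{2s}+2^s$ lie in a common orbit of length $4$, giving $\Trsn(X^{2^s+1})+\Trsn(X^{2^{2s}+2^s})=0$. The three linear terms satisfy $\Trsn(X)=\Trsn(X^{2^s})=\Trsn(X^{2^{2s}})$ and together contribute $3\Trsn(X)=\Trsn(X)$. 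Hence the trace of $(\star)$ collapses to $\Trsn(X)=\Trsn(c+1)=t$.

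For part~(2), I would substitute $Z=X+X^{2^s}$ (noting that $X^{2^{2s}}=Z^{2^s}+X^{2^s}$) into $(\star)$ and eliminate the term $X^{2^{2s}}$. Routine symbolic simplification collapses the left-hand side to $Z^{2^s+1}+Z^{2^s}+X^{2^{s+1}}+X$; the identity $X^{2^{s+1}}=Z^2+X^2$, which is just $(X+X^{2^s})^2$ in characteristic~$2$, then converts $(\star)$ into exactly equation~(B): $X^2+X+Z^{2^s+1}+Z^{2^s}+Z^2+c+1=0$. This already proves the equivalence $(\star)\Leftrightarrow$~(B), which takes care of the converse implication $(\mathrm{A})\wedge(\mathrm{B})\Rightarrow F_3(X)+F_3(X+1)=c$ because (B) alone suffices.

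The forward derivation of (A) is the only remaining step, and is the main obstacle. I would leverage part~(1): since $\Trsn(X)=t$, we have $Z+Z^{2^{2s}}=X+X^{2^s}+X^{2^{2s}}+X^{2^{3s}}=t$, so $Z^{2^{2s}}=Z+t$. Raising (B) to the $2^s$-th power and using the identities $Z^{2^{2s}+2^s}=(Z+t)Z^{2^s}=Z^{2^s+1}+tZ^{2^s}$, $Z^{2^{2s}}=Z+t$, and $Z^{2^{s+1}}=(Z^{2^s})^2$, and then adding the result back to (B), the $X$-dependent terms telescope via $X^2+X^{2^{s+1}}=Z^2$, leaving
\begin{equation*}
(Z^{2^s})^2+(1+t)Z^{2^s}+t+c+c^{2^s}=0.
\end{equation*}
A further Frobenius pass --- raising to the $2^{3s}$-th power and using $t=\Trsn(c+1)=\Trsn(c)$ (since $\Trsn(1)=0$ when $n=4s$) in order to replace $t+c+c^{2^{3s}}$ by $c^{2^s}+c^{2^{2s}}$ --- converts this into (A). The technical burden is confined to the bookkeeping of Frobenius shifts together with the invariant $Z^{2^{2s}}=Z+t$ extracted from part~(1); once that is organised, the computation is entirely mechanical.
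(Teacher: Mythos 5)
Your proof is correct. Note first that the paper does not prove this lemma at all: it is quoted verbatim from \cite[Lemma 1]{XY}, so there is no in-paper argument to compare against, and what you have written is a self-contained verification of the cited result. Checking the details: the expansion $F_3(X)+F_3(X+1)=X^{2^{2s}+2^s}+X^{2^{2s}+1}+X^{2^s+1}+X^{2^{2s}}+X^{2^s}+X+1$ is right; the trace computation for part (1) correctly uses that the monomial $X^{2^{2s}+1}$ has Frobenius orbit of length $2$ (so its relative trace over $\F_{2^s}$ vanishes), that $X^{2^s+1}$ and $X^{2^{2s}+2^s}$ share an orbit of length $4$ (so their traces cancel in pairs), and that the three linear terms contribute $\Trsn(X)$; and the substitution $X^{2^{2s}}=Z^{2^s}+X^{2^s}$ does collapse $(\star)$ to $Z^{2^s+1}+Z^{2^s}+Z^2+X^2+X=c+1$, i.e.\ to equation (B), which settles the converse since (B) alone is equivalent to the original equation. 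Your derivation of (A) is also sound: adding (B) to its $2^s$-th power and using $Z^{2^{2s}}=Z+t$ (which is exactly part (1) restated) yields $(Z^{2^s})^2+(1+t)Z^{2^s}+t+c+c^{2^s}=0$, and one further Frobenius application together with $t=\Trsn(c)$ (valid because $\Trsn(1)=4\cdot 1=0$ when $n=4s$) gives $Z^2+(1+t)Z+c^{2^s}+c^{2^{2s}}=0$. The only structural remark worth making is that the forward implication toward (A) genuinely needs part (1) as input (to get $Z^{2^{2s}}=Z+t$), which your write-up makes explicit; this is the correct logical dependency, since (A) is a consequence of (B) only for those $X$ satisfying the original equation.
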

Next, we define $W$ to be the set of all $X \in \F_{2^n}$ such that there exists $X' \in \F_{2^n} \setminus \{X, X + 1\} $ with $X^{2^{2s}+2^s+1}+(X+1)^{2^{2s}+2^s+1} = X'^{2^{2s}+2^s+1}+(X'+1)^{2^{2s}+2^s+1}$. We now recall a result describing $W$ explicitly.

\begin{lem}\label{bl2}\cite[Theorem 2]{XY}
Given elements $u \in\F_{2^{2s}}\setminus\{\F_{2^s}\}$ and $v\in\F_{2^{n}}\setminus\{\F_{2^{2s}}\}$ such that
$u+u^{2^s}=1, v+v^{2^{2s}}=1,$ 
which always exist, then $X \in W$ if and only if
$X=(tu+\beta)v+\tau,$
where $t \in\F_{2^{s}}\setminus\{1\}, \tau \in\F_{2^{2s}}$, and $\beta \in\F_{2^{s}}$ satisfies
$\beta = (t + 1)^{-1}(\alpha^2 + \alpha)+(t + 1)(u^2 + u) + 1,$
for some $\alpha \in\F_{2^{s}}$.
\end{lem}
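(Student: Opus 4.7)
The plan is to recognize that $X \in W$ precisely when the equation $F_3(Y)+F_3(Y+1)=c$, with $c = F_3(X)+F_3(X+1)$, has four solutions in $\F_{2^n}$; equivalently, $\DDT_{F_3}(1,c)=4$. I would then invoke Lemma \ref{bl1} to reformulate this as the joint solvability of the two conditions
\begin{align*}
Z^{2} + (t+1)Z + c^{2^{s}} + c^{2^{2s}} &= 0, \\
X^{2} + X + Z^{2^{s}+1} + Z^{2^{s}} + Z^{2} + c + 1 &= 0,
\end{align*}
where $Z = X + X^{2^{s}}$ and $t = \Trsn(c+1) \in \F_{2^{s}}$, in such a way that the $Z$-quadratic admits two roots in $\F_{2^{n}}$ and each root yields, via the $X$-quadratic, two values of $X$.

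The next step is to exploit the tower $\F_{2^{s}} \subset \F_{2^{2s}} \subset \F_{2^{n}}$. The relations $u + u^{2^{s}} = 1$ and $v + v^{2^{2s}} = 1$ guarantee that $\{1,u\}$ is an $\F_{2^{s}}$-basis of $\F_{2^{2s}}$ and $\{1,v\}$ is an $\F_{2^{2s}}$-basis of $\F_{2^{n}}$. Hence every $X \in \F_{2^{n}}$ writes uniquely as $X = (tu+\beta)v + \tau$ with $t,\beta \in \F_{2^{s}}$ and $\tau \in \F_{2^{2s}}$; I would check that the $t$ in this expansion coincides with $\Trsn(c+1)$ for the corresponding $c$, justifying the notation. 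I would then compute $Z$ in this basis, using $u^{2^{s}} = u+1$; the $v$-coefficient of $Z$ turns out to be controlled by $t$. Substituting into the first quadratic and applying the standard Artin--Schreier condition for a quadratic $T^{2}+\lambda T = \mu$ over a characteristic-two field, the case $t=1$ collapses the linear coefficient and forces at most a single root, so the four-solution regime requires $t \neq 1$. This isolates the first part of the claim.

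The heart of the argument is the $X$-quadratic. Substituting the basis expansion of $X$ together with the now-known value of $Z$ and separating the $\F_{2^{s}}v$-component from the $\F_{2^{2s}}$-component, the equation reduces to an Artin--Schreier equation over $\F_{2^{s}}$ of the form $\alpha^{2}+\alpha = (t+1)\bigl(\beta + (t+1)(u^{2}+u) + 1\bigr)$, with $\alpha \in \F_{2^{s}}$ parametrizing the two $X$-preimages of a fixed $Z$. Solvability for some $\alpha \in \F_{2^{s}}$ is precisely the relation
\[
\beta = (t+1)^{-1}(\alpha^{2}+\alpha) + (t+1)(u^{2}+u) + 1,
\]
which is the claimed constraint. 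The parameter $\tau \in \F_{2^{2s}}$ remains unrestricted because, at the level of the derivative $F_3(X)+F_3(X+1)$, the $\F_{2^{2s}}$-component $\tau$ contributes only to a shift that is absorbed into $c$ without affecting the counting of solutions.

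The main obstacle will be the bookkeeping when expressing $v^{2^{s}}$, and consequently $Z^{2^{s}+1}$ and $Z^{2^{s}}$, in the chosen basis, since $v^{2^{s}}$ is not described directly by $v + v^{2^{2s}} = 1$. I would handle this by observing that $v^{2^{s}} + (v^{2^{s}})^{2^{2s}} = 1$, so $v^{2^{s}} \in \F_{2^{n}} \setminus \F_{2^{2s}}$ and can be written uniquely as $v^{2^{s}} = \lambda v + \mu$ for some $\lambda,\mu \in \F_{2^{2s}}$; substituting this once and for all reduces every remaining computation to $\F_{2^{s}}$-linear algebra. The converse direction --- that every $(t,\alpha,\tau)$ in the stated parameter set produces an $X \in W$ --- comes for free since each step above is an equivalence.
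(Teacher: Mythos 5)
The first thing to say is that the paper does not prove this statement at all: Lemma~\ref{bl2} is imported verbatim from \cite[Theorem 2]{XY} (as is Lemma~\ref{bl1}), and the authors use it as a black box inside the proof of Corollary~\ref{Bracken}. There is therefore no internal proof to compare your attempt against; the relevant comparison is with the argument in \cite{XY}, and your outline does follow the same route as that source: identify $W$ with $\{X : \DDT_{F_3}(1,c)=4,\ c=F_3(X)+F_3(X+1)\}$, pass through the two quadratics of Lemma~\ref{bl1}, and coordinatize along the tower $\F_{2^s}\subset\F_{2^{2s}}\subset\F_{2^n}$ via the bases $\{1,u\}$ and $\{1,v\}$. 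The identification of the coefficient $t$ in $X=(tu+\beta)v+\tau$ with $\Trsn(c+1)$ is correct (one checks $\Trsn(\tau)=0$ and $\Trsn(\sigma v)=\sigma+\sigma^{2^s}=t$ for $\sigma=tu+\beta$), and the exclusion of $t=1$ via the degeneration of the $Z$-quadratic is sound.

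That said, what you have written is a plan rather than a proof, and it has one genuine soft spot beyond the undone central computation. You justify the freedom of $\tau$ by saying it ``contributes only to a shift that is absorbed into $c$ without affecting the counting of solutions.'' As stated this is not an argument: changing $\tau$ changes $X$ and hence changes $c=F_3(X)+F_3(X+1)$, so there is no a priori reason the four-solution condition $\DDT_{F_3}(1,c)=4$ is insensitive to $\tau$. What actually has to be shown (and what the computation in \cite{XY} delivers, cf.\ the remark following the lemma that the second solution is $X'=X+(t+1)u+\alpha$ with $(t+1)u+\alpha\in\F_{2^{2s}}$) is that after substituting $X=(tu+\beta)v+\tau$ and separating the $v$-component from the $\F_{2^{2s}}$-component, the solvability condition collapses to an Artin--Schreier equation over $\F_{2^s}$ involving only $t$ and $\beta$, with $\tau$ cancelling identically. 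That cancellation is the content of the lemma, not a triviality, and your sketch asserts the resulting equation $\alpha^2+\alpha=(t+1)\bigl(\beta+(t+1)(u^2+u)+1\bigr)$ without performing the bookkeeping (in particular the expansion of $Z^{2^s}$ and $Z^{2^s+1}$ through $v^{2^s}=\lambda v+\mu$, which you correctly identify as the messy part). Similarly, ``the converse comes for free'' glosses over checking that the produced $X'$ lies outside $\{X,X+1\}$, which uses $t\neq 1$ and $u\notin\F_{2^s}$. So: right strategy, consistent with the cited source, but the decisive computation and the $\tau$-independence are presently asserted rather than proved.
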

\begin{rem}
The proof of the above lemma also shows that $X'=X+(t+1)u+\alpha$. 
\end{rem}
Next, we determine the EBCT, LBCT and UBCT entries for the Bracken-Leander function.

\begin{cor}
\label{Bracken}
Let $F_3(X)=X^{2^{2s}+2^s+1}$ on $\F_{2^n}$, where $n=4s$. Then for $a,b,c,d \in \F_{2^n}$, we have
 \allowdisplaybreaks
{\small \begin{equation*}
\EB_{F_{3}}(a, b, c,d) = 
  \begin{cases} 
   2^n & \mbox{if } a=b=c=d=0,\\ 
  \DDT_{F_3}(a,b) & \mbox{if } c = d =0 \mbox{ and } a \neq 0,\\
  \DDT_{F_3}(c,d) & \mbox{if}~ ac \neq 0, a=c \mbox{ and } b=d; \mbox{ or }a=0, b=0 \mbox{ and } c \neq 0; \\
&\mbox{ or }\DDT_{F_{3}}(c,d)=4,  a= c((t+1)u+\alpha) \mbox{ and }\\
&\qquad  b=d(\alpha+(t+1)u^{2^s}) +F_{3}(c) \gamma;\\
&\mbox{ or }\DDT_{F_{3}}(c,d)=4,  a= c((t+1)u+\alpha+1)\mbox{ and } \\
&\qquad  b=d(\alpha+(t+1)u^{2^s}+1) +F_{3}(c) \gamma,\\
  0 & \mbox{otherwise,}
  \end{cases}
\end{equation*} 
}
where $t=\Trsn\left(\frac{d}{F_{3}(c)}\right)$, and $F_{3}((tu+\beta)v+\tau)+F_{3}((tu+\beta)v+\tau+1)=\frac{d}{F_{3}(c)},$
 \allowdisplaybreaks
{\small \begin{equation*}
\LB_{F_3}(a, b, c) = 
  \begin{cases} 
  2^n & \mbox{if } b=c=0,\\
  \DDT_{F_3}(b,c) & \mbox{if } a=0 \mbox{ and } b \neq 0; \mbox{ or } \\
  & ~\DDT_{F_3}(b,c)=2, ab \neq 0 \mbox{ and }  a=b; \mbox{ or } \\
& ~\DDT_{F_3}(b,c)=4, ab \neq 0\mbox{ and} \\
& \qquad a\in \{b, b((t+1)u+\alpha),b((t+1)u+\alpha+1)\},  \\
  0 & \mbox{otherwise,}
  \end{cases}
\end{equation*} 
}
where $t=\Trsn\left(\frac{c}{F_{3}(b)}\right)$, and $F_{3}((tu+\beta)v+\tau)+F_{3}((tu+\beta)v+\tau+1)=\frac{c}{F_{3}(b)},$ and,
 \allowdisplaybreaks
{\small \begin{equation*}\UB_{F_3}(a, b, c) = 
  \begin{cases} 
  |F_{3}^{-1}(c+\Im(F_3))|& \mbox{if}~a= b= 0, \\
  \DDT_{F_3}(a,b) & \mbox{if}~ \DDT_{F_3}(a,b)=2, a \neq 0 \mbox{ and }  c=b; \mbox{ or }\\
  &~\DDT_{F_3}(a,b)=4, a \neq 0, \\
  &~ c\in \{ b,b(\alpha+(t+1)u^{2^s}) +F_{3}(a) \gamma,  \mbox{ and} \\
   &\qquad b(\alpha+(t+1)(1+u)+1) + F_{3}(a) \gamma\},\\
  0 & \mbox{otherwise,}
  \end{cases}
\end{equation*} 
}
where $t=\Trsn\left(\frac{b}{F_{3}(a)}\right)$, $F_{3}((tu+\beta)v+\tau)+F_{3}((tu+\beta)v+\tau+1)=\frac{b}{F_{3}(a)},$ $\Im(F_3)$ is the image of $F_3$ and $F_{3}^{-1}(\cdot)$ denotes the preimage of the argument. In particular, if $F_3$ is a permutation, or equivalently if $s$ is odd, then $|F_{3}^{-1}(c+\Im(F_3))|=2^n$.

In all the computed tables, $\gamma= ((t+1)^{-1}(\alpha^4+\alpha^2)+(t+1)^3(u^4+u^2)+\alpha^2t+\alpha t+t(t+1)(u+u^2))$, $\tau \in \F_{2^{2s}}$, $u \in\F_{2^{2s}}\setminus\{\F_{2^s}\}$, $v\in\F_{2^{n}}\setminus\{\F_{2^{2s}}\}$ satisfying $u + u^{2^s} = 1$, $v + v^{2^{2s}}= 1,$ and $\beta \in\F_{2^{s}}$ satisfying $\beta = (t+1)^{-1}(\alpha^2 + \alpha)+(t+1)(u^2 + u) + 1$, for some $\alpha \in\F_{2^{s}}$.

\end{cor}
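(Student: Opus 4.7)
The plan is to mirror the template of the proof of Corollary~\ref{Kasami}. Since $F_3$ is differentially $4$-uniform, Corollary~\ref{4diff} reduces the problem to three regimes: the fully degenerate one (some of $a,b,c,d$ zero, or $a=c$ and $b=d$), where the entries are read off directly from Lemma~\ref{L0001}; the case $\DDT_{F_3}(c,d)=2$, which contributes nothing new; and the genuinely interesting case $\DDT_{F_3}(c,d)=4$, where one must exhibit the four distinct solutions $\{z_1,z_1+c,z_2,z_2+c\}$ of $F_3(X)+F_3(X+c)=d$, and then compute $a=z_1+z_2$ (or $z_1+z_2+c$) together with $b=F_3(z_1)+F_3(z_2)$ (or $F_3(z_1)+F_3(z_2+c)$). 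Once this is done, the LBCT and UBCT parts follow painlessly from Corollary~\ref{E2LU}, which characterizes the nonzero LBCT and UBCT entries via the sets $A$ and $B$ of indices at which the EBCT equals $4$.

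The first step is to reduce to $c=1$ using the monomial structure: substituting $X=cY$ yields $F_3(Y)+F_3(Y+1)=d/F_3(c)=:\delta$, so that the solutions of the original DDT equation are $\{cY,cY',cY+c,cY'+c\}$ where $Y,Y'$ range over the solutions of $F_3(Y)+F_3(Y+1)=\delta$. Lemmas~\ref{bl1} and~\ref{bl2} describe these four solutions explicitly: one has $Y=(tu+\beta)v+\tau$ with the stated conditions, and the remark following Lemma~\ref{bl2} gives the second solution as $Y'=Y+(t+1)u+\alpha$. Here $t=\Trsn(\delta+1)$, and because $n=4s$ forces $\Trsn(1)=n/s=4\equiv 0\pmod 2$, this equals $\Trsn(\delta)=\Trsn(d/F_3(c))$, matching the statement. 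Reinstating $c$ gives
\begin{equation*}
a=c(Y+Y')=c\bigl((t+1)u+\alpha\bigr),\qquad a=c(Y+Y'+1)=c\bigl((t+1)u+\alpha+1\bigr),
\end{equation*}
which is exactly the $a$-condition for the EBCT in the corollary.

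The second step, which I expect to be the main obstacle, is computing $b=F_3(c)\bigl(F_3(Y)+F_3(Y+\mu)\bigr)$ with $\mu=(t+1)u+\alpha$. Expanding $(Y+\mu)^{2^{2s}+2^s+1}=(Y^{2^{2s}}+\mu^{2^{2s}})(Y^{2^s}+\mu^{2^s})(Y+\mu)$ produces seven mixed monomials in $Y$; these must be collapsed using the two relations from Lemma~\ref{bl1}(2) on $Z=Y+Y^{2^s}$, together with $u+u^{2^s}=1$, $v+v^{2^{2s}}=1$, and the fact that $\alpha,\beta,t\in\F_{2^s}$ are fixed by the Frobenius $x\mapsto x^{2^s}$. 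The identity $F_3(Y)+F_3(Y+1)=\delta$ (and its Frobenius-conjugates) is what reduces all the $Y$-dependent terms to constants. After careful bookkeeping of these substitutions one obtains the clean identity $F_3(Y)+F_3(Y')=\delta\bigl(\alpha+(t+1)u^{2^s}\bigr)+\gamma$ with the stated $\gamma$; multiplying by $F_3(c)$ and using $\delta F_3(c)=d$ yields the value of $b$ in the first sub-case. The second sub-case (where $Y'$ is replaced by $Y'+1$) is handled by the same expansion with $\mu$ replaced by $\mu+1$, and the extra $+1$ contributes the $+1$ inside the $\alpha+(t+1)u^{2^s}+1$ factor for the UBCT formula.

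The third step handles the remaining cases. When $a=b=0$, the UBCT entry is $|F_3^{-1}(c+\im(F_3))|$ by Case~2 of Theorem~\ref{deltadiff}; when $s$ is odd, $\gcd(2^{2s}+2^s+1,\,2^n-1)=1$ so $F_3$ is a permutation and this number is $2^n$. Finally, for the LBCT and UBCT parts, Corollary~\ref{E2LU} asserts that in the differentially $4$-uniform setting the $4$-entries of the EBCT determine the $4$-entries of the LBCT (via the set $A$) and of the UBCT (via the set $B$). Concretely, for every triple $(a,c,d)$ appearing as the first three coordinates of a EBCT-$4$ entry we read off $\LB_{F_3}(a,c,d)=4$, and for every triple $(c,d,b)$ appearing as the last three coordinates we read off $\UB_{F_3}(c,d,b)=4$; relabelling $(a,c,d)\mapsto(a,b,c)$ in the first case and $(c,d,b)\mapsto(a,b,c)$ in the second yields the stated formulas, with the $a=b$ clause for LBCT and the $c=b$ clause for UBCT arising from the $a=c,\,b=d$ sub-case of the EBCT. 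The degenerate entries ($b=c=0$, $a=0,b\neq 0$, etc.) are covered directly by Lemma~\ref{L0001} and Corollary~\ref{4diff}.
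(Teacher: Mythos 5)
Your proposal follows essentially the same route as the paper's proof: normalize to $c=1$ via the monomial structure, parametrize the four solutions of $F_3(Y)+F_3(Y+1)=d/F_3(c)$ with Lemmas~\ref{bl1}--\ref{bl2} (and the remark giving $Y'=Y+(t+1)u+\alpha$), read off $a$ from the pairwise sums and $b$ from $F_3(Y)+F_3(Y')$, and then transfer to the LBCT/UBCT through the sets $A$ and $B$ of Corollary~\ref{E2LU}. The one step you leave unexecuted --- the expansion of $F_3(Y)+F_3(Y+\mu)$ down to $\delta(\alpha+(t+1)u^{2^s})+\gamma$ --- is carried out explicitly in the paper by exactly the reduction strategy you describe, so the plan is sound and matches the published argument.
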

\begin{proof}
 By Corollary ~\ref{4diff}, $\EB_{F_{3}}(a, b, c,d) $ is equal to
{\small \begin{align*} 
 \begin{cases} 
\DDT_{F_3}(a,b) & \mbox{if } c = d =0,\\
  \DDT_{F_3}(c,d) & \mbox{if}~ ac \neq 0, a=c \mbox{ and } b=d; \mbox{ or }a=0, b=0, c \neq 0; \mbox{ or }\\
&~ \DDT_{F_3}(c,d)=4, a=z_1+z_2 \mbox{ and }  b=F_3(z_1)+F_3(z_2);\mbox{ or }\\
&~ \DDT_{F_3}(c,d)=4,a=z_1+z_2+c \mbox{ and }b=F_3(z_2)+F_3(z_2+c),\\
  0 & \mbox{otherwise},
  \end{cases}
\end{align*} 
}where $z_1,z_1+c,z_2,z_2+c$ are the four solutions of the equation $F_{3}(X+c)+F_{3}(X)=d$ if $\DDT_{F_{3}}(c,d)=4$. This is equivalent to $F_{3}\left(\frac{X}{c}+1\right)+F_{3}\left(\frac{X}{c}\right)=\frac{d}{c^{2^{2s}+2^s+1}}$ having solutions $\left\{\frac{z_1}{c},\frac{z_1}{c}+1,\frac{z_2}{c},\frac{z_2}{c}+1\right\}$. We can simplify $F_{3}\left(\frac{X}{c}+1)+F_{3}(\frac{X}{c}\right)=\frac{d}{F_{3}(c)}$ as
{\small
$$\left(\frac{X}{c}\right)^{2^{2s}+2^s}+\left(\frac{X}{c}\right)^{2^{2s}+1}+\left(\frac{X}{c}\right)^{2^s+1}+\left(\frac{X}{c}\right)^{2^{2s}}+\left(\frac{X}{c}\right)^{2^s}+\frac{X}{c}+1=\frac{d}{F_{3}(c)}.$$
}
Let $\Trsn\left(\frac{d}{F_{3}(c)}+1\right)=\Trsn\left(\frac{d}{F_{3}(c)}\right)=t$, where $\Trsn$ denotes the relative trace map from $\F_{2^n}$ to~$\F_{2^s}$. 

\begin{sloppypar}
First assume that $t=1$, then from the proof of \cite[Theorem 1]{BL}, we know that $F_{3}(\frac{X}{c}+1)+F_{3}(\frac{X}{c})=\frac{d}{F_{3}(c)}$ can have at most two solutions, a contradiction. Next, for $t\neq 1$, from Lemma~\ref{bl2}, for $u \in\F_{2^{2s}}\setminus\{\F_{2^s}\}$ and $v\in\F_{2^{n}}\setminus\{\F_{2^{2s}}\}$ such that $u + u^{2^s} = 1$, $v + v^{2^{2s}}= 1$, we know that $\frac{z_1}{c} \in W$ if and only if $\frac{z_1}{c}=(tu+\beta)v+\tau$ and $\frac{z_2}{c} \in \left\{\frac{z_1}{c}+ (t+1)u+\alpha, \frac{z_1}{c}+(t+1)u+\alpha+1\right\}$, where $\tau \in\F_{2^{2s}}$, and $\beta \in\F_{2^{s}}$ satisfies the relation $\beta = (t+1)^{-1}(\alpha^2 + \alpha)+(t+1)(u^2 + u) + 1$, for some $\alpha \in\F_{2^{s}}$. Thus, $F_{3}(\frac{X}{c}+1)+F_{3}(\frac{X}{c})=\frac{d}{F_{3}(c)}$ can be written as $F_{3}((tu+\beta )v+\tau)+F_{3}((tu+\beta) v+\tau+1)=\frac{d}{F_{3}(c)}$.
\end{sloppypar}

Again, we divide our analysis into the following two cases:

\noindent\textbf{Case 1.} Let $\frac{z_2}{c} = \frac{z_1}{c}+ (t+1)u+\alpha$, then we have $a=c((t+1)u+\alpha)$. Now, $b=F_{3}(z_1)+F_{3}(z_2)=z_{1}^{2^{2s}+2^s+1}+z_{2}^{2^{2s}+2^s+1}=c^{2^{2s}+2^s+1}(F_{3}(\frac{z_1}{c})+F_{3}( \frac{z_1}{c}+ (t+1)u+\alpha))$. For $y_1=\frac{z_1}{c}$ and $D=\frac{d}{F_{3}(c)}$, we have
\allowdisplaybreaks
\begin{align*}
 \frac{b}{F_{3}(c)} &= F_{3}\left(y_1\right)+F_{3}\left( y_1+ (t+1)u+\alpha \right) \\
  & = (\alpha+(t+1)u)(y_{1}^{2^{2s}+2^s}+y_{1}^{2^{2s}+1}+y_{1}^{2^s+1}) + (\alpha+(t+1)u)^2(y_{1}^{2^{2s}}+y_{1}^{2^k}+y_{1})\\
  &\qquad  + (t+1)(y_{1}^{2^{2s}+1}+(\alpha+(t+1)u)(y_{1}+y_{1}^{2^s}))+(\alpha+(t+1)u)^{2^{2s}+2^s+1}\\
  & = ((\alpha+(t+1)u) + (\alpha+(t+1)u)^2)(y_{1}^{2^{2s}}+y_{1}^{2^s}+y_{1})+(\alpha+(t+1)u)(1+D)\\
  &\qquad  + (t+1)(y_{1}^{2^{2s}+1}+(\alpha+(t+1)u)(y_{1}+y_{1}^{2^s}))+(\alpha+(t+1)u)^{2^{2s}+2^s+1}\\
  & = (t+1)(tu+\beta+1)(y_{1}^{2^{2s}}+y_{1}^{2^s}+y_{1})+(\alpha+(t+1)u)(1+D)\\
  &\qquad  + (t+1)(y_{1}^{2^{2s}+1}+(\alpha+(t+1)u)(y_{1}+y_{1}^{2^s}))+(\alpha+(t+1)u)^{2^{2s}+2^s+1}\\
  & = (t+1)(tu+\beta+1)(tu+\beta+(tu^{2^s}+\beta)v^{2s}+\tau^{2^s})+(\alpha+(t+1)u)(1+D)\\
  & \qquad +(t+1)((tu+\beta)^2v^{2^{2s}+1}+\tau(tu+\beta)+\tau^2\\
  &\qquad +(\alpha+(t+1)u)(tu+\beta))+(\alpha+(t+1)u)^{2^{2s}+2^s+1}\\
  & = (t+1)(D+tu+\beta+1)+(t+1)(tu+\beta+1)(tu+\beta)+ (\alpha+(t+1)u)(1+D)\\
  &\qquad  +(t+1)(\alpha+(t+1)u)(tu+\beta)+(\alpha+(t+1)u)^{2^{2s}+2^s+1} \\
  & = (\alpha+(t+1)u^{2^s})D+(t+1)^{-1}(\alpha^4+\alpha^2)+(t+1)^3(u^4+u^2)\\
  &\qquad +\alpha^2t+\alpha t+t(t+1)(u+u^2).
\end{align*}
The above expression will give us $b=d(\alpha+(t+1)u^{2^s})+F_{3}(c)((t+1)^{-1}(\alpha^4+\alpha^2)+(t+1)^3(u^4+u^2)+\alpha^2t+\alpha t+t(t+1)(u+u^2))=d(\alpha+(t+1)u^{2^s})+F_{3}(c)\gamma$ (say), where $\gamma= ((t+1)^{-1}(\alpha^4+\alpha^2)+(t+1)^3(u^4+u^2)+\alpha^2t+\alpha t+t(t+1)(u+u^2))$.

\noindent\textbf{Case 2.} Let $\frac{z_2}{c} = \frac{z_1}{c}+ (t+1)u+\alpha+1$, then by the same argument as in Case 1, we have $a=c((t+1)u+\alpha+1)$ and  $b=d(\alpha+(t+1)u^{2^s}+1)+F_{3}(c)((t+1)^{-1}(\alpha^4+\alpha^2)+(t+1)^3(u^4+u^2)+\alpha^2t+\alpha t+t(t+1)(u+u^2)).$

Combining both of these cases, we get the EBCT entries for the Bracken-Leander function.

Next, we compute the UBCT and LBCT entries for the Bracken-Leander function. Using Corollary~\ref{4diff} the LBCT entries are straightforward to determine when $abc=0$, except the case where $a \neq b, ab \neq 0$ and $c=0$. Furthermore, if $ab \neq 0, a \neq b$ and $c=0$, the conditions derived for LBCT entries are identical to those obtained for the  case where $abc \neq 0$ and $a \neq b$. Similarly, in the case of UBCT entries, we only need to focus on the case when $a \neq 0$ and $bc=0$ as the other cases are trivial if $abc=0$. The conditions in this case are identical to those for $abc \neq 0$. Therefore, it is sufficient to determine the UBCT and LBCT entries when $a,b,c,d \in \F_{2^n}^*$.

We then use (also its notations) of Corollary~\ref{E2LU} to compute the LBCT and UBCT entries for the Bracken-Leander function when $abcd \neq 0$. As we have computed $\EB_{F_3}(a,b,c,d)=2$, when $a=c, b=d$ and $\DDT_{F_3}(c,d) =2$, then $\LB_{F_{3}}(a,c,d) = 2$, for $a=c$ when $\DDT_{F_3}(c,d) =2$ and, $\UB_{F_{3}}(c,d,b) = 2$, for $b=d$ when $\DDT_{F_3}(c,d) =2$.

Also, $A=\{  (c,c,d), (c((t+1)u+\alpha), c, d), (c((t+1)u+\alpha+1), c, d) \bigm|  \DDT_{F_3}(c,d)=4\}$, as, in the case $\DDT_{F_3}(c,d)=4$, $(c, c, d) \in A$, for $b= d$, $(c((t+1)u+\alpha), c, d) \in A$, for $b=d(\alpha+(t+1)u^{2^s})+F_{3}(c)\gamma$ and $(c((t+1)u+\alpha+1), c, d) \in A$, for $b=d(\alpha+(t+1)u^{2^s}+1)+F_{3}(c)\gamma$. Thus, from Corollary~\ref{E2LU}, we have $\LB_{F_3}(a,c,d)=4$, for $a \in \left\{c,c((t+1)u+\alpha), c((t+1)u+\alpha+1)\right\}$. This completes the LBCT case.

Similarly, one can write $B= \{(c,d,d), (c,d,d(\alpha+(t+1)u^{2^s})+F_{3}(c)\gamma), (c,d,d(\alpha+(t+1)u^{2^s}+1)+F_{3}(c)\gamma\bigm|  \DDT_{F_3}(c,d)=4\}$, as, in the case of $\DDT_{F_3}(c,d)=4$, $(c, d, d) \in B$, for $a= c$, $(c, d, d(\alpha+(t+1)u^{2^s})+F_{3}(c)\gamma)\in B$, for $a=c((t+1)u+\alpha)$ and $(c, d, d(\alpha+(t+1)u^{2^k}+1)+F_{3}(c)\gamma) \in B$, for $a=c((t+1)u+\alpha+1)$.

Hence, we get the LBCT and UBCT of the Bracken-Leander function by just reordering the indexes.
\end{proof}

\begin{exmp}
 To illustrate Corollary~\textup{\ref{Bracken}}, Table~\textup{\ref{Table4}} gives explicit values of the LBCT and UBCT entries of Bracken-Leander function over $\F_{2^{4s}}^*=\langle g\rangle$ ($g$ is a primitive element of  $\F_{2^{4s}}$).
\end{exmp}
\begin{table}[hbt]
\caption{EBCT, LBCT and UBCT entries for Bracken-Leander function $X^{2^{2s}+2^s+1}$ over $\F_{2^{4s}}^{*}=\langle g\rangle$}
\label{Table4}
\begin{center}
\scalebox{0.8}{
\begin{tabular}{|c|c|c|c|c|c|c|c|c|c|c|c|} 
 \hline
  $s$ & $a$ & $b$ & $c$ & $d$ & $\alpha$ & $u$ & $v$ & $\tau$ & $\EB_{F_3}(a,b,c,d)$ & $\LB_{F_3}(a,b,c)$ & $\UB_{F_3}(a,b,c)$\\
  \hline
  $2$ & $g$ & $g$ & $g$ & $d$ & $-$ & $-$ & $-$ & $-$ & $2$ & $2$ & $2$\\
 \hline
  $2$ & $g^{71}$ & $g^3$ & $g^{32}$ & $d$ & $g^{255}$ & $g^{17}$ & $g^{123}$ & $g^{17}$ & $0$ & $4$ & $0$\\
 \hline
  $2$ & $g^{54}$ & $g^{37}$ & $g^{26}$ & $d$ & $g^{85}$ & $g^{34}$ & $g^{224}$ & $g^{17}$ & $0$ & $4$ & $0$\\
 \hline
 $2$ & $g^{70}$ & $g^{3}$ & $g^{103}$& $d$ & $g^{85}$ & $g^{34}$ & $g^{224}$ & $g^{17}$ & $0$ & $0$ & $4$\\
 \hline
  $2$ & $ g^{36}$ & $g^{103}$ & $g^{70}$& $g^3$ & $g^{85}$ & $g^{34}$ & $g^{224}$ & $g^{17}$ & $4$ & $0$ & $0$\\
 \hline
\end{tabular}
}
\end{center}
\end{table}
Via our Corollary \ref{4diff},  we can also give a very short proof for the EBCT, LBCT and UBCT entries of the inverse function for $n$ even, which is the main result in \cite{EM, MLLZ} (for $n$ odd, see the previous section). Since $F_4$ is a permutation, the case for $abcd=0$ follows directly from Lemma~\ref{L0001}. Therefore, we focus our discussion in the following corollary on the cases where $a,b,c,d \in \F_{2^n}^{*}$.

\begin{cor}
\label{Inverse}
 Let $F_4(X)=X^{2^n-2}$ on $\F_{2^n}$, where $n$ is even. For $a,b,c,d \in \F_{2^n}^{*}$, we have
  \allowdisplaybreaks
 {\small  \begin{align*}
 \EB_{F_4}(a, b, c,d) &= 
  \begin{cases} 
4&\mbox{if }~a=c=\frac{1}{b}=\frac{1}{d}; \mbox{ or }\\
  &~ b=\frac{1}{a}, d=\frac{1}{c}, \mbox{ and } (ad)^2+ad+1=0,\\ 
2& \mbox{if}~a=c,b=d, d\neq\frac{1}{c}\mbox{ and } \Tr\left(\frac{1}{cd}\right)=0,\\
  0 & \mbox{otherwise,}
  \end{cases}\\
\LB_{F_4}(a, b, c) &= 
  \begin{cases} 
4&\mbox{if }~a=b=\frac{1}{c}; \mbox{ or }\\
  &~ b=\frac{1}{c},(ac)^2+ac+1=0,\\
2& \mbox{if}~a=b\neq\frac{1}{c}, \Tr\left(\frac{1}{bc}\right)=0,\\
  0 & \mbox{otherwise,}
  \end{cases}\\
\UB_{F_4}(a, b, c) &= 
  \begin{cases} 
4&\mbox{if }~b=c=\frac{1}{a};\mbox{ or }\\
  &~ b=\frac{1}{a},(ac)^2+ac+1=0,\\
2& \mbox{if}~c=b\neq\frac{1}{a}, \Tr\left(\frac{1}{ab}\right)=0,\\
  0 & \mbox{otherwise.}~
  \end{cases}
\end{align*}  
}
\end{cor}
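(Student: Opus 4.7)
\smallskip

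\noindent\textbf{Proof plan for Corollary~\ref{Inverse}.}
The strategy is a direct application of Corollary~\ref{4diff} together with the classical description of the DDT of the inverse function. First I would recall that for $a,b\in\F_{2^n}^*$, the equation $(X+a)^{-1}+X^{-1}=b$ is equivalent, after clearing denominators and excluding $X\in\{0,a\}$, to $X^2+aX+a/b=0$. Solvability of this quadratic (via $\Tr\bigl(1/(ab)\bigr)=0$) together with the separate case $b=1/a$ (where $X=0$ and $X=a$ become solutions, and the residual quadratic $(X/a)^2+(X/a)+1=0$ yields $X=a\omega,\,a\omega^2$ exactly when $n$ is even) gives
\[
\DDT_{F_4}(a,b)=
\begin{cases}
4 & \text{if } b=1/a,\\
2 & \text{if } b\neq 1/a\text{ and }\Tr(1/(ab))=0,\\
0 & \text{otherwise,}
\end{cases}
\]
for $a,b\in\F_{2^n}^*$. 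In particular, when $\DDT_{F_4}(a,b)=4$ the solution set is $S(a,b)=\{0,a,a\omega,a\omega^2\}$, where $\omega\in\F_{2^n}$ is a primitive cube root of unity.

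Next, I would plug this into the three formulas of Corollary~\ref{4diff}. For the UBCT, the condition ``$a\neq 0,\,c=b,\,\DDT_{F_4}(a,b)=2$'' immediately yields $\UB_{F_4}(a,b,c)=2$ precisely when $b=c\neq 1/a$ and $\Tr(1/(ab))=0$. For the $\DDT_{F_4}(a,b)=4$ branch (so $b=1/a$) I would pick $x_1=0,\,x_2=a\omega$, noting that then $x_1+a=a,\,x_2+a=a\omega^2$, and compute
\[
F_4(x_1)+F_4(x_2)=\tfrac{1}{a\omega}=\tfrac{\omega^2}{a},\qquad F_4(x_1)+F_4(x_2)+b=\tfrac{\omega^2+1}{a}=\tfrac{\omega}{a}.
\]
Hence $W=\{b,\omega^2/a,\omega/a\}=\{1/a,\omega/a,\omega^2/a\}$, and the condition $c\in W$ is equivalent to $(ac)^3=1$, which splits as either $c=1/a$ (giving the case $b=c=1/a$) or $ac\in\{\omega,\omega^2\}$, i.e.\ $(ac)^2+ac+1=0$. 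This matches the stated UBCT.

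The LBCT analysis is symmetric: from Corollary~\ref{4diff} the $\DDT=2$ branch forces $a=b\neq 1/c$ with $\Tr(1/(bc))=0$, while the $\DDT_{F_4}(b,c)=4$ branch forces $c=1/b$ with solutions $\{0,b,b\omega,b\omega^2\}$, so $V=\{b,b\omega,b\omega^2\}$. The membership $a\in V$ is equivalent to $(a/b)^3=1$, which together with $b=1/c$ rearranges exactly into ``$a=b=1/c$'' or ``$b=1/c,\,(ac)^2+ac+1=0$''. For the EBCT, the condition $a=c,\,b=d$ in Corollary~\ref{4diff} immediately gives the two cases (either $\DDT_{F_4}(c,d)=4$, forcing $a=c=1/b=1/d$, or $\DDT_{F_4}(c,d)=2$, forcing $a=c,\,b=d\neq 1/c$ and $\Tr(1/(cd))=0$). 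For the remaining $\DDT_{F_4}(c,d)=4$ sub-case with $a\neq c$, taking $z_1=0,\,z_2=c\omega$ I get $a=z_1+z_2=c\omega$ with $b=F_4(z_1)+F_4(z_2)=\omega^2/c$, so $b=\omega^2 d$ and consequently $ab=cd\omega^3=1$; the shifted alternative $a=z_1+z_2+c=c\omega^2$ produces $b=\omega d$, again $ab=1$. In either sub-case $ad=\omega$ or $\omega^2$, hence $(ad)^2+ad+1=0$ and $b=1/a,\,d=1/c$, reproducing the second branch in the EBCT formula.

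The main (and only slightly delicate) step is the algebraic identification $W=V=\{1/a,\omega/a,\omega^2/a\}$ and the repackaging of $(ac)^3=1$ as ``$ac=1$ or $(ac)^2+ac+1=0$''; everything else is a bookkeeping of the cases of Corollary~\ref{4diff}. The hypothesis $abcd\neq 0$ is preserved throughout, and the complementary cases where some coordinate vanishes are handled by Lemma~\ref{L0001} (since $F_4$ is a permutation for $n$ even), completing the proof.
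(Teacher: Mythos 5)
Your proposal is correct. It rests on exactly the same two ingredients as the paper's proof: the DDT spectrum of the inverse function for $n$ even (with solution set $\{0,a,a\omega,a\omega^2\}$ when $b=1/a$) and the general differentially $4$-uniform formulas, and your EBCT computation with $z_1=0$, $z_2=\omega c$ reproduces the paper's verbatim. The only difference is in how the LBCT and UBCT are obtained: the paper first settles the EBCT and then transfers to the other two tables via the dictionary of Corollary~\ref{E2LU} (building the sets $A$ and $B$ and ``reordering the indexes''), whereas you apply Corollary~\ref{4diff} independently to each table, computing the sets $V=\{b,b\omega,b\omega^2\}$ and $W=\{1/a,\omega/a,\omega^2/a\}$ explicitly and repackaging membership as $(ac)^3=1$. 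Your route is marginally more self-contained and symmetric across the three tables; the paper's route emphasizes that the LBCT and UBCT values are forced by the EBCT. Both correctly defer the cases $abcd=0$ to Lemma~\ref{L0001}.
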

\begin{proof}
For all three results, we have to study the spectrum of the inverse function, which can be found, for instance, implied in the proofs in~\cite{EM, MLLZ}, but we recall it below for easy reference.

The equation $(X+A)^{^{2^n-2}}+X{^{2^n-2}}=B$, for $A,B\neq0$, and $n$ even, gives:
\begin{itemize}
\item If  $X=0$ or $X=A$, then  $B=\frac{1}{A}$, otherwise there are no solutions.
\item If $X\neq0,A$, then the equation is equivalent to $X^2+AX+\frac{A}{B}=0$, which has two solutions if and only if $\Tr\left(\frac{1}{AB}\right)=0$, and zero solutions, otherwise.
\end{itemize}

By Corollary \ref{4diff}, $\EB_{F_4}(a, b, c,d) $ is equal to
{\small 
\begin{equation*} 
  \begin{cases} 
  \DDT_{F_4}(c,d) & \mbox{ if }  a=c,b=d ,\mbox{ or}\\
&\quad \DDT_{F_4}(c,d)=4, a=z_1+z_2,  b=F_4(z_1)+F_4(z_2),\mbox{ or}\\
&\quad \DDT_{F_4}(c,d)=4, a=z_1+z_2+c, b=F_4(z_1)+F_4(z_2+c)\\
  0 & \mbox{otherwise,}
  \end{cases}
\end{equation*} 
}
where $z_1,z_1+c,z_2,z_2+c$ are the four solutions of the equation $F_4(X+c)+F_4(X)=d$, if $\DDT_{F_4}(c,d)=4$. 

If $a=c,b=d$, $d\neq\frac{1}{c}$ and $\Tr\left(\frac{1}{cd}\right)=0$, then $\EB_{F_4}(a, b, c,d) =\DDT_{F_4}(c,d)=2$. 
If $d=\frac{1}{c}$, and without loss of generality, we let  $z_1=0,z_2=\omega c$. If $a=c,b=d$, then   $\EB_{F_4}(a, b, c,d) =\DDT_{F_4}(c,d)=4$. If $a=z_1+z_2=\omega c$ and $b=F_4(z_1)+F_4(z_2)=\frac{1}{\omega c}$, then $\EB_{F_4}(a, b, c,d) =\DDT_{F_4}(c,d)=4$. If $a=z_1+z_2+c=\omega^2 c$ and $ b=F_4(z_1)+F_4(z_2+c)=\frac{1}{\omega^2 c}$, then   $\EB_{F_4}(a, b, c,d) =\DDT_{F_4}(c,d)=4$. Note that these last two cases can be summarized as $b=\frac{1}{a}$, $d=\frac{1}{c}$ and $(ad)^2+ad+1=0$. In all other cases, $\EB_{F_4}(a, b, c,d) =0$.

We now compute the LBCT and UBCT of the inverse function using Corollary~\ref{E2LU}. As we have seen $\EB_{F_4}(a, b, c,d)=2$, when $a=c,b=d$, $d\neq\frac{1}{c}$ and $\Tr\left(\frac{1}{cd}\right)=0$, we infer that $\LB_{F_4}(a,c,d)=2$ if $a=c \neq \frac{1}{d}$ and $\Tr\left(\frac{1}{cd}\right)=0$. Similarly, $\UB_{F_4}(c,d,b)=2$ if $b=d \neq \frac{1}{d}$ and $\Tr\left(\frac{1}{cd}\right)=0$.

Clearly, when $d=\frac{1}{c}$, we can write $A=\{  (c,c,d), (\omega c, c, d), (\omega^2 c, c, d)\bigm|  \DDT_{F_4}(c,d)=4\}$, as, in the case $\DDT_{F_4}(c,d)=4$, $(c, c, d) \in A$, for $b= d$, $(\omega c, c, d) \in A$, for $b=\frac{1}{\omega c}$ and $(\omega^2 c, c, d) \in A$, for $b=\frac{1}{\omega^2 c}$. Thus, from Corollary~\ref{E2LU}, we have $\LB_{F_4}(a,c,d)=4$, for $a \in \{c,\omega c, \omega^2 c\}$ and $d=\frac{1}{c}$. This is equivalent to  $\LB_{F_4}(a,c,d)=4$ when $d=\frac{1}{c}$ and $(ad)^2+ad=1=0$.

In a similar way, one can write $B= \{(c,d,d), (c,d,\frac{1}{\omega c}), (c,d,b=\frac{1}{\omega c})\bigm|  \DDT_{F_4}(c,d)=4\}$, as, in the case $\DDT_{F_4}(c,d)=4$, $(c, d, d) \in B$, for $a= c$, $(c, d, \frac{1}{\omega c})\in B$, for $a=\omega c$ and $(c, d, \frac{1}{\omega c}) \in B$, for $a=\omega^2 c$, and the claim follows.
\end{proof}
Using Remark~\ref{r1} and our results for the inverse function, we can easily compute its FBCT (also found in~\cite{EM22}).

\begin{cor} Let $F_4(X)=X^{2^n-2}$ over $\F_{2^n}$, with $n$ even. Then, for $a,b\in\F_{2^n}^*$,
{\small
\begin{equation*}
\FB_{F_4}(a, b) = 
  \begin{cases} 
2^n, & \mbox{if } a=b,\\
4,& \mbox{if } \left(\frac{a}{b}\right)^2+\frac{a}{b}+1=0,\\ 
  0 & \mbox{otherwise.}
  \end{cases}
\end{equation*} 
}
\end{cor}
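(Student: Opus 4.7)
The strategy is to apply Remark~\ref{r1}, which gives
$$\FB_{F_4}(a,b)=\sum_{c\in\F_{2^n}}\LB_{F_4}(a,b,c),$$
since $F_4$ is a permutation ($n$ even makes $\gcd(2^n-2,2^n-1)=1$). Then one expands the sum using Corollary~\ref{Inverse} and Lemma~\ref{L0001} (the latter handles $c=0$, which contributes $0$ when $b\neq 0$). With $a,b\in\F_{2^n}^*$ fixed, it therefore suffices to count the $c\in\F_{2^n}^*$ for which each of the nonzero LBCT values $4$ and $2$ occurs.

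The plan splits naturally into two cases. If $a\neq b$, then the ``$\LB=2$'' clause of Corollary~\ref{Inverse} (which demands $a=b$) is vacuous, and the ``$\LB=4$'' clause reduces to $b=1/c$ together with $(ac)^2+ac+1=0$. Substituting $c=1/b$ into the latter rewrites this condition as $(a/b)^2+(a/b)+1=0$, and in that case there is exactly one admissible $c$ (namely $c=1/b$) contributing $4$, otherwise none. This recovers the two bottom branches of the claimed formula.

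For the diagonal case $a=b$, first note that $c=1/a$ gives $a=b=1/c$ and hence $\LB_{F_4}(a,a,1/a)=4$. For every other $c\in\F_{2^n}^*\setminus\{1/a\}$, Corollary~\ref{Inverse} yields $\LB_{F_4}(a,a,c)=2$ precisely when $\Tr\!\left(\tfrac{1}{ac}\right)=0$. As $c$ ranges over $\F_{2^n}^*$, the substitution $t=1/(ac)$ is a bijection onto $\F_{2^n}^*$; since $n$ is even we have $\Tr(1)=0$, so $t=1$ (i.e.\ $c=1/a$) satisfies the trace condition and must be removed. The number of $t\in\F_{2^n}^*$ with $\Tr(t)=0$ equals $2^{n-1}-1$, and excluding $t=1$ leaves $2^{n-1}-2$ values of $c$. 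Summing,
$$\FB_{F_4}(a,a)=4+2\bigl(2^{n-1}-2\bigr)=2^n.$$

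There is no serious obstacle here; the whole argument is a short counting computation once the LBCT values from Corollary~\ref{Inverse} are in hand. The only point to be careful about is the trace bookkeeping in the diagonal case: one must not forget that the ``exceptional'' value $c=1/a$ (contributing $4$, not $2$) also satisfies $\Tr(1/(ac))=\Tr(1)=0$ for $n$ even, and hence has to be subtracted from the count of $\Tr$-zero elements, which is exactly what makes the two contributions $4$ and $2(2^{n-1}-2)$ telescope to the clean value $2^n$.
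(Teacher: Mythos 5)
Your proof is correct, and for the off-diagonal case it coincides with the paper's argument: both reduce to Remark~\ref{r1} together with the observation that, for $a\neq b$, Corollary~\ref{Inverse} makes $\LB_{F_4}(a,b,c)$ nonzero (and equal to $4$) only for the single value $c=1/b$ under the condition $(a/b)^2+a/b+1=0$. Where you diverge is the diagonal case $a=b$: the paper dispatches it in one line by noting that $\FB_F(a,a)=2^n$ holds trivially for \emph{any} function $F$, since the defining expression $F(X+a+b)+F(X+b)+F(X+a)+F(X)$ collapses identically to $0$ when $a=b$. You instead sum the LBCT entries explicitly, getting $4$ from $c=1/a$ and $2$ from each of the $2^{n-1}-2$ values $c\neq 1/a$ with $\Tr\bigl(1/(ac)\bigr)=0$ (correctly using $\Tr(1)=0$ for $n$ even to exclude $t=1$ from the $2^{n-1}-1$ nonzero trace-zero elements), which telescopes to $2^n$. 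Both routes are valid; the paper's is shorter and does not depend on Corollary~\ref{Inverse} at all for that case, while yours has the incidental merit of serving as a consistency check on the LBCT values of Corollary~\ref{Inverse}. One trivial remark: your parenthetical that $n$ even is what makes $F_4$ a permutation is unnecessary --- $\gcd(2^n-2,2^n-1)=1$ for every $n$, so the inverse map is always a permutation; the parity of $n$ enters only through $\Tr(1)=0$ and through which corollary ($\ref{Inverse}$ versus the APN case) supplies the LBCT entries.
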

\begin{proof} If $a=b$, then trivially $\FB_{F_4}(a,b)=2^n$ for any function $F_4$. If $a\neq b$, by Corollary \ref{Inverse}, $\LB_{F_4}(a,b,c)$ is nonzero if and only if $ b=\frac{1}{c},(ac)^2+ac+1=0$, which is equivalent to $b=\frac{1}{c},\left(\frac{a}{b}\right)^2+\frac{a}{b}+a=0$.  In that case, $\LB_{F_4}(a,b,c)=4$. The result follows.
\end{proof}

\section{DBCT for the Gold function}\label{oldS7}

In this section, we study the DBCT entries for the Gold function. To this end, we need the LBCT and UBCT entries of the Gold function, which we have already computed in Corollary~\ref{ggold}.
Also, if $s|n$, then for any $\alpha\in\F_{2^n}$, we have $\sum_{i=0}^{t-1}\alpha^{2^{si}}=\Tr_s^n(\alpha)$, the relative trace of $\F_{2^n}$ over $\F_{2^s}$. If $s\nmid n$, then the elements of $\F_{2^n}$ can be embedded in $\F_{2^{sm}}$, since $m=\frac{n}{\gcd(s,n)}$.

\begin{thm}\label{dbct}
 Let $F_1(X)=X^{2^s+1}$ over $\F_{2^n}$ where $\gcd(s,n)=t$ and $m=\frac{n}{t}$ is odd. Then,
\begin{equation*}\DB_{F_1}(a,d)=
 \begin{cases}
2^{2n}& \mbox{~if~}ad=0,\\
  2^{2t}\left( (2^{t}-2)+|  N(a,d)| \right) & \mbox{~if~} ad\neq0,\Tr_s^{sm}\left(\frac{d}{ (a^{2^s+1})^{2^s+1}}\right)= u^4, u \in \F_{2^{t}}^{*} \setminus \{1\}, \\
2^{2t} \cdot |  N(a,d) |  & \mbox{otherwise,}
 \end{cases}
\end{equation*}
where $|N(a,d)|$ denotes the cardinality  of $\left\{b \in \F_{2^n}^{*}\bigm|  \Tr_s^{sm}\left(\frac{b}{a^{2^s+1}}\right)=\Tr_s^{sm}\left(\frac{d}{b^{2^s+1}}\right)=m\right\}$.
\end{thm}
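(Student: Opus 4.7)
The plan is to evaluate $\displaystyle\DB_{F_1}(a,d)=\sum_{b,c\in\F_{2^n}}\UB_{F_1}(a,b,c)\,\LB_{F_1}(b,c,d)$ by plugging in the explicit descriptions of $\UB_{F_1}$ and $\LB_{F_1}$ from Corollary~\ref{ggold}. Since every nonzero entry of either table equals $2^t$, each nonzero summand contributes exactly $2^{2t}$, so the problem reduces to counting the pairs $(b,c)$ for which both factors are nonzero.

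I would dispose of the boundary cases first. When $a=0$, since $m$ odd forces $F_1$ to be a permutation, $\UB_{F_1}(0,b,c)$ vanishes except at $b=0$, where it equals $2^n$; therefore $\DB_{F_1}(0,d)=2^n\sum_c\LB_{F_1}(0,c,d)=2^n\sum_c\DDT_{F_1}(c,d)=2^{2n}$. The case $d=0$ is symmetric via $\LB_{F_1}(b,0,0)=2^n$ together with $\sum_b\UB_{F_1}(a,b,0)=\sum_b\DDT_{F_1}(a,b)=2^n$.

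For $ad\neq0$, Corollary~\ref{ggold} yields four joint constraints: the UBCT forces $b\neq 0$, $\Tr_s^{sm}(b/a^{2^s+1})=1$, and $c\in\{b\}\cup\{(u+u^2)a^{2^s+1}+ub:u\in\F_{2^t}^*\setminus\{1\}\}$, while the LBCT forces $c\neq0$, $\Tr_s^{sm}(d/c^{2^s+1})=1$, and $c\in b\F_{2^t}^*$. The pairs with $c=b$ are precisely those $b\in N(a,d)$, contributing the $|N(a,d)|\cdot 2^{2t}$ piece. For $c\neq b$, substituting the UBCT parameterization into $c=vb$ with $v\in\F_{2^t}^*$ produces $b/a^{2^s+1}=u(1+u)/(v+u)\in\F_{2^t}$; since for $m$ odd the relative trace restricts to the identity on $\F_{2^t}$, the first trace condition forces $b=a^{2^s+1}$ and $v=u^2$, so each $u\in\F_{2^t}^*\setminus\{1\}$ yields the Case~II pair $(a^{2^s+1},u^2 a^{2^s+1})$.

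The delicate step is matching the Case~II pairs against the LBCT trace condition. Using $u\in\F_{2^t}$ and $t\mid s$, one computes $c^{2^s+1}=u^4 a^{(2^s+1)^2}$, so $\Tr_s^{sm}(d/c^{2^s+1})=1$ becomes $u^4=\tau$, where $\tau:=\Tr_s^{sm}(d/(a^{2^s+1})^{2^s+1})$. Assembling the Case~I and Case~II contributions then yields the trichotomy in the theorem, and the main obstacle will be the careful combinatorial bookkeeping in Case~II to produce the $(2^t-2)\cdot 2^{2t}$ term exactly when $\tau\in\{u^4:u\in\F_{2^t}^*\setminus\{1\}\}$, and to check that this Case~II contribution is disjoint from the $|N(a,d)|\cdot 2^{2t}$ Case~I contribution.
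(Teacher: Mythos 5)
Your route is the same as the paper's: expand $\DB_{F_1}(a,d)=\sum_{b,c}\UB_{F_1}(a,b,c)\,\LB_{F_1}(b,c,d)$, feed in Corollary~\ref{ggold}, split off the boundary cases and the diagonal pairs $b=c$ (which give $2^{2t}\,|N(a,d)|$), and for the off-diagonal pairs intersect the UBCT parameterization $c=(u+u^2)a^{2^s+1}+ub$ with the LBCT condition $b\in c\F_{2^t}^*$. Your reduction, using that $\Tr_s^{sm}$ acts as the identity on $\F_{2^t}$ when $m$ is odd, to the unique candidate family $b=a^{2^s+1}$, $c=u^2a^{2^s+1}$ with the residual condition $u^4=\tau$, where $\tau:=\Tr_s^{sm}\bigl(d/(a^{2^s+1})^{2^s+1}\bigr)$, is exactly the computation in the paper (your $v$ is the reciprocal of theirs).

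The step you defer as ``careful combinatorial bookkeeping'' is, however, not something you will be able to push through as stated, and it is worth being precise about why. Your own derivation shows that the off-diagonal candidates are the $2^t-2$ \emph{distinct} pairs $(a^{2^s+1},u^2a^{2^s+1})$, $u\in\F_{2^t}^*\setminus\{1\}$, and that such a pair contributes $2^{2t}$ if and only if $u^4=\tau$, where $\tau$ depends only on $(a,d)$. Since $x\mapsto x^4$ is a bijection of $\F_{2^t}$ (as $\gcd(4,2^t-1)=1$), at most one $u$ can satisfy this, so the off-diagonal contribution coming out of your analysis is a single summand $2^{2t}$ when $\tau\in\F_{2^t}\setminus\{0,1\}$, not $2^{2t}(2^t-2)$. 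The paper's proof carries out the identical reduction and then asserts the factor $2^t-2$ without explaining why all $2^t-2$ values of $u$ would contribute simultaneously, so the missing justification is not something you can lift from there. You must either exhibit additional off-diagonal pairs that both your outline and the printed argument overlook, or conclude that the coefficient should be $1$ rather than $2^t-2$; a direct machine check of $|N(a,d)|$ against the values in Table~\ref{Table5} would settle which. Until that discrepancy is resolved, the final ``assembling'' step of your plan does not yield the stated trichotomy.
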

\begin{proof}
First, assume that $a=0$ or $d=0$, then using definition of DBCT, we can write $\DB_{F_1}(a,d)=2^{2n}$. Here $F_1$ is a permutation as $m$ is given to be odd. Thus, for $a,d \in \F_{2^n}^{*}$, we can write
 $$ \DB_{F_1} (a, d) = \sum_{b=c}\UB_{F_1}(a, b, c)  \LB_{F_1} (b, c, d) + \sum_{b \neq c}\UB_{F_1}(a, b, c)  \LB_{F_1} (b, c, d).$$
 Recall, for $a,d \in \F_{2^n}^{*}$, $t=\gcd(k,n)$ and $m=\frac{n}{t}$,
 \allowdisplaybreaks
\begin{align*}
\LB_{F_1}(b, c, d) &= 
\begin{cases}
2^{t}&\mbox{if } \Tr_s^{sm}\left(\frac{d}{c^{2^s+1}}\right)=\Tr_{1}^{m}(1)  \mbox{ and }  b\in c\F_{2^{t}}^*  \mbox{ for }b,c,d \in \F_{2^n}^{*}, \\
\DDT_{F_1}(c,d) & \mbox{if } b=0,\\
  0 & \mbox{otherwise,}
\end{cases}\\
\UB_{F_1}(a, b, c) &= 
\begin{cases}
2^{t}&\mbox{if }\Tr_s^{sm}\left(\frac{b}{a^{2^s+1}}\right)=\Tr_{1}^{m}(1) \mbox{ and~for~} a,b,c \in \F_{2^n}^{*}, \\
& \qquad c\in \left\{b, (u+u^2)a^{2^s+1}+u b, u\in\F_{2^{t}}^*\setminus \{1\}\right\}, \\
\DDT_{F_1}(a,b) & \mbox{if } c=0,\\
  0 & \mbox{otherwise.}
\end{cases}
\end{align*}
Clearly for $b=0$, we have $\DDT_{F_1}(a,b)=\DDT_{F_1}(a,0)=0$ as $a \neq 0$, thus, giving us $\UB_{F_1}(a,b,c)=0$ and similarly for $c=0$, we get $\DDT_{F_1}(c,d)=\DDT_{F_1}(0,d)=0$ as $d \neq 0$ giving us $\LB_{F_1}(a,b,c)=0$.

 When $b\neq c$, then $\UB_{F_1}(a,b,c)=2^{t}$ only if $c=(u+u^2)a^{2^s+1}+u b$ and $\DDT_{F_1}(a,b)=2^{t}$, where $ u\in\F_{2^{t}}^*\setminus \{1\}$. Also, $\LB_{F_1}(b,c,d)=2^{t}$ only if $b=vc$ for some $v\in\F_{2^l}^*$ and $\DDT_{F_1}(c,d)=2^{t}$. 
 After combining them, we have $c=(u+u^2)a^{2^s+1}+uvc$, and $\DDT_{F_1}(a,b)=\DDT_{F_1}(c,d)=2^{t}$. Notice that $v \neq u^{-1}$, because otherwise $a=0$. Thus, we have $c=\frac{(u+u^2)}{1+uv}a^{2^s+1}$ and therefore, $b=\frac{v(u+u^2)}{1+uv}a^{2^s+1}$. Then $\UB_{F_1}(a,b,c)=\LB_{F_1}(b,c,d)=2^{t}$ if $\DDT_{F_1}(a,b)=\DDT_{F_1}(c,d)=2^{t}$ which is the same as $\sum_{i=0}^{m-1}\frac{b^{2^{si}}}{a^{2^{si}(2^s+1)}}=\sum_{i=0}^{m-1}\frac{d^{2^{si}}}{c^{2^{si}(2^s+1)}}=m$, where $m=\frac{n}{t}$. This is equivalent to $\sum_{i=0}^{m-1} \left(\frac{v(u+u^2)}{1+uv}\right)^{2^{si}} =\sum_{i=0}^{m-1}\frac{d^{2^{si}}}{(\frac{(u+u^2)}{1+uv} a^{2^s+1})^{2^{si}(2^s+1)}}=m$. One can see that the first equality 
 $\sum_{i=0}^{m-1} \left(\frac{v(u+u^2)}{1+uv}\right)^{2^{si}}=m$, is nothing but $m \left(\frac{v(u+u^2)}{1+uv}\right)=m$, as $u,v \in \F_{2^{t}} \subseteq \F_{2^s}$. This is the same as saying $m\dfrac{u^2v+1}{uv+1}=0$, or equivalently we have $\frac{u^2v+1}{uv+1}=0$ or equivalently, $v=\frac{1}{u^2}$, as $m$ is odd. Then we have $\frac{u^2v+1}{uv+1}=0$ or equivalently, $v=\frac{1}{u^2}$. This will give us $c=u^2(a^{2^s+1})$ and $b=a^{2^s+1}$. Thus, for $b=a^{2^s+1}$ and $c=u^2(a^{2^s+1})$, we have $\sum_{i=0}^{m-1} \frac{b^{2^{si}}}{a^{2^{si}(2^s+1)}}=1 (\equiv m$ (mod 2)). Also, $\sum_{i=0}^{m-1}\frac{d^{2^{si}}}{( u^2 a^{2^s+1})^{2^{si}(2^s+1)}}=m$ for some $a,d \in \F_{2^{t}}$. Obviously, such $a$ and $d$ always exist because $\sum_{i=0}^{m-1}\frac{d^{2^{si}}}{( u^2 a^{2^s+1})^{2^{si}(2^s+1)}}=1$ if and only if there exists an $A\in\F_{{2}^{sm}}$ such that $\frac{d^{2^{si}}}{( u^2 a^{2^s+1})^{2^{si}(2^s+1)}}=1+A+A^{2^s}$. Finally, we can conclude that for all $u \in \F_{2^{t}}$,  choosing $c=u^2(a^{2^s+1})$ and $b=a^{2^s+1}$, for $\UB_{F_1}(a,b,c)=\LB_{F_1}(b,c,d)=2^{t}$, we should have $\sum_{i=0}^{m-1}\frac{d^{2^{si}}}{u^4( a^{2^s+1})^{2^{si}(2^s+1)}}=m$. Therefore, 
 $\DB_{F_1}(a,d)= 
  2^{2t} (2^l-2)  \mbox{~if~} \Tr_s^{sm}\left(\frac{d}{  (a^{2^s+1})^{2^s+1}}\right)= u^4$, for some $u \in \F_{2^{t}}^{*} \setminus \{1\}$.

If $b=c$, then
$
\sum_{b = c}\UB_{F_1}(a, b, c)  \LB_{F_1} (b, c, d) = 
\sum_{b} \DDT_{F_1}(a,b) \DDT_{F_1}(b,d),
$
when  $\sum_{i=0}^{m-1}\frac{b^{2^{si}}}{a^{2^{si}(2^s+1)}}=\sum_{i=0}^{m-1}\frac{d^{2^{si}}}{b^{2^{si}(2^s+1)}}=m$. If any of these sums is different from $m$, then the product of the DDT entries  is~0.
 Hence, we need to compute $|N(a,d)|$, that is, 
 \allowdisplaybreaks
 \begin{align*}
N(a,d)&=\left\{b \in \F_{2^n}^{*}\bigg| \sum_{i=0}^{m-1}\frac{b^{2^{si}}}{a^{2^{si}(2^s+1)}}=\sum_{i=0}^{m-1}\frac{d^{2^{si}}}{b^{2^{si}(2^s+1)}}=m\right\}\\
&=\left\{b \in \F_{2^n}^{*}\bigm|  \Tr_s^{sm}\left(\frac{b}{a^{2^s+1}}\right)=\Tr_s^{sm}\left(\frac{d}{b^{2^s+1}}\right)=m\right\},
\end{align*}
which completes the proof.
\end{proof}

\begin{exmp}
 To illustrate Theorem~\textup{\ref{dbct}}, in Table~\textup{\ref{Table5}} we give explicit values of the DBCT entries of the Gold function over $\F_{2^n}^*=\langle g\rangle$, where $g$ is a primitive element of  $\F_{2^n}$.
\end{exmp}
\begin{table}[hbt]
\caption{DBCT entries for the Gold function $X^{2^s+1}$ over $\F_{2^{n}}$}
\label{Table5}
\begin{center}
\begin{tabular}{|c|c|c|c|c|c|}
 \hline
 $n$ & $s$ & $a$ & $d$ &  $u$ & $\DB_{F_1}(a,d)$ \\
 \hline
 $6$ & $2$ & $g^{25}$ & $g^{22}$ &  $g^3 + g^2 + g + 1$ & $160$ \\
 \hline
  $6$ & $2$ & $g^{63}$ & $g^{56}$ &  -- & $64$ \\
 \hline
  $10$ & $2$ & $g^{4}$ & $g^{186}$ &  -- & $1024$ \\
 \hline
 $10$ & $4$ & $g^{2}$ & $g^{868}$ &  $g^5 + g^3 + g + 1$ & $1024$ \\
 \hline
\end{tabular}
\end{center}
\end{table}

\section{Conclusion}\label{S9}

We have determined the entries of the newly proposed Extended Boomerang Connectivity Table (EBCT), Lower Boomerang Connectivity Table (LBCT) and Upper Boomerang Connectivity Table (UBCT) of a $\delta$-uniform function by establishing connections with the entries of the well-known Difference Distribution Table (DDT), which we consider to be a significant observation. We also give the EBCT, LBCT and UBCT entries of three classes of differentially 4-uniform power permutations, namely, Gold, Kasami and Bracken-Leander. Moreover, it turns out that one of our results covers the result of Eddahmani et al.~\cite{EM} and Man et al.~\cite{MLLZ} for the inverse function over $\F_{2^n}$. We also compute the DBCT entries for the Gold function. We further challenge the community to determine the DBCT entries of other classes of interesting functions, such as Kasami and Bracken-Leander functions over $\F_{2^n}$.

\section*{Acknowledgements}
The authors would like to thank the editor, Prof. Daniele Bartoli, for the prompt handling of our paper, and  they extend their appreciation  to the very professional referees, who spotted errors/typos (now fixed), and provided beneficial and constructive comments to improve our paper. In particular, the referees challenged us to describe the CCZ/EA/A-equivalence of the considered concepts, which is now part of Section~\ref{oldS8}, or provide examples, whenever possible.

\end{document}